\DeclareMathOperator*{\Exp}{\mathbb{E}}
\DeclareMathOperator{\Ima}{Im}
\begin{document}

\title{Multi-Party Protocols, Information Complexity and Privacy}
\author{
Iordanis Kerenidis\thanks{CNRS and Universit\'e Paris Diderot, email: {\tt jkeren@irif.fr}.}
\and 
Adi Ros\'en\thanks{CNRS and Universit\'e Paris Diderot, email: {\tt adiro@irif.fr}.}
\and
Florent Urrutia\thanks{ Universit\'e Paris Diderot, email: {\tt urrutia@irif.fr}.}
}

\newcommand{\PIC}{\mathsf{PIC}}

\date{}

\maketitle

\begin{abstract}

We introduce a new information theoretic measure that we call {\em Public Information Complexity} ($\PIC$), as a tool for the study of multi-party computation protocols, and of quantities such as their communication complexity, or the amount of randomness they require in the context of information-theoretic private computations.
We are able to use this measure directly in the natural asynchronous message-passing {\em peer-to-peer} model and show a  number of interesting properties and applications of our new notion:
the Public Information Complexity is a lower bound on the Communication Complexity and an upper bound on the Information Complexity;
the difference between the Public Information Complexity and the Information Complexity provides a lower bound on the amount of randomness used in a protocol; 
any communication protocol can be compressed to its Public Information Cost; an explicit  calculation of  the zero-error Public Information Complexity of the $k$-party, $n$-bit Parity function, where a player outputs the bit-wise parity of the inputs. The latter result also establishes that the amount of randomness needed by a private protocol that computes this function is $\Omega(n)$. 

\end{abstract}

\theoremstyle{plain}
\newtheorem{theorem}{Theorem}[section]
\newtheorem{proposition}[theorem]{Proposition}
\newtheorem{fact}[theorem]{Fact}
\newtheorem{definition}[theorem]{Definition}
\newtheorem{lemma}[theorem]{Lemma}
\newtheorem{corollary}[theorem]{Corollary}

\newcommand{\zo}{\{0,1\}}
\newcommand{\eps}{\epsilon}
\newcommand{\ie}{\emph{i.e.} }
\newcommand{\GF}{\mathbb{F}}
\newcommand{\IC}{\mathsf{IC}}
\newcommand{\EC}{\mathsf{IC}^{\text{ext}}}
\newcommand{\CC}{\mathsf{CC}}
\newcommand{\ACC}{\mathsf{ACC}}
\newcommand{\R}{\mathsf{R}}
\newcommand{\PEC}{\mathsf{PIC}^{\text{ext}}}
\newcommand{\AND}{\mathsf{AND}}
\newcommand{\XOR}{\mathsf{Par}}
\newcommand{\Ber}{\text{\textbf{Ber}}}
\newcommand{\calA}{\mathcal{A}}
\newcommand{\calB}{\mathcal{B}}
\newcommand{\calD}{\mathcal{D}}
\newcommand{\calX}{\mathcal{X}}
\newcommand{\calY}{\mathcal{Y}}
\newcommand{\calO}{\mathcal{O}}
\newcommand{\calL}{\mathcal{L}}
\newcommand{\calM}{\mathcal{M}}

\newcommand{\Ni}[2]{\ensuremath{[\![#1,#2]\!]}}

\newcommand{\adi}[1]{Adi: #1}

\newcommand{\M}[3]{
    \IfEqCase{#3}{
        {r}{M_{#1}^{\text{\makebox[0mm]{\,\,$\overleftarrow{\color{white}l}$}}#2}}
        {s}{M_{#1}^{\text{\makebox[0mm]{\,\,$\overrightarrow{\color{white}l}$}}#2}}
    }[\PackageError{M}{Undefined option to M: #2}{}]
}
\newcommand{\T}[3]{
    \IfEqCase{#3}{
        {r}{M_{#1}^{<\text{\makebox[0mm]{\,\,$\overleftarrow{\color{white}l}$}}#2}}
        {s}{M_{#1}^{<\text{\makebox[0mm]{\,\,$\overrightarrow{\color{white}l}$}}#2}}
    }[\PackageError{T}{Undefined option to T: #2}{}]
}

\section{Introduction}
Communication complexity, originally introduced by Yao~\cite{Yao}, is a prolific field of research in theoretical computer science that yielded many important results in various fields. 
Informally, it attempts to answer the question ``How many bits must distributed  players transmit to solve a given distributed problem ?''  The study of the two-party case has produced a large number of interesting and important results, both upper and lower bounds, with many applications in other areas in theoretical computer science such as circuit complexity, data structures, streaming algorithms and distributed computation (see, e.g.,~\cite{KN, MNSW, GG, SHKKNPPW, FHW}). 

A  powerful tool recently introduced for the study of two-party communication protocols is the measure of {\em Information Complexity} (or \textit{cost}). 
This measure, originally defined in~\cite{Bar-YehudaCKO93} and~\cite{CSWY}, extends the notions of information theory, originally introduced by Shannon~\cite{Sha}, to  interactive settings. It quantifies, roughly speaking, the amount of information about their respective inputs that Alice and Bob must leak to each other in order to compute a given function $f$ of their inputs.
Information complexity (IC) has been used in a long series of papers  to prove lower bounds on communication complexity and other properties of (two-party) communication protocols (e.g., \cite{BYJKS,BBCR,BR,Bra}).
An interesting property of information complexity is that it satisfies a direct sum property. 
The {\em direct sum} question, one of the most interesting questions in complexity theory, asks whether solving $n$ independent copies of the same problem must cost (in a given measure) $n$ times the cost of solving a single instance. In the case of communication complexity, this question has been studied in, e.g.,  \cite{FKNN, CSWY, Shaltiel, JRS, HJMR, BBCR, Kla, Jain} and in many cases it remains open whether a direct sum property holds.

Another important question in communication complexity is the relation between the information complexity of a function and its communication complexity. One would like to know if it is possible to compute a function by sending a number of bits which is not (too much) more than the information the protocol actually has to reveal. Put differently, is it always possible to {\em compress} the communication cost of a protocol to its information cost? For the two-party case it is known that  perfect compression is not possible for single shot protocols~\cite{GKR3,GKR2} (unless they are restricted to a constant number of rounds~\cite{JRS}). Still, several interesting compression results are known. The equality between information cost and \emph{amortized} communication cost is shown in \cite{BR,Bra}, and other compression techniques  are given in \cite{BBCR,BMY,BBKLS,Pan}. It remains open if one can compress interactive communication up to some small loss (for example logarithmic in the size of the input). The specific case of compression under product distributions was studied in \cite{Kol,She}, leading to a compression to $\calO(I~\text{polylog($I$)})$.

When trying to study the {\em multi-party} (i.e., where at least $3$ players are involved) communication settings using similar information-theoretic methods, such as IC, one encounters a serious problem. The celebrated results on information-theoretic private computation~\cite{BGW,CCD} state that if the number of players is at least $3$, then {\em any  function} can be computed by a randomized protocol such that {\em no information} about the inputs is revealed to the other players (other than what is implied by the value of the function and their own input). Thus, in the multi-party case, the $\IC$ of any function $f$ is $0$ (or only the entropy of $f$, depending on the definition of $\IC$), and cannot serve to study multi-party protocols.

For this reason, information theory has rarely been used in the multi-party setting, where most results related to communication complexity have been obtained via combinatorial techniques. Among the interesting works on multi-party settings are \cite{PVZ,WZ} which introduce the techniques of \textit{symmetrization} and \textit{composition}, and \cite{CRR,CR} which study the influence of the topology of the network. One notable exception is the work of Braverman et al.~\cite{BEOPV} which studies the \textit{set-disjointness} problem using information theoretic tools. Braverman et al. provide almost tight bounds in the so-called coordinator model (that differs from the more natural peer-to-peer model) by analyzing the information leaked between the players but also the information obtained by the coordinator itself. The set disjointness problem is arguably one of the most extensively studied problem in communication complexity (cf.~\cite{Bra,BYJKS,CKS,Gro,Jay,BGPW}). This line of research  was followed by \cite{CM} which also uses information theory to obtain tight bounds on the communication complexity of the function \textit{Tribes} in the coordinator model. Information theory is also used in \cite{BO} to study set-disjointness in the broadcast model. A compression procedure for the broadcast model is described in \cite{KOS}.

A number of sub-models have been considered in the literature for the multi-party computation protocols setting: the \textit{number in  hand model} (NIH), where each player has a private input, is arguably the most natural one, while in the 
\textit{number on the forehead model} (NOF), each player $i$ knows all inputs $x_j$, $j \neq i$, i.e., the ``inputs'' of all players except its own. As to the communication pattern, a number of variants have been considered as 
well: in the \textit{blackboard} model, the players communicate by broadcasting messages (or writing them on a ``blackboard''); in the \textit{message passing} model, each pair of players is given a private channel to mutually communicate (for more details on the different variants see \cite{KN}). Most of the results obtained in multi-party communication complexity were obtained for the NOF model or the blackboard model. The present paper studies, however, the NIH, message passing
(peer to peer)  model, which is also the most closely related to the work done on message passing protocols in the distributed computing and networking communities.

\subsection{Our contributions}

Our main goal is to introduce novel information-theoretical measures for the study of number-in-hand, message-passing multi-party protocols, coupled with a natural model that, among other things,  allows private protocols (which is not the case for, e.g.,  the coordinator model).

We define the new measure of {\em Public Information Complexity} ($\PIC$), as a tool for the study of multi-party computation protocols, and of quantities such as their communication complexity, or the amount of randomness they require in the context of information-theoretic private computations.
Intuitively, our new measure captures a combination of the amount of information about the inputs that the players leak to other players, and the amount of randomness that the protocol uses. By proving lower bounds on $\PIC$ for a given multi-party function $f$, we are able to give lower bounds on the 
multi-party communication complexity of $f$ and on the amount of randomness needed to privately compute $f$. The crucial point is that the $\PIC$ of functions, in our multi-party model, is not always $0$, unlike their $\IC$. 

Our new measure works in a model which is a slight restriction of the most general asynchronous model, where, for a given player at a given time, the set of players from which that player waits for a message can be determined by that player's own local view.
 This allows us to have the property that for any protocol, the information which is  leaked  during the execution of the protocol is at most the communication cost  of the protocol.  Note that in the multi-party case, the information cost of a protocol may be higher than its communication cost, because the identity of the player from which one receives a message might carry some information. 
We are able to define our measure and use it directly in a natural asynchronous {\em peer-to-peer} model (and not, e.g., in the coordinator model used in most works studying the multi-party case, c.f.~\cite{DF}). 
The latter point is particularly important when one is interested in private computation, since our model allows for private protocols, while this is not necessarily the case for other models.
Furthermore, if one seeks to accurately understand  communication complexity in the natural peer-to-peer model, suppressing 
polylog-factor inaccuracies, one has to study directly the peer-to-peer model, because lower bounds in the coordintor model translate to the peer-to-peer model only up to logarithmic factors (see the comparison of models in subsection~\ref{subsec model}).

We then continue and  show a  number of interesting properties and applications of our new notion:
\begin{itemize}
\item The Public Information Complexity is a lower bound on the Communication Complexity and an upper bound on the Information Complexity. In fact, it can be strictly larger than the Information Complexity.
\item The difference between the Public Information Complexity and the Information Complexity provides a lower bound on the amount of randomness used in a protocol. 
\item We compress any communication protocol to their $\PIC$ (up to logarithmic factors), by extending  to the multi-party setting the work of Brody et al.~\cite{BBKLS} and Pankratov~\cite{Pan}.
\item We show that one can approach the central question of direct sum in communication complexity by trying to prove a direct sum result for $\PIC$. Indeed, we show that a direct sum property for $\PIC$ implies a certain direct sum property for communication
complexity.
\item We precisely calculate the zero-error \textit{Public Information Complexity} of the $k$-party, $n$-bit Parity function ($\XOR$), where a player outputs the bit-wise parity of the inputs. We show that the $\PIC$ of this function is $n(k-1)$. This result is tight and it also establishes that the amount of randomness needed for a private protocol that computes this function is $\Omega(n)$. While this sounds a reasonable assertion no previous proof for such claim existed. 
\end{itemize}

\subsection{Organization}

The paper is organized as follows. In section~\ref{sec:preliminaries} we review several notations and information theory basics. In Section~\ref{sec:model} we define the communication model that we work with and a number of traditional complexity measures. In Section~\ref{sec:pic} we define the new measure $\PIC$ that we introduce in the present paper, and in Section~\ref{sec:privacy} we discuss its relation to randomness and multi-party private computation. In Section~\ref{sec:xor} we give tight bounds for the parity function $\XOR$, using $\PIC$. In section~\ref{sec:direct-sum}, we discuss the existence of a direct sum property for $\PIC$.

\section{Preliminaries}
\label{sec:preliminaries}
We start by defining a number of notations. 
We denote by $k$ the number of players. We often use $n$ to denote the size (in bits) of the input to each player.
Calligraphic letters will be used to denote sets. Upper case letters will be used to denote random variables, and given two random variables $A$ and $B$, we will denote by $AB$ the joint random variable $(A,B)$. Given a string (of bits) $s$, $|s|$ denotes the length of $s$. Using parentheses we denote an ordered set  (family) of items, e.g., $(Y_i)$. Given a family $(Y_i)$,  $Y_{-i}$ denotes the sub-family which is the family $(Y_i)$ {\em without} the element $Y_i$. The letter $X$ will usually denote the input to the players, and we thus use the shortened notation $X$ for $(X_i)$, \textit{i.e.} the input to all players. $\pi$ will be used to denote a protocol. We use the term {\em entropy} to talk about binary entropy.

We give a reminder on basic information theory, as introduced in \cite{Sha}.

\begin{definition}
The \textit{entropy} of a (discrete) random variable $X$ is
$$H(X) = \sum\limits_{x}\Pr[X=x]\log\left(\frac{1}{\Pr[X=x]}\right).$$
The \textit{conditional entropy} $H(X \mid Y)$ is defined as $\Exp\limits_y[H(X \mid Y=y)]$.
\end{definition}

\begin{proposition}\label{H b}
For any finite set $\calX \subseteq \{0,1\}^{*}$ and any random variable $X$ with support $\text{supp}(X) \subseteq \calX$, it holds
$$H(X) \le \log( | \calX |).$$
Moreover, if the set $\calX$ is prefix-free, it holds $H(X) \le \Exp[ | X |]$.

\end{proposition}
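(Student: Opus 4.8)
The plan is to prove the two claims of Proposition~\ref{H b} in turn, starting from the definition of entropy. For the first inequality, note that $H(X)=\Exp_x\log(1/\Pr[X=x])$, and since $\Pr[X=x]$ is supported on the finite set $\calX$, the function $x\mapsto\log(1/\Pr[X=x])$ is well-defined there. The standard route is to invoke concavity of $\log$ and Jensen's inequality: writing $p_x=\Pr[X=x]$, we have
$$H(X)=\sum_{x\in\text{supp}(X)}p_x\log\frac{1}{p_x}\le\log\Bigl(\sum_{x\in\text{supp}(X)}p_x\cdot\frac{1}{p_x}\Bigr)=\log|\text{supp}(X)|\le\log|\calX|,$$
where the last step uses $\text{supp}(X)\subseteq\calX$ and monotonicity of $\log$. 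This handles the first part cleanly.

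For the second (``moreover'') part, the hypothesis is that $\calX$ is prefix-free, i.e.\ no string in $\calX$ is a prefix of another. Here the natural tool is the Kraft inequality: for a prefix-free set $\calX\subseteq\{0,1\}^*$ one has $\sum_{x\in\calX}2^{-|x|}\le 1$. The plan is to combine this with the Gibbs inequality (non-negativity of relative entropy / the log-sum inequality): for any probability distribution $(p_x)$ on $\text{supp}(X)$ and any non-negative weights $(q_x)$ with $\sum q_x\le 1$, one has $\sum_x p_x\log(1/p_x)\le\sum_x p_x\log(1/q_x)$. Applying this with $q_x=2^{-|x|}$ gives
$$H(X)=\sum_x p_x\log\frac{1}{p_x}\le\sum_x p_x\log\frac{1}{2^{-|x|}}=\sum_x p_x|x|=\Exp[|X|].$$
One has to be slightly careful that $\sum_{x\in\text{supp}(X)}2^{-|x|}\le\sum_{x\in\calX}2^{-|x|}\le 1$, so the weights are legitimately sub-normalized, which is exactly what the log-sum inequality allows.

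I expect the main (minor) obstacle to be bookkeeping rather than anything deep: making sure the sums are over $\text{supp}(X)$ where all quantities are finite and positive, and deciding how much of Kraft's inequality and the log-sum inequality to prove from scratch versus cite as standard facts from \cite{Sha} or a standard reference. Since the paper treats these as ``basic information theory,'' it is reasonable to state Kraft's inequality and the Gibbs/log-sum inequality as known and give the two-line derivations above. If one prefers a fully self-contained argument, Kraft's inequality for finite prefix-free sets admits a short proof by associating to each $x\in\calX$ the dyadic subinterval of $[0,1)$ of length $2^{-|x|}$ determined by $x$, observing these intervals are pairwise disjoint by the prefix-free property, and summing their lengths; but I would relegate that to at most a footnote.
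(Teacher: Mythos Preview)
Your proof is correct. Note, however, that the paper does not actually prove this proposition: it is stated without proof as part of the ``reminder on basic information theory'' in Section~\ref{sec:preliminaries}, alongside the definitions of entropy and mutual information and other standard facts. Your argument---Jensen's inequality for the first bound, and Kraft's inequality combined with the Gibbs/log-sum inequality for the second---is the standard textbook route and is entirely appropriate if one wishes to supply a proof; there is nothing to compare against in the paper itself.
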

\begin{definition}
The \textit{mutual information between two random variables $X,Y$} is
$$I(X;Y) = H(X) - H(X \mid Y).$$
The \textit{conditional mutual information} $I(X;Y \mid Z)$ is $H(X \mid Z) - H(X \mid YZ)$. 
\end{definition}

The mutual information measures the change in the  entropy of $X$ when one learns the value of $Y$. It is non negative, and is symmetric: $I(X;Y) = I(Y;X)$.

\begin{proposition}\label{indp}
For any random variables $X$, $Y$ and $Z$, $I(X;Y \mid Z) = 0$ if and only if $X$ and $Y$ are independent for each possible value of $Z$.
\end{proposition}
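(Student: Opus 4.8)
The plan is to reduce the conditional statement to the unconditional one, namely that $I(X;Y)=0$ if and only if $X$ and $Y$ are independent, and then to prove the latter via the (strict) concavity of $\log$.

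First I would record the averaging identity $I(X;Y \mid Z) = \Exp_z\!\left[ I(X;Y \mid Z=z) \right]$, where $I(X;Y\mid Z=z)$ denotes the mutual information computed in the conditional distribution of $(X,Y)$ given $Z=z$. This is immediate from the definitions: $I(X;Y\mid Z) = H(X\mid Z) - H(X\mid YZ)$, where $H(X\mid Z) = \Exp_z[H(X\mid Z=z)]$ and $H(X\mid YZ) = \Exp_z\bigl[\Exp_{y\mid Z=z}[H(X\mid Y=y,Z=z)]\bigr]$, the inner term being the conditional entropy of $X$ given $Y$ taken in the world $Z=z$. Since mutual information is non-negative (as recalled in the text, applied to each conditional distribution), every term $I(X;Y\mid Z=z)$ is non-negative, and an expectation of non-negative quantities vanishes precisely when every positively-weighted term vanishes. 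Hence $I(X;Y\mid Z)=0$ iff $I(X;Y\mid Z=z)=0$ for each $z$ with $\Pr[Z=z]>0$, which is exactly the ``independence for each possible value of $Z$'' conclusion once the unconditional characterization is established.

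It remains to prove that $I(X;Y)=0$ iff $X$ and $Y$ are independent. Expanding $H(X)-H(X\mid Y)$ gives $I(X;Y) = \sum_{x,y:\,\Pr[x,y]>0}\Pr[x,y]\log\frac{\Pr[x,y]}{\Pr[x]\Pr[y]}$. Applying Jensen's inequality to the concave function $\log$, with weights $\Pr[x,y]$ and values $\frac{\Pr[x]\Pr[y]}{\Pr[x,y]}$, yields $-I(X;Y) \le \log\!\bigl(\sum_{x,y:\,\Pr[x,y]>0}\Pr[x]\Pr[y]\bigr) \le \log 1 = 0$, recovering $I(X;Y)\ge 0$. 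If $X$ and $Y$ are independent, every summand is $0$, so $I(X;Y)=0$. Conversely, if $I(X;Y)=0$ then both inequalities above are tight: strict concavity of $\log$ forces $\frac{\Pr[x]\Pr[y]}{\Pr[x,y]}$ to be constant on the support of $(X,Y)$, and tightness of $\sum_{\Pr[x,y]>0}\Pr[x]\Pr[y]\le 1$ forces the product distribution to place no mass outside that support; together these give $\Pr[x,y]=\Pr[x]\Pr[y]$ for all $x,y$, i.e. independence.

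The only mildly delicate point is the treatment of zero-probability pairs in the equality analysis, i.e. checking that $\Pr[x,y]=\Pr[x]\Pr[y]$ holds even when $\Pr[x,y]=0$; this is handled precisely by the second inequality $\sum_{\Pr[x,y]>0}\Pr[x]\Pr[y]\le 1$ and its equality condition, which says $\Pr[x]\Pr[y]=0$ whenever $\Pr[x,y]=0$. Everything else is a routine unwinding of the definitions of entropy, conditional entropy and mutual information, so I would keep the write-up brief and simply highlight the two equality conditions in Jensen's inequality as the crux.
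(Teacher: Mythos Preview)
Your argument is correct. The reduction via $I(X;Y\mid Z)=\Exp_z[I(X;Y\mid Z=z)]$ together with non-negativity of mutual information cleanly reduces the conditional statement to the unconditional one, and your Jensen/strict-concavity analysis of the equality case, including the handling of pairs $(x,y)$ outside the support, is accurate.

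As for comparison with the paper: there is nothing to compare against. The paper lists this proposition in its preliminaries section as a standard information-theoretic fact and gives no proof; it is used as a black box elsewhere (e.g., in Proposition~\ref{p05} and in the proof of Proposition~\ref{p3}). Your write-up supplies exactly the kind of short self-contained justification one would expect for such a preliminary, so it is appropriate in both content and length.
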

\begin{proposition}\label{condi}
For any random variables $X$ and $Y$, $H(X \mid Y) \le H(X)$.
\end{proposition}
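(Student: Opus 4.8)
The plan is to reduce the inequality to the non-negativity of mutual information — which the excerpt has already asserted informally — and, since we are in a preliminaries section, to actually establish that non-negativity from scratch via Jensen's inequality so that nothing is left on faith.

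First I would rewrite both entropies as sums over the \emph{joint} distribution of $(X,Y)$. Using $\Pr[X=x] = \sum_y \Pr[X=x,Y=y]$ together with the definition of $H(X\mid Y)$ as $\Exp_y[H(X\mid Y=y)]$, one gets
$$H(X) - H(X\mid Y) = \sum_{x,y} \Pr[X=x,Y=y]\,\log\frac{\Pr[X=x,Y=y]}{\Pr[X=x]\,\Pr[Y=y]},$$
where the sum ranges over pairs with $\Pr[X=x,Y=y] > 0$ and we adopt the usual convention $0\log 0 = 0$. This expression is exactly $I(X,Y)$ as defined above, so the statement is equivalent to $I(X,Y)\ge 0$.

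Next I would invoke concavity of the logarithm. Writing $p_{x,y} = \Pr[X=x,Y=y]$ and $q_{x,y} = \Pr[X=x]\Pr[Y=y]$, the quantity above equals $-\sum_{x,y} p_{x,y}\log(q_{x,y}/p_{x,y})$. Since $\log$ is concave and $(p_{x,y})$ is a probability distribution, Jensen's inequality gives $\sum_{x,y} p_{x,y}\log(q_{x,y}/p_{x,y}) \le \log\bigl(\sum_{x,y} q_{x,y}\bigr)$, and $\sum_{x,y} q_{x,y} \le \bigl(\sum_x \Pr[X=x]\bigr)\bigl(\sum_y \Pr[Y=y]\bigr) = 1$, so the right-hand side is at most $\log 1 = 0$. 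Hence $H(X) - H(X\mid Y) = -\sum_{x,y} p_{x,y}\log(q_{x,y}/p_{x,y}) \ge 0$, which is the claim.

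The only place where care is needed — and the part I expect to be the sole (mild) obstacle — is the bookkeeping of zero-probability atoms: terms with $p_{x,y}=0$ must be dropped, which is consistent with both the $\Exp_y$ convention in $H(X\mid Y)$ and the rewritten sum, and one must then note that restricting the index set of $q$ to $\{(x,y): p_{x,y}>0\}$ can only decrease $\sum q_{x,y}$, which is why the safe form of the bound is $\sum_{x,y} q_{x,y} \le 1$ rather than equality. Everything else is a one-line application of concavity. Alternatively, taking the already-stated non-negativity of mutual information as given, the proposition is immediate from $H(X) - H(X\mid Y) = I(X,Y) \ge 0$.
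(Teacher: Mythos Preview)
Your proof is correct; note however that the paper does not actually prove this proposition --- it is stated as a standard information-theory fact in the preliminaries without proof, alongside other basic properties like the chain rule. Your argument via the non-negativity of mutual information (established through Jensen's inequality applied to the log-sum expression for $I(X,Y)$) is the textbook route and is entirely sound, including your handling of the zero-probability atoms.
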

\begin{proposition}[Chain Rule]\label{CR}
Let $A$, $B$, $C$, $D$ be four random variables. Then
$$I(AB ; C \mid D) = I(A ; C \mid D) + I(B ; C \mid DA).$$
\end{proposition}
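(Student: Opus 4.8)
The plan is to reduce the identity to the definition of conditional mutual information in terms of conditional entropy, and then to invoke the chain rule for entropy. First I would record the conditional entropy chain rule $H(XY \mid Z) = H(X \mid Z) + H(Y \mid XZ)$, which is not stated explicitly in the preliminaries but follows directly from the definitions: inside the logarithm appearing in $H(XY \mid Z = z)$ one factors $\Pr[X = x, Y = y \mid Z = z] = \Pr[X = x \mid Z = z]\cdot \Pr[Y = y \mid X = x, Z = z]$, splits $\log$ of the product into a sum of two logarithms, and then averages over $z$. The first resulting term does not depend on $y$, so summing it against the conditional distribution of $Y$ leaves it unchanged and yields $H(X \mid Z)$; the second term is exactly $\Exp_{z}\,\Exp_{x}\bigl[H(Y \mid X = x, Z = z)\bigr] = H(Y \mid XZ)$.

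Next I would apply this twice, keeping $D$ as the conditioning variable throughout. Starting from the definition $I(AB ; C \mid D) = H(AB \mid D) - H(AB \mid CD)$, I expand $H(AB \mid D) = H(A \mid D) + H(B \mid AD)$ and, taking $Z = CD$ in the entropy chain rule, $H(AB \mid CD) = H(A \mid CD) + H(B \mid ACD)$. Subtracting and regrouping the four terms gives
$$I(AB ; C \mid D) = \bigl(H(A \mid D) - H(A \mid CD)\bigr) + \bigl(H(B \mid AD) - H(B \mid ACD)\bigr).$$
By the definition of conditional mutual information the first bracket is $I(A ; C \mid D)$, and the second is $I(B ; C \mid AD)$, which equals $I(B ; C \mid DA)$ since conditioning on the pair $(D,A)$ is the same as conditioning on $(A,D)$. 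This is precisely the claimed identity.

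The only genuine content is the conditional entropy chain rule; everything else is substitution and rearrangement. Since every step is an exact identity rather than an inequality, there is no loss anywhere and no positivity, independence, or finiteness-of-entropy assumption is required beyond discreteness of $A,B,C,D$. The one thing to be careful about is purely notational bookkeeping: making sure the expectations over the conditioning variables $D$ (and then $AD$, $CD$, $ACD$) are written consistently so that the two applications of the chain rule line up term by term.
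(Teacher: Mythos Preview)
Your proof is correct and is the standard derivation: expand $I(AB;C\mid D)$ via conditional entropies, apply the entropy chain rule $H(XY\mid Z)=H(X\mid Z)+H(Y\mid XZ)$ to both $H(AB\mid D)$ and $H(AB\mid CD)$, and regroup. The paper does not actually prove this proposition; it is stated without proof in the preliminaries as a standard information-theoretic identity, so there is nothing to compare against beyond noting that your argument is the usual one.
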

\begin{lemma}[Data processing inequality]\label{data proc}
For any $X$, $Y$, $Z$, and any function $f$, it holds: $I(X ; f(Y) \mid Z)\le I(X ; Y \mid Z)$
\end{lemma}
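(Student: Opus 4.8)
The plan is to derive the inequality from the chain rule (Proposition~\ref{CR}) by expanding the conditional mutual information $I(X; Yf(Y) \mid Z)$ of the \emph{joint} variable $Yf(Y)$ in two different ways. Splitting the pair as $Y$ first and then $f(Y)$ gives
\[
I(X; Yf(Y)\mid Z) = I(X; Y\mid Z) + I(X; f(Y)\mid ZY),
\]
while splitting it the other way around gives
\[
I(X; Yf(Y)\mid Z) = I(X; f(Y)\mid Z) + I(X; Y\mid Zf(Y)).
\]

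Next I would argue that $I(X; f(Y)\mid ZY) = 0$. Conditioned on any fixed value $Y=y$ (and any value of $Z$), the random variable $f(Y)$ takes the single value $f(y)$, hence is trivially independent of $X$; by Proposition~\ref{indp} this is exactly the statement $I(X; f(Y)\mid ZY) = 0$. Substituting into the first expansion and equating with the second yields
\[
I(X; Y\mid Z) = I(X; f(Y)\mid Z) + I(X; Y\mid Zf(Y)).
\]

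Finally, since conditional mutual information is non-negative, the term $I(X; Y\mid Zf(Y))$ is $\ge 0$, and dropping it gives $I(X; Y\mid Z) \ge I(X; f(Y)\mid Z)$, as claimed. The argument is entirely routine; there is no real obstacle. The only point that requires a bit of care is the ``trick'' of introducing the auxiliary joint variable $Yf(Y)$ and noticing that its two chain-rule expansions share the same left-hand side, together with the observation that a deterministic function of $Y$ carries no additional information once $Y$ itself is known.
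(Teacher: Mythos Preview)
Your proof is correct and is essentially the same argument as the paper's: both introduce the auxiliary joint variable $Yf(Y)$, apply the chain rule in the two possible orders, use that $I(X; f(Y)\mid ZY)=0$, and conclude via non-negativity of conditional mutual information. The only difference is cosmetic---the paper presents it as a single chain of inequalities, while you equate the two expansions first and then drop the non-negative term.
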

\begin{proof}
\begin{align*}
I(X ; f(Y) \mid Z) &\le I(X ; f(Y) \mid Z) + I(X ; Y \mid f(Y) Z) \\
&= I(X ; Y f(Y) \mid Z) \\
&= I(X ; Y \mid Z) + I(X ; f(Y) \mid Y Z) \\
& =  I(X ; Y \mid Z).
\end{align*}
\end{proof}
\begin{proposition}[\cite{Bra}]\label{2.9}
Let $A$, $B$, $C$, $D$ be four random variables such that $I(B ; D \mid AC) = 0$. Then
$$I(A;B \mid C) \ge I(A ; B \mid CD).$$
\end{proposition}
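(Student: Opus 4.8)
The plan is to expand the joint conditional mutual information $I(AD;B \mid C)$ in two different ways using the Chain Rule (Proposition~\ref{CR}), and then exploit the hypothesis to cancel one of the four resulting terms.

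First I would apply the Chain Rule grouping $A$ first:
\[
I(AD ; B \mid C) = I(A ; B \mid C) + I(D ; B \mid CA).
\]
Then I would apply it again grouping $D$ first:
\[
I(AD ; B \mid C) = I(D ; B \mid C) + I(A ; B \mid CD).
\]
Equating the two right-hand sides gives
\[
I(A ; B \mid C) + I(D ; B \mid CA) = I(D ; B \mid C) + I(A ; B \mid CD).
\]
By symmetry of (conditional) mutual information, $I(D;B \mid CA) = I(B;D \mid AC)$, which is $0$ by hypothesis. Hence $I(A;B \mid C) = I(D;B \mid C) + I(A;B \mid CD)$, and since conditional mutual information is non-negative we conclude $I(A;B \mid C) \ge I(A;B \mid CD)$, as desired.

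There is essentially no serious obstacle here: the only point requiring care is the bookkeeping in the two applications of the Chain Rule (matching the roles of the variables to the statement $I(AB;C\mid D) = I(A;C\mid D) + I(B;C\mid DA)$) and invoking symmetry to recognize that the hypothesis $I(B;D\mid AC)=0$ is exactly the term that vanishes. Everything else is non-negativity of mutual information.
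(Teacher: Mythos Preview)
Your proof is correct. Note, however, that the paper does not actually supply its own proof of this proposition: it is stated with a citation to \cite{Bra} and no argument is given, so there is nothing in the paper to compare your approach against. Your two-way chain-rule expansion of $I(AD;B\mid C)$ is the standard proof of this fact.
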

\begin{proposition}[\cite{Bra}]\label{2.10}
Let $A$, $B$, $C$, $D$ be four random variables such that $I(B ; D \mid C) = 0$. Then
$$I(A;B \mid C) \le I(A ; B \mid CD).$$
\end{proposition}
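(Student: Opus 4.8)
The plan is to apply the chain rule (Proposition~\ref{CR}) to the single quantity $I(AD;B \mid C)$, expand it in two different orders, and compare the outcomes.

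Expanding so that $A$ is peeled off first, the chain rule gives $I(AD;B \mid C) = I(A;B \mid C) + I(D;B \mid CA)$. Expanding instead with $D$ peeled off first gives $I(AD;B \mid C) = I(D;B \mid C) + I(A;B \mid CD)$. Now the hypothesis is $I(B;D \mid C)=0$, which by symmetry of mutual information is the same as $I(D;B \mid C)=0$, so the second expansion collapses to $I(AD;B \mid C) = I(A;B \mid CD)$. Equating the two expansions then yields
\[
I(A;B \mid CD) = I(A;B \mid C) + I(D;B \mid CA),
\]
and since the conditional mutual information $I(D;B \mid CA)$ is non-negative, we conclude $I(A;B \mid C) \le I(A;B \mid CD)$, as desired.

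I do not expect any genuine obstacle here; the only points requiring a little care are to keep track of which variable sits in the conditioning ``tail'' when invoking Proposition~\ref{CR} (as stated, it peels a variable off the front of the joint variable), and to use symmetry of $I(\cdot\,;\cdot\mid\cdot)$ to rewrite the hypothesis $I(B;D\mid C)=0$ in the form $I(D;B\mid C)=0$ so that it cancels the right term. As a sanity check, this argument is the mirror image of the one for Proposition~\ref{2.9}: there the hypothesis $I(B;D\mid AC)=0$ annihilates the term $I(D;B\mid CA)$ instead, which flips the inequality to $I(A;B\mid C)\ge I(A;B\mid CD)$.
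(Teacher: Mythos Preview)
Your proof is correct and is the standard argument for this fact. Note that the paper does not actually prove Proposition~\ref{2.10}; it is merely quoted from~\cite{Bra} without proof, so there is no ``paper's own proof'' to compare against. The double application of the chain rule to $I(AD;B\mid C)$ that you give is exactly the intended argument, and your remark that the proof of Proposition~\ref{2.9} is the mirror image (with the hypothesis killing the other term) is also correct.
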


\section{The model}
\label{sec:model}
We now define a natural communication model which is a slight restriction of the most general asynchronous peer-to-peer model. 
Its restriction is that for a given player at a given time, the set of players from which that player waits for a message can be determined by that player's own local view. The player continues its computation only after messages are received
 from this set.
This allows us to define information theoretical tools that pertain to the transcripts of the protocols, and at the same time to use these tools as lower bounds for communication complexity. This restriction however does not exclude the existence of private protocols, as other special cases of the general asynchronous  model do.  We observe  that without such restriction the information revealed by the execution of a protocol might be higher than the number of bits transmitted and that, on the other hand, practically all multi-party protocols in the literature 
are implicitly defined in our model. We also compare our model to the general one and to other restricted ones and explain the usefulness and logic of our specific model. 

\subsection{Definition of the model}\label{subsec model}
We work in the \textit{multi-party number in hand peer-to-peer} model. Each player has unbounded local computation power and, in addition to its input $X_i$, has access to a source of private randomness $R_i$. We will use the notation $R$ for $(R_i)$, \textit{i.e.}, the private randomness of all players. A source of public randomness $R^p$ is also available to all players.
The system consists of $k$ players and a family of $k$ functions $f = (f_i)_{i\in\Ni{1}{k}}$, with $\forall~i \in\Ni{1}{k}, f_i: \prod\limits_{l=1}^{k}\calX_l \rightarrow \calY_i$, where $\calX_l$ denotes the set of possible inputs of player $l$, and  ${\calY_i}$  denotes the set of possible outputs of player $i$.
The players are given some input $x = (x_i)\in \prod\limits_{i=1}^{k}\calX_i$, and for every $i$, player $i$ has to compute $f_i(x)$. 
Each player has a special write-only output tape.

We define the communication model as follows, which is the asynchronous setting, with some restrictions.
To make the discussion simpler we assume a global time which is {\em unknown} to the players.
Every pair of players is connected by a bidirectional communication link that allows them to send
messages in both directions. There is no bound on the delivery time (i.e., when the message arrives to its destination node) of a message, but every message
is delivered in finite time, and the communication link maintains FIFO order in each of the two directions.
Messages that arrive to the head of the link at the destination node of that link are buffered until they are read by that node.
Given a specific time we define the {\em view} of player $i$, denoted $D_i$, as the input of that player, $X_i$, its private randomness, $R_i$, and the messages received so far by player $i$. The protocol of each player $i$ runs in {\em local} rounds.
In each round, player $i$ sends messages to some subset of the other players. The identity of these players, as well as the content of these messages, depend on the current view of player $i$.  The player also decides whether to write a (nonempty) string on its output tape. Then, the player waits for messages from a certain subset of the other players, where this subset is also determined by the current view of the player. That is,  the player reads a single message from each of the incoming links that connect it  to that subset of other players. If for a certain such link no message
is available, then the player waits until such message is available (i.e., arrives).
Then the (local) round of player $i$ terminates\footnote{The fact that the receiving of the incoming messages comes as the last step of  the (local) round comes only to emphasize that the sending of the messages and the writing on the output tape are a function of only the messages received in previous (local) rounds.}. To make it possible for the player to identify the arrival of the {\em complete} message that it waits for, we require that each message sent by a player in the protocol be self-delimiting.
  
 Denote by ${\calD}_i^j$ the set of possible views of player $i$ at the end of local round 
$j$, $j \geq 0$, where the beginning of 
the protocol is considered round $0$. 
Formally, a  protocol $\pi$ is defined by  a set of local programs, one for each player $i$,  where the local 
program of player $i$ is defined by a sequence of functions, parametrized by the index of the {\em local} 
round $j$, $j\geq 1$:

\begin{itemize}
\item $\overline{S}_i^{j}: {\calD}_i^{j-1} \rightarrow 2^{\{1,\ldots,k\}\setminus\{i\}}$, defining the set of players 
to which player $i$ \textit{sends} the messages.
\item $m_{i,q}^j:  {\calD}_i^{j-1} \rightarrow \{0,1\}^{*}$, such that for any $D_i^{j-1} \in{\calD}_i^{j-1}$, 
if  $q \in \overline{S}_i^{j}(D_i^{j-1})$,  then $m_{i,q}^j(D_i^{j-1})$ is 
the content of the message player $i$ sends to player $q$. Each such message is self-delimiting.
\item $O_i^j: \calD_i^{j-1} \rightarrow \{0,1\}^{*}$, defining what the player writes  on the output tape. Each player can write on 
its output tape a non-empty string only once.\footnote{We require that each player writes only once on its output tape so 
that the local view of the player determines the local output of the protocol (i.e., so that the player itself ``knows'' the output). 
This requirement is needed since  a player may not know locally that the protocol ended.}
\item $S_i^{j}: {\calD}_i^{j-1} \rightarrow 2^{\{1,\ldots,k\}\setminus\{i\}}$, 
defining the set of players from which player $i$ waits to \textit{receive} a message.
\end{itemize}

We define the transcript of the protocol of player $i$, denoted $\Pi_i$, as the concatenation of the messages read by player $i$ from the links of the sets $S_i^1, S_i^2, \ldots$, ordered by local round number, and within each round by the index of the player. We denote by 
$\overleftrightarrow{_{~}\Pi_i}$ the concatenation of $\Pi_i$ together with a similar concatenation $\overrightarrow{_{~}\Pi_i}$ of the messages sent by player $i$ to the sets $\overline{S}_i^0, \overline{S}_i^1,\ldots$
We denote by  $\Pi_{i \rightarrow j}$ the concatenation of the messages sent by player $i$ to player $j$ during the course of the protocol. The transcript of the (whole) protocol, denoted $\Pi$, is obtained by concatenating all the $\Pi_i$ ordered by, say, player index. 

We will give most of the definitions for the case where all functions $f_i$ are the same function, that we denote by $f$. The definitions in the case of family of functions are similar.

\begin{definition}
For $\epsilon \ge 0$, a protocol $\pi$ $\epsilon$\textit{-computes} a function $f$ if 
for all $(x_1,\dots,x_k)\in\calX_1\times\dots\times\calX_k$:

\begin{enumerate}
\item  For all possible assignments for the random sources $R_i$, $ 1\leq i\leq  k$, and $R^p$,  every player eventually (i.e., in finite time) writes on its output tape (a non-empty string).
\item With probability at least $1-\eps$  (over all random sources) the following event occurs:
each player $i$ writes on its output tape the value $f(x)$, i.e., the correct value of the function.
\end{enumerate}
\end{definition}
 
For simplicity we also assume that a protocol must eventually stop. That is, for all possible inputs and all possible assignments for the random sources, eventually (i.e., in finite time) there is no message in transit.

\subsection{Comparison to other models}
The somewhat restricted model (compared to the general asynchronous model) that we work with allows us to define a measure similar to information cost that we will later show to have desirable properties and to be of use.
Notice that the general asynchronous model is problematic in this respect since one bit of communication can bring $\log(k)$ bits of information, as not only the content of the message but also the identity of the sender may reveal information. Thus, information cannot be used as a lower bound on communication. In our case, the sets $S_i^l$ and $\overline{S}_i^l$ are determined by the current view of the player, $(\Pi_i)$ contains only the content of the messages, and thus the desirable relation between the communication and the information is maintained. On the other hand, our restriction is natural, does not seem to be very restrictive (practically all protocols in the literature adhere to our
model), and does not exclude the existence of private protocols.

To exemplify the above mentioned issue in the general asynchronous  model consider the following simple example of a deterministic protocol, for $4$ players $A$, $B$ and $C$, $D$, which allows $A$ to transmit to $B$ its input bit $x$, but where all messages sent in the protocol are the bit $0$, and the protocol generates only a single transcript over all possible inputs.

{\bf A}: If $x=0$ send $0$ to $C$; after receiving 0 from $C$, send $0$ to $D$.

\hspace{0.35cm} If $x=1$ send $0$ to $D$; after receiving 0 from $D$, send $0$ to $C$

{\bf B}: After receiving 0 from a player, send 0 back to that player.

{\bf C,D}: After receiving 0 from $A$ send 0 to $B$. After receiving 0 from $B$ send 0 to $A$.

\noindent It is easy to see that $B$ learns the value of $x$ from the order of the messages it gets.

There has been a long series of works about multi-party communication protocols in different variants of models, for example  \cite{CKS,Gro,Jay,PVZ,CRR,CR}. 
In \cite{BEOPV}, Braverman et al. consider a restricted class of protocols working in the \textit{coordinator model}: an additional player with no input can communicate privately with each player, and the players can only communicate with the coordinator. 
  
We first note that the coordinator model does not yield exact bounds for the multi-party communication complexity in the peer-to-peer model (neither in our model nor in the most general one). 
Namely, a protocol in the peer-to-peer model can be transformed into a protocol in the coordinator model with 
an $O(\log k)$ multiplicative factor in the communication complexity, by sending any message to the coordinator with a $O(\log k)$-bit label indicating its destination. This factor is sometimes necessary, e.g., for the {\tt $q$-index} function, where players  $P_i$, $0 \leq i \leq  k-1$, each holds an input bit  $x_i$,  player $P_k$ holds $q$ indices $ 0 \leq j_{\ell} \leq k-1$,  $1\leq \ell \leq q$, and $P_k$ should learn the vector $(x_{j_1},x_{j_1},\ldots,x_{j_{q}})$: in the coordinator model the communication complexity of this function is $\Theta(\min\{k,q\log k\})$, while in both peer-to-peer models there is a protocol for this function that sends only (at most) $\min\{k,2q\}$ bits, where $P_k$ just queries the appropriate other players. But this multiplicative factor between the complexities in the two models is not always necessary: the communication complexity of the parity function $\XOR$ is $\Theta(k)$ both in the peer-to-peer models and in the coordinator model. 

Moreover, when studying private protocols in the peer-to-peer model, the coordinator model does not offer any insight. In the (asynchronous) coordinator model, described in \cite{DF} and used for instance in \cite{BEOPV}, if there is no privacy requirement with respect to the coordinator, it is trivial to have a private protocol by all players sending their input to the coordinator, and the coordinator returning the results to the players. If there is a privacy requirement with respect to the coordinator, then if there is a random source shared by all the players (but not the coordinator), privacy is always possible using the protocol of~\cite{FKN}. If no such source exists, privacy is impossible in general.
This follows from the results of Braverman et al. \cite{BEOPV} who show a non-zero lower bound on the total internal information complexity of all parties (including the coordinator) for the function {\em Disjointness} in that model. 

Note also that the private protocols described in \cite{BGW,CCD} (and further work) are defined in the synchronous setting, and thus can be adapted to our communication model (the sets $\overline{S}_i^j$ and ${S_i^j}$ are always all the players and hence even independent of the current views). 

In the sequel we also use a special case of our model, where the sets $\overline{S}_i^j$ and  $S_i^j$ are a function only of $i$ and $j$, and not of the entire current view of the player. This is a natural special case for protocols which we call {\em oblivious protocols}, where the communication pattern is fixed and is not a function of the input or randomness. Clearly, the messages themselves remain a function of the view of the players. 
We observe that synchronous protocols are a special case of oblivious protocols.

\subsection{Communication complexity and information complexity}

Communication complexity, introduced in \cite{Yao}, measures how many bits of communication are needed in order for a set of  players to compute with error $\epsilon$ a given function of their inputs. The allowed error $\epsilon$, implicit in many of the contexts, will be written explicitly as a superscript when necessary.

\begin{definition}
The \textit{communication cost} of a protocol $\pi$, $\CC(\pi)$, is the maximal length of the transcript of $\pi$ over  all possible inputs, private randomness and public randomness.
\end{definition}

\begin{definition}
$\CC(f)$ denotes the communication cost of the best protocol computing $f$.
\end{definition}

Information complexity measures the amount of information that must be transmitted so that the players can compute a given function of their joint inputs.  One of its main uses is to provide  a lower bound on the communication complexity of the function. In the two-party setting, this measure
led to interesting results on the communication complexity of various functions, such as {\em AND} and {\em Disjointness}. We now focus on designing an  analogue to the information cost, for the multi-party setting. The notion of internal information cost for two-party protocols (c.f. \cite{CSWY,BYJKS,Bra}) can be easily generalized to any number of players:

\begin{definition}
The \textit{internal information cost} of a protocol $\pi$ for $k$ players, with respect to input distribution $\mu$, is the sum of the information revealed to each player about the inputs of the other players: 
$$\IC_\mu(\pi) = \sum\limits_{i=1}^k I(X_{-i} ; \Pi_i \mid X_i R_i R^p).$$
\end{definition}

Intuitively, the information cost of a protocol is the amount of information each player learns about the inputs of the other players during the protocol. The definition we give above, when restricted to two players is the same as in \cite{Bra}, even though they look slightly different. This is because we make explicit the role of the randomness, which will allow us to later give bounds on the amount of randomness needed for private protocols in the multi-party setting.

The \textit{internal information complexity} of a function $f$ with respect to input distribution $\mu$, as well as the \textit{internal information complexity} of a function $f$, can be defined for the multi-party case based on the information cost of a protocol, just as in the $2$-party case. 

\begin{definition}
The \textit{internal information complexity} of a function $f$, with respect to input distribution $\mu$ is the infimum of the internal information cost over all protocols computing $f$ on input distribution $\mu$:
$$\IC_\mu(f) =  \inf\limits_{\pi\text{ computing }f} \IC_\mu(\pi).$$
\end{definition}

The information revealed to a given player by a protocol can be written in several ways:

\begin{proposition}\label{eq def}
For any protocol $\pi$, for any player $i$:
$$I(X_{-i} ; \overleftrightarrow{_{~}\Pi_i} \mid X_i R_i R^p) = I(X_{-i} ; \Pi_i \mid X_i R_i R^p).$$
\end{proposition}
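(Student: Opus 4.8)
The plan is to prove the identity by establishing a single structural fact and then applying the chain rule. The structural fact is that the stream $\overrightarrow{_{~}\Pi_i}$ of messages that player $i$ \emph{sends} during the protocol is a deterministic function of the tuple $(X_i,R_i,R^p,\Pi_i)$, where $\Pi_i$ is the stream of messages it \emph{receives}. Intuitively this is clear: every outgoing message of player $i$ is prescribed by the protocol as a function of player $i$'s current view, and that view is built only from $X_i$, $R_i$, $R^p$, and the messages received so far.

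I would make this precise by induction on the local round number $j$, showing that the view $D_i^{j}$ is a function of $(X_i,R_i,R^p,\Pi_i)$. The base case is $D_i^{0}=(X_i,R_i,R^p)$. For the inductive step, suppose $D_i^{j-1}$ has been recovered from $(X_i,R_i,R^p,\Pi_i)$. Then the set $S_i^{j}(D_i^{j-1})$ of players from which player $i$ waits in round $j$ is determined, and since every message sent in the protocol is self-delimiting, player $i$ can read off, from the raw bit stream $\Pi_i$, exactly the message blocks received in round $j$ (from the players of $S_i^{j}(D_i^{j-1})$, in the prescribed order), and thus obtain $D_i^{j}$. In particular the messages sent in round $j$, namely $m_{i,q}^{j}(D_i^{j-1})$ for $q\in\overline{S_i^{j}}(D_i^{j-1})$, are determined as well. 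Concatenating over all rounds gives $\overrightarrow{_{~}\Pi_i}=g(X_i,R_i,R^p,\Pi_i)$ for some function $g$; equivalently, $H(\overrightarrow{_{~}\Pi_i}\mid X_iR_iR^p\Pi_i)=0$.

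Given this, the proof finishes in two lines. Since $\overleftrightarrow{_{~}\Pi_i}$ is, up to a deterministic recoding, the pair $(\Pi_i,\overrightarrow{_{~}\Pi_i})$, replacing one by the other does not change mutual information, and the chain rule (Proposition~\ref{CR}), applied with $A=\Pi_i$, $B=\overrightarrow{_{~}\Pi_i}$, $C=X_{-i}$, $D=X_iR_iR^p$, gives
$$I(X_{-i};\overleftrightarrow{_{~}\Pi_i}\mid X_iR_iR^p)=I(X_{-i};\Pi_i\mid X_iR_iR^p)+I(X_{-i};\overrightarrow{_{~}\Pi_i}\mid X_iR_iR^p\Pi_i).$$
By the structural fact, $\overrightarrow{_{~}\Pi_i}$ is a constant once $X_iR_iR^p\Pi_i$ is fixed, hence conditionally independent of $X_{-i}$, so by Proposition~\ref{indp} (or directly by the data processing inequality, Lemma~\ref{data proc}) the last term is $0$, which is exactly the claim.

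The only step requiring any care is the inductive argument for the structural fact, and the subtlety there is precisely the one the model was designed to avoid: the decomposition of the raw stream $\Pi_i$ into per-round blocks must itself be computable from player $i$'s local data. This is what the two model restrictions buy us --- the receiving sets $S_i^{j}$ depend only on the local view $D_i^{j-1}$, and messages are self-delimiting --- whereas in the fully general asynchronous model the mere arrival pattern can encode extra information (as in the four-player example of Section~\ref{subsec model}), and the analogous identity would fail. Everything after the structural fact is routine.
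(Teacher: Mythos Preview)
Your proof is correct and follows essentially the same approach as the paper's: decompose $\overleftrightarrow{_{~}\Pi_i}$ into $\Pi_i$ and $\overrightarrow{_{~}\Pi_i}$, apply the chain rule, and kill the extra term using $H(\overrightarrow{_{~}\Pi_i}\mid X_iR_iR^p\Pi_i)=0$. The paper simply asserts this last entropy identity in one line, whereas you spell out the underlying inductive argument on local rounds; this is a welcome elaboration but not a different route.
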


\begin{proof}
For any protocol $\pi$, for any player $i$:
\begin{align*}
I(X_{-i} ; \overleftrightarrow{_{~}\Pi_i} \mid X_i R_i R^p) &= I(X_{-i} ; \overrightarrow{_{~}\Pi_i}\Pi_i \mid X_i R_i R^p) \\
&=I(X_{-i} ;\Pi_i \mid X_i R_i R^p)  + I(X_{-i} ;\overrightarrow{_{~}\Pi_i} \mid X_i R_i R^p \Pi_i) \text{~~~(chain rule)}\\
&= I(X_{-i} ;\Pi_i \mid X_i R_i R^p) \text{~~~(since $H(\overrightarrow{_{~}\Pi_i} \mid X_i R_i R^p \Pi_i)=0$)}~.
\end{align*}

\end{proof}

\subsection{Information complexity and privacy}

The definition of a {\em private protocol} as defined in~\cite{BGW,CCD} is the following.
\begin{definition} \label{def:privacy}
A $k$-player protocol $\pi$ for computing a family of functions $(f_i)$ is \textit{private}\footnote{In this paper we consider only the setting of $1$-privacy, which we call here for simplicity, privacy.} if for every player $i \in \Ni{1}{k}$,
for all pairs of inputs $x=(x_1,\dots,x_k)$ and $x'=(x'_1,\dots,x'_k)$ such that $f_i(x) = f_i(x')$ and $x_i = x'_i$,
for all possible private random tapes $r_i$ of player $i$, and all possible  public random tapes $r^p$, it holds that for any transcript $T$ 
$$\Pr[\Pi_i = T \mid R_i=r_i~;~X=x~;~R^p=r^p] = Pr[\Pi_i = T \mid R_i=r_i~;~X=x'~;~R^p=r^p]~,$$ where the probability is over the randomness $R_{-i}$.
\end{definition}

The notion of privacy has an equivalent formulation in terms of information.
\begin{proposition}\label{p05}
A protocol $\pi$ is private if and only if for all input distributions $\mu$,
$$\sum\limits_{i=1}^k I(X_{-i} ; \Pi_i \mid X_i R_i R^p f_i(X)) = 0.$$
\end{proposition}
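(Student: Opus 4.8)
The plan is to show the two directions of the equivalence between the privacy condition of Definition~\ref{def:privacy} and the information-theoretic condition $\sum_{i} I(X_{-i} ; \Pi_i \mid X_i R_i R^p f_i(X)) = 0$. Since mutual information is nonnegative, the sum over all players is $0$ if and only if each term $I(X_{-i} ; \Pi_i \mid X_i R_i R^p f_i(X))$ is $0$ individually. So it suffices to fix a player $i$ and show that privacy with respect to player $i$ is equivalent to $I(X_{-i} ; \Pi_i \mid X_i R_i R^p f_i(X)) = 0$ for all input distributions $\mu$.

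First I would unpack the information-theoretic term using Proposition~\ref{indp}: $I(X_{-i} ; \Pi_i \mid X_i R_i R^p f_i(X)) = 0$ holds if and only if, for every fixed value $(x_i, r_i, r^p, y)$ of $(X_i, R_i, R^p, f_i(X))$ that occurs with positive probability, the random variables $X_{-i}$ and $\Pi_i$ are independent conditioned on that value. Independence here means: for every transcript $T$, $\Pr[\Pi_i = T \mid X_i = x_i, R_i = r_i, R^p = r^p, f_i(X) = y]$ does not depend on which value $x_{-i}$ of $X_{-i}$ (consistent with $x_i$, $y$) we further condition on. Now observe that conditioning on $X = x$ (a particular full input with $x_i$ fixed) that satisfies $f_i(x) = y$ is the same as conditioning on $X_i = x_i$, $X_{-i} = x_{-i}$, $f_i(X) = y$, and that once the full input and $r_i, r^p$ are fixed, the only remaining randomness governing $\Pi_i$ is $R_{-i}$ (the protocol functions are deterministic given the view, and $\Pi_i$ is determined by $X, R, R^p$). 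So $\Pr[\Pi_i = T \mid R_i = r_i, X = x, R^p = r^p]$ (probability over $R_{-i}$) is exactly the conditional distribution appearing inside the mutual-information expansion. This makes the translation between the two formulations essentially a matter of matching up the conditioning events.

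For the direction "privacy $\Rightarrow$ information is zero": fix any $\mu$ and any positive-probability value $(x_i, r_i, r^p, y)$. For any two full inputs $x, x'$ with $x_i = x_i' = x_i$ and $f_i(x) = f_i(x') = y$, Definition~\ref{def:privacy} gives $\Pr[\Pi_i = T \mid R_i = r_i, X = x, R^p = r^p] = \Pr[\Pi_i = T \mid R_i = r_i, X = x', R^p = r^p]$ for all $T$; hence the conditional distribution of $\Pi_i$ given $(X_i, R_i, R^p, f_i(X)) = (x_i, r_i, r^p, y)$ is the same regardless of the value of $X_{-i}$, which is precisely conditional independence of $X_{-i}$ and $\Pi_i$ given $(X_i, R_i, R^p, f_i(X))$. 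By Proposition~\ref{indp} the conditional mutual information vanishes. For the converse: suppose the information is zero for all $\mu$; take $\mu$ to be, say, uniform on $\calX_1 \times \cdots \times \calX_k$ (or any $\mu$ with full support) so that every relevant conditioning event has positive probability. Then conditional independence via Proposition~\ref{indp} gives that $\Pr[\Pi_i = T \mid X_i = x_i, R_i = r_i, R^p = r^p, f_i(X) = y]$ is independent of $X_{-i}$, which, reading the conditioning events back in terms of full inputs as above, yields exactly the equality of transcript distributions required by Definition~\ref{def:privacy} for any $x, x'$ with $x_i = x_i'$ and $f_i(x) = f_i(x') = y$.

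The main obstacle — really the only subtle point — is the careful bookkeeping of conditioning events: one must check that conditioning on $(X_i, R_i, R^p, f_i(X))$ and then on $X_{-i}$ is genuinely the same as conditioning on the full input $X$ together with $R_i, R^p$ (using that $f_i(X)$ is a deterministic function of $X$, so it adds nothing once $X$ is fixed), and that in the definition of privacy the quantification over all $x'$ with $f_i(x') = f_i(x)$, $x_i' = x_i$ is exactly the statement that the transcript distribution depends on the input only through $(x_i, f_i(x))$. One also needs the "for all input distributions $\mu$" quantifier in the proposition precisely so that in the converse direction every value of the conditioning variables can be assumed to have positive probability (otherwise the mutual information could vanish vacuously on inputs $\mu$ never charges). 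Aside from these points, the proof is a direct application of Proposition~\ref{indp} and the nonnegativity of mutual information.
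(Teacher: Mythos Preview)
Your proposal is correct and follows essentially the same route as the paper: reduce the sum to per-player terms via nonnegativity of mutual information, and translate each term $I(X_{-i};\Pi_i\mid X_iR_iR^pf_i(X))=0$ into the transcript-distribution equality of Definition~\ref{def:privacy} using Proposition~\ref{indp}. The paper's proof is a two-line version of this; you additionally spell out the bookkeeping of conditioning events and make explicit the role of the ``for all $\mu$'' quantifier (choosing a full-support $\mu$ for the converse), which the paper leaves implicit.
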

\begin{proof}
By proposition \ref{indp}, definition~\ref{def:privacy} is equivalent to the following: 
$$\forall~i, I(X_{-i} ; \Pi_i \mid X_i R_i R^p f_i(X)) = 0~.$$
Since $I$ is non-negative, this is equivalent to 
$$\sum\limits_{i=1}^k I(X_{-i} ; \Pi_i \mid X_i R_i R^p f_i(X)) = 0~.$$
\end{proof}
It is well known that in the multi-party number-in-hand peer-to-peer setting (for $k \geq 3$), unlike in the two-party case, {\em any} function can be privately computed. 

\begin{theorem}[\cite{BGW},\cite{CCD}]
Any family of functions of more than two variables can be computed by a private protocol.
\end{theorem}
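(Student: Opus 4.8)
The plan is to instantiate the classical secret-sharing-based construction of Ben-Or, Goldwasser and Wigderson~\cite{BGW} (and Chaum, Cr\'epeau and Damg{\aa}rd~\cite{CCD}) inside our model; this is legitimate because, as noted above, every synchronous protocol is an oblivious protocol and hence a protocol in our model, so it suffices to describe a synchronous, zero-error protocol and check that it is private in the sense of Definition~\ref{def:privacy}. First I would move to an arithmetic representation: fix a finite field $\GF$ with $k<|\GF|$, encode the inputs $x_i$ and the outputs as tuples of elements of $\GF$, and fix for each $i$ an arithmetic circuit over $\GF$ (gates $+$, $\times$, and multiplication by a constant) computing $f_i$. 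Fix also distinct nonzero points $\alpha_1,\dots,\alpha_k\in\GF$, and let ``sharing $a\in\GF$'' mean: pick a uniformly random polynomial $g$ over $\GF$ of degree at most $1$ with $g(0)=a$ and give $g(\alpha_\ell)$ to player $\ell$. The threshold $1$ is exactly what the hypothesis $k\ge 3$ buys: a degree-$\le 1$ polynomial is information-theoretically hidden from any single evaluation point, while $2\cdot 1+1=3\le k$ is what is needed for the multiplication step.

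The protocol evaluates each circuit gate by gate, maintaining the invariant that every already-computed wire carries a fresh random degree-$\le 1$ sharing of its value. Each player starts by sharing its own inputs over the private channels. Addition gates and multiply-by-constant gates need no communication: each player applies the same $\GF$-linear operation to its local shares, and a $\GF$-linear combination of degree-$\le 1$ sharings is again a random degree-$\le 1$ sharing of the corresponding combination of secrets. The only interactive step is a multiplication gate with shared inputs $a,b$: player $\ell$ multiplies its two shares to get a point of a polynomial of degree $\le 2$ whose constant term is $ab$; because $2\le k-1$, this polynomial is determined by the $k$ points, so $ab$ is a fixed $\GF$-linear combination $\sum_\ell\lambda_\ell c_\ell$ of the players' values $c_\ell$ (Lagrange interpolation at $0$, with all $\lambda_\ell\ne 0$). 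Each player then re-shares its $c_\ell$ with a fresh random degree-$\le 1$ polynomial, and every player forms the same combination $\sum_\ell\lambda_\ell(\cdot)$ of the re-shared values it received; one checks this yields a fresh random degree-$\le 1$ sharing of $ab$. Finally, for each $i$, every player sends player $i$ its share of $f_i(x)$, and player $i$ interpolates, recovers $f_i(x)$, and writes it on its output tape. Correctness (with zero error) is immediate from Shamir reconstruction together with the gate invariants, and the protocol terminates for all random assignments, so it $0$-computes $f$.

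For privacy, I would build, for each player $i$, a simulator that from $(x_i,f_i(x))$ alone generates a transcript $\Pi_i$ distributed exactly as in the real execution, for every fixed $r_i,r^p$; by Proposition~\ref{indp} this is exactly what Definition~\ref{def:privacy} asks for. The messages player $i$ ever receives are: one point of each otherwise-uniform degree-$\le 1$ polynomial used to share some other player's input or to re-share some $c_\ell$, and the $k-1$ points of the final sharing of $f_i(x)$. Every message of the first kind is uniform in $\GF$ and independent of everything player $i$ already knows, so the simulator draws it uniformly; for the last batch the simulator picks a random degree-$\le 1$ polynomial with constant term $f_i(x)$ agreeing with the (already simulated) share of player $i$, and outputs its other $k-1$ values. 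A hybrid argument over the gates of the circuits shows the simulated $\Pi_i$ is identically distributed to the real one. The crux — and the only place $k\ge 3$ is truly used — is the multiplication step: verifying both that it preserves the degree-$\le 1$ invariant and that the collection of messages a single player sees across all multiplication gates is a deterministic function of that player's own shares and fresh independent randomness, revealing nothing about the secrets. Everything else is bookkeeping, and the fact that the resulting communication pattern is fixed makes the embedding into our model automatic.
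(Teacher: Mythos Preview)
The paper does not prove this statement; it quotes it as a known result of~\cite{BGW,CCD} and uses it as a black box (immediately afterwards, to derive Lemma~\ref{p1}). Your proposal therefore goes strictly beyond what the paper does: you sketch the actual BGW construction --- Shamir sharing at threshold~$1$, local linear operations, multiply-then-re-share for multiplication gates, and a per-player perfect simulator --- and you correctly note that the resulting synchronous protocol is oblivious and hence lives in the paper's model. This is exactly what the citations point to, and the outline is sound.

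The only mild imprecision is your stated invariant that every wire carries a \emph{fresh} random degree-$\le 1$ sharing: after a linear gate the linear coefficient is a fixed $\GF$-linear combination of earlier coefficients, not independent new randomness. This does no harm, because linear gates are communication-free and your simulator for the final reveal works regardless --- the other $k-1$ output shares are uniquely determined by player~$i$'s own share together with $f_i(x)$, whether or not the linear coefficient is ``fresh''. So there is no gap, and there is simply no paper-side proof to compare against.
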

Using the above theorem, we can give the following lemma.
\begin{lemma}\label{p1}
For any family of functions $(f_i)$ of more than two variables and any 
$\mu$, \\$\IC_\mu(f) \le \sum\limits_{i=1}^k H(f_i(X)),$ where $X$ is distributed according to $\mu$.
\end{lemma}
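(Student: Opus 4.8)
The plan is to use the BGW/CCD theorem (just cited) to obtain a private protocol $\pi$ computing the family $(f_i)$, and then bound the internal information cost of $\pi$ term by term. For each player $i$, we want to show $I(X_{-i} ; \Pi_i \mid X_i R_i R^p) \le H(f_i(X))$. The natural idea is to insert $f_i(X)$ into the conditioning via the chain rule: write
\begin{align*}
I(X_{-i} ; \Pi_i \mid X_i R_i R^p) &\le I(X_{-i} ; \Pi_i f_i(X) \mid X_i R_i R^p) \\
&= I(X_{-i} ; f_i(X) \mid X_i R_i R^p) + I(X_{-i} ; \Pi_i \mid X_i R_i R^p f_i(X)).
\end{align*}
By Proposition~\ref{p05} (privacy of $\pi$), the second summand vanishes for every input distribution $\mu$, so it vanishes here in particular. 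The first summand is at most $H(f_i(X) \mid X_i R_i R^p) \le H(f_i(X))$, using the standard bound $I(A;B\mid C) \le H(B \mid C)$ together with Proposition~\ref{condi} (conditioning does not increase entropy), and noting $f_i(X)$ is independent of $R_i R^p$ so conditioning on them is harmless. Summing over $i$ and taking the infimum over protocols computing $f$ gives $\IC_\mu(f) \le \IC_\mu(\pi) \le \sum_i H(f_i(X))$.

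A small subtlety: Proposition~\ref{p05} states privacy as a single sum $\sum_i I(X_{-i};\Pi_i\mid X_iR_iR^pf_i(X)) = 0$; since each term is non-negative, each individual term is zero, which is what we need above. Another point worth stating carefully is that the BGW/CCD private protocol computes the family exactly (zero error), so it is a legitimate protocol "computing $f$" in the sense of the infimum defining $\IC_\mu(f)$; this is fine since the theorem is stated for exact computation in the synchronous model, which (as the excerpt notes) embeds into the present model as an oblivious protocol.

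I do not anticipate a real obstacle here — the argument is a short chain-rule manipulation. The only thing to be careful about is the bookkeeping of the conditioning variables $X_i R_i R^p$ (making sure the "independence of $f_i(X)$ from the randomness" step is legitimate, which it is because the randomness is generated independently of the inputs), and making sure we invoke privacy in the information-theoretic form of Proposition~\ref{p05} rather than Definition~\ref{def:privacy} directly. So the write-up is essentially: fix the private protocol, apply the displayed two-line chain-rule inequality for each $i$, kill the privacy term, bound the remaining mutual information by the entropy $H(f_i(X))$, and sum.
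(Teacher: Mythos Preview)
Your proposal is correct and follows essentially the same argument as the paper: take a private protocol from BGW/CCD, expand $I(X_{-i};\Pi_i\mid X_iR_iR^p)$ via the chain rule with $f_i(X)$ inserted, kill the second term by Proposition~\ref{p05}, and bound the first by $H(f_i(X))$. The paper's proof is line-for-line the same computation, just written directly as a sum over $i$ rather than term by term.
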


\begin{proof}
Let $\pi$ be a $k$-player private protocol computing $(f_i)$. Fix a distribution $\mu$ on the inputs.
\begin{align*}
\IC_{\mu}(\pi) &= \sum\limits_{i=1}^k I(X_{-i} ; \Pi_i \mid X_i R_i R^p)\\
&\le \sum\limits_{i=1}^k I(X_{-i} ; \Pi_i f_i(X) \mid X_i R_i R^p)\\
&= \sum\limits_{i=1}^k \left[I(X_{-i} ; f_i(X) \mid X_i R_i R^p) + I(X_{-i} ; \Pi_i \mid X_i R_i R^p f_i(X))\right]\\
&= \sum\limits_{i=1}^k I(X_{-i} ; f_i(X) \mid X_i R_i R^p)\\
&\le \sum\limits_{i=1}^k H(f_i(X)).
\end{align*}
Now, $\IC_\mu(f) \le \IC_{\mu}(\pi) \le \sum\limits_{i=1}^k H(f_i(X))$.

\end{proof}
This lemma shows that $\IC$ cannot be used in the multi-party setting for any meaningful lower bounds on the communication complexity, since its value is always upper bounded by the entropies of the functions. 
Our goal is to get lower bounds tight in both $k$ and $n$. For this reason, we introduce a new information-theoretic quantity for the multi-party setting.

\section{The new measure: Public Information Cost}
\label{sec:pic}
We now introduce a new information theoretic quantity which can be used instead of $\IC$ in the multi-party setting.
The notion we define will be suitable for studying multi-party communication in a model which is only a slight restriction on the general asynchronous model, and which allows for private protocols. This means that while $\IC$ will be at most the entropies of the functions, our new notion remains a strong lower bound for communication. 
\begin{definition}
For any $k$-player protocol $\pi$ and any input distribution $\mu$, we define the \textit{public information cost} of $\pi$:
$$\PIC_\mu(\pi) = \sum\limits_{i=1}^k I(X_{-i} ; \Pi_i R_{-i} \mid X_i R_i R^p).$$
\end{definition}

The difference between the definition of $\PIC$ and that of $\IC$ is the presence of the other parties' private randomness, $R_{-i}$, in the formula. Thus, if $\pi$ is a protocol using only public randomness, then for any input distribution $\mu$, $\PIC_\mu(\pi) = \IC_\mu(\pi)$, and hence the name ``public information cost''.

Informally speaking, the public information cost measures both the information about the inputs learned by the players and the information that is hidden by the use of private coins. $\PIC$ can be decomposed, using the chain rule, into two terms, making explicit the contribution of the internal information cost and that of the private 
randomness of the players.

\begin{proposition}\label{p2}
For any $k$-player protocol $\pi$ and any input distribution $\mu$,
$$\PIC_\mu(\pi) = \IC_\mu(\pi) + \sum\limits_{i=1}^k I(R_{-i} ; X_{-i} | X_i \Pi_i R_i R^p).$$
\end{proposition}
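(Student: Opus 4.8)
The plan is to establish the identity term by term: for each player $i$, I want to show that
$$I(X_{-i} ; \Pi_i R_{-i} \mid X_i R_i R^p) = I(X_{-i} ; \Pi_i \mid X_i R_i R^p) + I(R_{-i} ; X_{-i} \mid X_i \Pi_i R_i R^p),$$
and then sum over $i$ from $1$ to $k$. Summing the left-hand sides gives $\PIC_\mu(\pi)$ by definition, and summing the first terms on the right gives $\IC_\mu(\pi)$ by definition, which yields the claimed decomposition.

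The per-player identity is a direct application of the chain rule for mutual information (Proposition~\ref{CR}). First I would apply it with the joint variable $\Pi_i R_{-i}$ in the second argument:
\begin{align*}
I(X_{-i} ; \Pi_i R_{-i} \mid X_i R_i R^p) &= I(X_{-i} ; \Pi_i \mid X_i R_i R^p) + I(X_{-i} ; R_{-i} \mid X_i R_i R^p \Pi_i).
\end{align*}
The second term equals $I(R_{-i} ; X_{-i} \mid X_i \Pi_i R_i R^p)$ by symmetry of conditional mutual information (and the fact that conditioning on the same set of variables $X_i \Pi_i R_i R^p$ is just a reordering). This matches exactly the summand in the statement, so we are done.

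Strictly speaking, to invoke Proposition~\ref{CR} as stated (which is phrased with four single random variables $A,B,C,D$), I would note that it applies verbatim when each symbol stands for a tuple of random variables — here $A = X_{-i}$, $B$ and $C$ play the roles of $\Pi_i$ and $R_{-i}$, and $D = X_i R_i R^p$ — since mutual information and the chain rule are defined for arbitrary (possibly vector-valued) discrete random variables. There is essentially no obstacle here: the only point requiring minimal care is making sure the conditioning sets line up correctly after applying the chain rule, i.e., that the "extra" variable $\Pi_i$ ends up on the conditioning side of the second term and that the order of variables in a conditioning set is immaterial. Both are immediate from the definitions in Section~\ref{sec:preliminaries}.
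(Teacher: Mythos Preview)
Your proposal is correct and takes essentially the same approach as the paper, which simply remarks that the decomposition follows from the chain rule. Your per-player application of Proposition~\ref{CR} to $I(X_{-i};\Pi_i R_{-i}\mid X_i R_i R^p)$, followed by summing over $i$, is exactly what is intended.
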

A possible intuitive meaning of the second term could be the following.
 At the end of the protocol, player $i$ knows its input $X_i$, its private coins $R_i$, the public coins $R^p$ and its transcript $\Pi_i$. Suppose that the private randomness $R_{-i}$ of the other players is now revealed to player $i$. This brings to that player some new information, quantified by $I(R_{-i} ; X_{-i} | X_i \Pi_i R_iR^p)$, about the inputs $X_{-i}$ of the other players.

We also define the public information complexity of a function, given error probability $\epsilon$.
In the sequel, when clear from the context, we sometimes omit  $\epsilon$.
\begin{definition}
For any function $f$, $\epsilon \geq 0$,  and any input distribution $\mu$, we define the quantity
$$\PIC^{\epsilon}_\mu(f) =  \inf\limits_{\pi~\text{ $\epsilon$-computing }f} \PIC_\mu(\pi)~.$$
\end{definition}
\begin{definition}
For any $f$, we define the quantity
$$\PIC^{\epsilon}(f) =  \inf\limits_{\pi~\text{$\epsilon$- computing }f} \sup\limits_{\mu}~\PIC_\mu(\pi)~.$$
\end{definition}
The public information cost is a lower bound on the communication complexity.

\begin{proposition}\label{p3}
For any protocol $\pi$ and input distribution $\mu$, $\CC(\pi) \ge \PIC_\mu(\pi)$. 
Thus, for any function $f$, $\CC(f) \ge \PIC(f)$.
\end{proposition}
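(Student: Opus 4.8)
The plan is to prove that $\CC(\pi) \ge \PIC_\mu(\pi)$ for every protocol $\pi$ and every input distribution $\mu$, and then deduce the statement about $\CC(f)$ and $\PIC(f)$ by taking suprema and infima appropriately. For the main inequality, I would start from the definition
\[
\PIC_\mu(\pi) = \sum_{i=1}^k I(X_{-i} ; \Pi_i R_{-i} \mid X_i R_i R^p).
\]
The first step is to bound each summand. I would like to absorb $R_{-i}$ into the transcript side: intuitively, conditioned on $X_i R_i R^p$, the pair $(X_{-i}, R_{-i})$ carries no more mutual information with $(\Pi_i, R_{-i})$ than the transcript $\Pi_i$ does with the full environment. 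Concretely, I would argue $I(X_{-i} ; \Pi_i R_{-i} \mid X_i R_i R^p) \le I(X_{-i} R_{-i} ; \Pi_i R_{-i} \mid X_i R_i R^p)$ (monotonicity, since $X_{-i}$ is part of $X_{-i}R_{-i}$), and then expand the right-hand side with the chain rule. Actually the cleaner route: use the chain rule to write $I(X_{-i} ; \Pi_i R_{-i} \mid X_i R_i R^p) = I(X_{-i}; R_{-i}\mid X_iR_iR^p) + I(X_{-i};\Pi_i\mid X_iR_iR^pR_{-i})$. The first term vanishes because all the random sources and inputs are mutually independent (the inputs are distributed according to $\mu$, the private randomness $R_{-i}$ is generated independently, and $R^p$ independently as well), so $I(X_{-i};R_{-i}\mid X_iR_iR^p)=0$ by Proposition~\ref{indp}. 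Hence each summand equals $I(X_{-i};\Pi_i\mid X_iR_iR^pR_{-i})$.

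Next I would bound $I(X_{-i};\Pi_i\mid X_iR_iR^pR_{-i}) \le H(\Pi_i)$, using $I(A;B\mid C)\le H(B\mid C)\le H(B)$ (the first step because mutual information is at most the conditional entropy of either argument, the second by Proposition~\ref{condi}). Then the crucial point is to relate $\sum_i H(\Pi_i)$ to $\CC(\pi)$. Recall $\Pi_i$ is the concatenation of messages \emph{received} by player $i$; every message counted in some $\Pi_i$ is also a message \emph{sent} by the player on the other end, and each transmitted bit is received by exactly one player, so the total length $\sum_i |\Pi_i|$ equals the total number of bits transmitted in the execution, which is at most $\CC(\pi)$. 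Since each $\Pi_i$ ranges over a prefix-free set (messages are self-delimiting, and they are concatenated in a canonical order determined by the protocol structure, so the whole string $\Pi_i$ is self-delimiting), Proposition~\ref{H b} gives $H(\Pi_i)\le \Exp[|\Pi_i|]$. Summing, $\sum_i H(\Pi_i) \le \sum_i \Exp[|\Pi_i|] = \Exp[\sum_i |\Pi_i|] \le \CC(\pi)$, where the last step uses that $\sum_i|\Pi_i|$ never exceeds the maximal transcript length. Combining the three steps yields $\PIC_\mu(\pi)\le \CC(\pi)$.

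For the ``thus'' part: $\CC(\pi)$ does not depend on $\mu$, so from $\PIC_\mu(\pi)\le \CC(\pi)$ for all $\mu$ we get $\sup_\mu \PIC_\mu(\pi) \le \CC(\pi)$. Taking the infimum over all protocols $\pi$ computing $f$ on both sides, $\PIC(f) = \inf_\pi \sup_\mu \PIC_\mu(\pi) \le \inf_\pi \CC(\pi) = \CC(f)$.

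I expect the main obstacle to be the bookkeeping in the step $\sum_i H(\Pi_i)\le \CC(\pi)$: one must be careful that each physical bit on a link is charged to exactly one $\Pi_i$ (the receiver's), that the concatenation defining $\Pi_i$ is genuinely prefix-free so that Proposition~\ref{H b} applies, and that ``$\CC(\pi)=$ maximal transcript length'' dominates the expectation of $\sum_i|\Pi_i|$ over the randomness — none of these is deep, but getting the prefix-freeness argument right (it relies essentially on the model's restriction that the sets $S_i^j$ are determined by the local view, so that the decomposition of $\Pi_i$ into rounds and senders is itself recoverable from $\Pi_i$) is the one place where the specific model assumptions enter.
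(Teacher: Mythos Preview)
Your proof is correct and follows essentially the same route as the paper: apply the chain rule to peel off $R_{-i}$ (whose contribution vanishes by independence of inputs and randomness), bound the remaining $I(X_{-i};\Pi_i\mid X_i R R^p)$ by $H(\Pi_i)$, and then invoke Proposition~\ref{H b} together with $\sum_i |\Pi_i| = |\Pi| \le \CC(\pi)$. Your discussion of why $\Pi_i$ is prefix-free (via the model's restriction that the sets $S_i^j$ are determined by the local view) is in fact more explicit than the paper's, which simply appeals to Proposition~\ref{H b} at that step without elaboration.
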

\begin{proof}
\begin{align*}
\PIC_\mu(\pi) &= \sum\limits_{i=1}^k I(X_{-i} ; R_{-i} \mid X_i R_i R^p) + I(X_{-i} ; \Pi_i \mid X_i R R^p)\text{~~~(by the chain rule)}\\
&=\sum\limits_{i=1}^k I(X_{-i} ; \Pi_i \mid X_i R R^p)\text{~~~(since the first term is $0$ by Proposition~\ref{indp})}\\
&=\sum\limits_{i=1}^k H(\Pi_i \mid X_i R R^p)\text{~~~(since $H(\Pi_i \mid X R R^p)=0$ )}\\
&\le \sum\limits_{i=1}^k H(\Pi_i)\text{~~~(by Proposition~\ref{condi})}~.
\end{align*}
Using Proposition~\ref{H b}, for each $i$, $H(\Pi_i)$ is upper bounded by the expected size of $\Pi_i$. As the expected size of $\Pi$ is upper bounded by the sum over $i$ of the expected size of $\Pi_i$, we get $\CC(\pi) \ge \PIC_\mu(\pi)$.

\end{proof}
In fact, as we show below, the public information cost of a function is equal to its  information cost (IC) in a setting where only public randomness is allowed. The role of private coins in communication protocols has been studied for example in \cite{BG,BBKLS,Koz}. In the next section we will see that the difference between the public information cost and the information cost is related to the private coins used during the protocol.

\begin{theorem}
\label{th:protocol_min_on_pub}
For any protocol $\pi$  there exists a public coin protocol $\pi'$ such that for all input distributions 
$\mu$, $\PIC_\mu(\pi') = \PIC_\mu(\pi)$. If $\pi$ is oblivious then so is $\pi'$.
\end{theorem}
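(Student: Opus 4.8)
The plan is to convert the private randomness of $\pi$ into public randomness without changing any of the quantities $I(X_{-i}; \Pi_i R_{-i} \mid X_i R_i R^p)$. The natural candidate for $\pi'$ is the protocol in which all players read the \emph{same} public random string, but parsed into disjoint blocks: $R^p$ is (re)interpreted as $(R^p, R_1, \ldots, R_k)$, where $R_i$ plays the role of player $i$'s former private tape and $R^p$ its former public tape. Player $i$ in $\pi'$ simply ignores the blocks $R_{-i}$ and runs exactly as in $\pi$ using $(X_i, R_i, R^p)$. Since the communication pattern functions $\overline{S_i^j}, S_i^j$ and the message functions depend only on player $i$'s view, which is unchanged, $\pi'$ is a valid protocol in our model; it $\epsilon$-computes $f$ whenever $\pi$ does; and if the sets $\overline{S_i^j}, S_i^j$ depended only on $(i,j)$ in $\pi$ (oblivious), the same holds in $\pi'$. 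The only real content is the equality of the public information costs.

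First I would fix an input distribution $\mu$ and compare term by term. In $\pi'$ the $i$-th term of $\PIC_\mu(\pi')$ is $I(X_{-i}; \Pi_i' R'_{-i} \mid X_i R'_i R'^p)$, where now $R'^p$ is the ``public part'' and $R'_i, R'_{-i}$ are the designated blocks of public randomness. The key observation is that the \emph{joint distribution} of the tuple $(X, \Pi_i, R_1, \ldots, R_k, R^p)$ is identical in $\pi$ and in $\pi'$: in both cases $X \sim \mu$, the strings $R^p, R_1, \ldots, R_k$ are mutually independent and independent of $X$ (with the same marginals, since we chose the block lengths to match), and conditioned on $(X_i, R_i, R^p)$ the transcript $\Pi_i$ is produced by the same stochastic map. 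Therefore every mutual-information expression in the two sums is literally the same number, and $\PIC_\mu(\pi') = \PIC_\mu(\pi)$ for every $\mu$. I would write this out via Proposition~\ref{indp}-style independence bookkeeping rather than any inequality, so there is no loss anywhere.

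A subtlety to handle carefully is that in $\pi'$ player $i$ ``has access to'' $R'_{-i}$ as part of the public coins, even though it ignores those blocks; one must check this does not break validity (it does not — a player is always free to ignore part of its view) and, more importantly, does not change the conditioning sets in the formula, because $\PIC$ conditions the $i$-th term only on $X_i R_i R^p$, not on $R_{-i}$. So the bookkeeping of ``which block is whose private tape'' is purely notational. The second subtlety is finiteness/termination: since $\pi$ halts in finite time for every assignment of all random tapes, and $\pi'$ on each fixed value of $(R^p, R_1, \ldots, R_k)$ behaves exactly as $\pi$ on the corresponding assignment, $\pi'$ also halts in finite time, so it is a legal protocol computing $f$ with the same error. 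The main obstacle, such as it is, is simply being scrupulous that no information term silently changes when private coins are relabelled as public — i.e. that the identity of the joint distribution is exact — but once that is set up, the equality is immediate rather than an inequality, which is why the statement can assert $\PIC_\mu(\pi') = \PIC_\mu(\pi)$ and not merely $\le$.
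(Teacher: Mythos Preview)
Your construction of $\pi'$ is exactly the paper's, and the observation that the joint law of $(X,\Pi_i,R_1,\ldots,R_k,R^p)$ is unchanged is correct and useful. However, there is a genuine slip in how you compute $\PIC_\mu(\pi')$. You write the $i$-th term as $I(X_{-i};\Pi'_i R'_{-i}\mid X_i R'_i R'^p)$, treating the blocks $R'_i,R'_{-i},R'^p$ as if they occupied the same slots in the $\PIC$ formula as $R_i,R_{-i},R^p$ did for $\pi$. But $\pi'$ is a \emph{public coin} protocol: its private randomness is empty and its public randomness is the \emph{entire} tape $(R'^p,R'_1,\ldots,R'_k)$. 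Hence, by the definition of $\PIC$, the $i$-th term of $\PIC_\mu(\pi')$ is
\[
I\bigl(X_{-i};\Pi'_i \,\big|\, X_i,\ R'^p R'_1\cdots R'_k\bigr),
\]
with $R'_{-i}$ in the \emph{conditioning}, not in the revealed slot. Your remark that ``$\PIC$ conditions the $i$-th term only on $X_i R_i R^p$, not on $R_{-i}$'' shows this is a genuine confusion rather than a shorthand: once the coins are public, they are all part of $R^p$ in the formula.

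The fix is precisely the chain-rule step the paper carries out: starting from the correct expression $I(X_{-i};\Pi'_i\mid X_i R R^p)$, add and subtract $R_{-i}$ via
\[
I(X_{-i};\Pi'_i R_{-i}\mid X_i R_i R^p)\;-\;I(X_{-i};R_{-i}\mid X_i R_i R^p),
\]
note the second term vanishes by independence of the random tapes from the inputs, and then invoke your joint-distribution observation to identify the first term with the $i$-th summand of $\PIC_\mu(\pi)$. With that one step inserted, your argument is complete and coincides with the paper's.
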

\begin{proof}
Given an an arbitrary protocol $\pi$, we build a public coin protocol $\pi'$ as follows.\\
Let $R'^{p}$ denote the public random tape of $\pi'$. We consider $R'^{p}$  as being composed of $k+1$ parts, one to be used
as the public random tape of $\pi$, and the $k$ other parts as the $k$ private random tapes, in $\pi$,  of the $k$ players.
This can be done by interleaving the $k+1$ tapes bit by bit on $R'^{p}$. We denote the $k+1$ resulting tapes as
$R^{p}$ and $R_i$, $1 \leq i \leq k$. Protocol $\pi'$ is then defined as running protocol $\pi$, when the players use the
corresponding ``public  random tape''  $R^{p}$ and ``private random tapes'' $R_i$, as defined for $\pi$.  
Observe that the transcripts of $\pi'$ and $\pi'$ are therefore identical. Observe  also that if $\pi$ is oblivious, so is $\pi'$. Let $R$ denote $(R_i)$.
We have, 
\begin{align*}
\PIC_\mu(\pi') 
&= \sum\limits_{i=1}^k I(X_{-i} ; \Pi'_i \mid X_i  R'^p)\\
&= \sum\limits_{i=1}^k I(X_{-i} ; \Pi'_i \mid X_i  R R^p) \\
&= \sum\limits_{i=1}^k [I(X_{-i} ; \Pi'_i R_{-i} \mid X_i R_i R^p) -  I(X_{-i} ; R_{-i} \mid X_i R_i R^p)] \text{~~(chain rule)}\\
&= \sum\limits_{i=1}^k I(X_{-i} ; \Pi'_i R_{-i} \mid X_i  R_i R^p)\text{~~(since the second term equals $0$)}\\
&= \sum\limits_{i=1}^k I(X_{-i} ; \Pi_i R_{-i} \mid X_i  R_i R^p)\text{~~(since the transcripts of $\pi'$ and of $\pi$ are identical)}\\
&= \PIC_\mu(\pi)~.
\end{align*}
\end{proof}

The next theorem is a direct consequence of Theorem~\ref{th:protocol_min_on_pub}.
\begin{theorem}\label{min on pub}
For any function $f$ and input distribution $\mu$,
$$\PIC_\mu(f) =  \inf\limits_{\pi\text{ computing }f\text{, using only public coins}} \IC_\mu(\pi)$$
and
$$\PIC(f) =  \inf\limits_{\pi\text{ computing }f\text{, using only public coins}} \sup\limits_{\mu}~\IC_\mu(\pi)~.$$
\end{theorem}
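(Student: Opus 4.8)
The plan is to derive Theorem~\ref{min on pub} as an essentially immediate corollary of Theorem~\ref{th:protocol_min_on_pub}, together with the elementary observation (noted right after the definition of $\PIC$) that for a public-coin protocol $\pi$ one has $\PIC_\mu(\pi) = \IC_\mu(\pi)$ for every $\mu$. The argument is a two-sided inequality on infima. For the first identity, I would first show $\PIC_\mu(f) \le \inf_{\pi \text{ public}} \IC_\mu(\pi)$: any public-coin protocol $\pi$ computing $f$ has $R_{-i}$ empty (or constant), so $I(X_{-i};\Pi_i R_{-i}\mid X_i R_i R^p) = I(X_{-i};\Pi_i \mid X_i R^p)$, whence $\PIC_\mu(\pi) = \IC_\mu(\pi)$; taking infimum over public-coin protocols computing $f$ and noting these are a subset of all protocols computing $f$ gives the bound.

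For the reverse inequality $\PIC_\mu(f) \ge \inf_{\pi \text{ public}} \IC_\mu(\pi)$, I would take an arbitrary protocol $\pi$ computing $f$, invoke Theorem~\ref{th:protocol_min_on_pub} to obtain a public-coin protocol $\pi'$ computing $f$ with $\PIC_\mu(\pi') = \PIC_\mu(\pi)$ for all $\mu$, and then use $\PIC_\mu(\pi') = \IC_\mu(\pi')$ since $\pi'$ is public-coin. Hence for every protocol $\pi$ computing $f$ there is a public-coin protocol $\pi'$ computing $f$ with $\IC_\mu(\pi') = \PIC_\mu(\pi)$, so $\inf_{\pi \text{ public}} \IC_\mu(\pi) \le \PIC_\mu(\pi)$; taking the infimum over all $\pi$ computing $f$ yields $\inf_{\pi \text{ public}} \IC_\mu(\pi) \le \PIC_\mu(f)$. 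Combining the two directions gives the first displayed equality.

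For the second identity, involving $\sup_\mu$, I would run the same two-sided argument but carry the $\sup_\mu$ through. The key point is that the $\pi'$ produced by Theorem~\ref{th:protocol_min_on_pub} satisfies $\PIC_\mu(\pi') = \PIC_\mu(\pi)$ \emph{simultaneously for all} $\mu$ (the theorem is stated with ``for all input distributions $\mu$''), and that $\pi'$ depends only on $\pi$, not on $\mu$. Therefore $\sup_\mu \IC_\mu(\pi') = \sup_\mu \PIC_\mu(\pi') = \sup_\mu \PIC_\mu(\pi)$ for this fixed $\pi'$; taking infimum over $\pi$ on the right and noting that the resulting $\pi'$ ranges within (a subset of) public-coin protocols computing $f$ gives $\inf_{\pi \text{ public}} \sup_\mu \IC_\mu(\pi) \le \inf_\pi \sup_\mu \PIC_\mu(\pi) = \PIC(f)$. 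The reverse direction again uses that public-coin protocols are a subclass and that $\PIC_\mu = \IC_\mu$ on them, giving $\PIC(f) \le \inf_{\pi \text{ public}} \sup_\mu \IC_\mu(\pi)$.

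There is essentially no hard step here: the content is entirely front-loaded into Theorem~\ref{th:protocol_min_on_pub}, whose proof is the substantive one (re-encoding private tapes as extra slices of the public tape). The only point requiring a little care is the bookkeeping of the quantifiers in the $\sup_\mu$ statement — one must make sure the transformation $\pi \mapsto \pi'$ is chosen independently of $\mu$, which is exactly what the ``for all input distributions'' clause in Theorem~\ref{th:protocol_min_on_pub} guarantees — and the trivial but necessary remark that the class of public-coin protocols computing $f$ is nonempty and is contained in the class of all protocols computing $f$, so the infimum over the smaller class dominates.
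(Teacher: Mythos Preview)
Your proposal is correct and matches the paper's own approach: the paper simply states that Theorem~\ref{min on pub} is a direct consequence of Theorem~\ref{th:protocol_min_on_pub}, and your two-sided-inequality argument is precisely the routine unpacking of that consequence. Your attention to the fact that the transformation $\pi \mapsto \pi'$ in Theorem~\ref{th:protocol_min_on_pub} is uniform in $\mu$ is exactly the (only) point of care needed for the $\sup_\mu$ version.
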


The following property of the public information cost will be useful for zero-error protocols. 
\begin{proposition}\label{min on det}
For any function $f$, for any input distribution $\mu$, $\PIC^0_\mu(f) = \IC_\mu^{\text{det}}(f)$ where
$$\IC_\mu^{\text{det}}(f) =  \inf\limits_{\pi\text{ deterministic protocol computing }f} \IC_\mu(\pi)~.$$
\end{proposition}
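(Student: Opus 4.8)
The plan is to prove the two inequalities $\PIC^0_\mu(f) \le \IC_\mu^{\text{det}}(f)$ and $\PIC^0_\mu(f) \ge \IC_\mu^{\text{det}}(f)$ separately, using Theorem~\ref{min on pub} (with $\epsilon=0$) as the bridge, which says $\PIC^0_\mu(f) = \inf_{\pi}\IC_\mu(\pi)$ over zero-error public-coin protocols $\pi$ computing $f$.

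For the easy direction $\PIC^0_\mu(f) \le \IC_\mu^{\text{det}}(f)$: a deterministic protocol computing $f$ with zero error is in particular a zero-error public-coin protocol (one that simply ignores the public coins), so the infimum defining $\PIC^0_\mu(f)$ via Theorem~\ref{min on pub} is taken over a superset, and in fact $\IC_\mu(\pi)$ of a deterministic $\pi$ is exactly $\IC_\mu(\pi)$ viewed as a public-coin protocol. Hence $\PIC^0_\mu(f) \le \inf_{\pi\text{ det.}}\IC_\mu(\pi) = \IC_\mu^{\text{det}}(f)$.

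For the reverse direction $\PIC^0_\mu(f) \ge \IC_\mu^{\text{det}}(f)$, I would take an arbitrary zero-error public-coin protocol $\pi$ and show that fixing the public randomness to a well-chosen value yields a deterministic zero-error protocol $\pi_{r}$ with $\IC_\mu(\pi_r) \le \IC_\mu(\pi)$, so that $\IC_\mu^{\text{det}}(f) \le \IC_\mu(\pi)$; taking the infimum over $\pi$ then gives the claim via Theorem~\ref{min on pub}. The key point is that $\IC_\mu(\pi) = \sum_i I(X_{-i};\Pi_i \mid X_i R_i R^p)$, and since $\pi$ is public-coin there is no private randomness $R_i$ to worry about, so $\IC_\mu(\pi) = \Exp_{r^p}\left[\sum_i I(X_{-i};\Pi_i \mid X_i, R^p=r^p)\right] = \Exp_{r^p}[\IC_\mu(\pi_{r^p})]$, where $\pi_{r^p}$ is the deterministic protocol obtained by hardwiring $R^p=r^p$. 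By averaging, there exists a value $r^p$ with $\IC_\mu(\pi_{r^p}) \le \IC_\mu(\pi)$. The one subtlety to check is that zero error is preserved: if $\pi$ has error $0$, then it is correct for \emph{every} assignment of the public coins (error $0$ means the bad event has probability $0$ over all random sources, and since there is no private randomness the probability is over $R^p$ alone, forcing correctness for every $r^p$), so each $\pi_{r^p}$ is a legitimate deterministic zero-error protocol for $f$; and obliviousness/termination properties carry over trivially.

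The main obstacle, such as it is, is purely bookkeeping: one must be careful that hardwiring the public coins of a public-coin protocol in our asynchronous model does in fact produce a valid \emph{deterministic} protocol in the model (the functions $\overline{S_i^j}, m_{i,q}^j, O_i^j, S_i^j$ just get $R^p=r^p$ substituted into their first argument, which is fine), and that in the public-coin case the decomposition of $\IC_\mu(\pi)$ as an expectation over $R^p$ of deterministic information costs is legitimate — this uses that $\Pi_i$ conditioned on $R^p=r^p$ is a deterministic function of $X$, and that conditional mutual information conditioned on $R^p$ is by definition the average over $r^p$ of the conditional mutual information given $R^p = r^p$. Neither step is hard, so this proof is essentially a short combination of Theorem~\ref{min on pub} with an averaging argument over the public random string.
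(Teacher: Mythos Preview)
Your proposal is correct and follows essentially the same approach as the paper: reduce to public-coin protocols via Theorem~\ref{min on pub}, then fix the public randomness by an averaging argument to obtain a deterministic zero-error protocol with no larger information cost. The only cosmetic differences are that you spell out the easy direction $\PIC^0_\mu(f)\le \IC_\mu^{\text{det}}(f)$ (the paper leaves it implicit) and you pick $r^p$ with $\IC_\mu(\pi_{r^p})\le \IC_\mu(\pi)$ directly, whereas the paper carries an extra $\delta/2$ slack through an $\epsilon$--$\delta$ formulation; neither changes the substance of the argument.
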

\begin{proof}
Let $\delta > 0$ be arbitrary.  To prove the claim we show that there exists a deterministic protocol 
computing $f$,  $\pi^0$, such that $\IC_\mu(\pi^0) \leq \PIC^0_\mu(f) + \delta$.

Let $\pi$ be a zero-error protocol for $f$ such that ${\PIC_\mu(\pi) \le \PIC^0_\mu(f) + \frac{\delta}{2}}$. 
By Theorem \ref{min on pub}, one can assume that $\pi$ has no private randomness.
\begin{align*}
\IC_\mu(\pi) &= \sum\limits_{i=1}^k I(X_{-i} ; \Pi_i \mid X_i R^p) \\
&= \sum\limits_{i=1}^k \Exp\limits_{r} \left[I(X_{-i} ; \Pi_i \mid X_i, R^p=r)\right] \\
&= \Exp\limits_{r} \left[\sum\limits_{i=1}^k I(X_{-i} ; \Pi_i \mid X_i, R^p=r)\right]~.
\end{align*}
Letting $t(r) = \sum\limits_{i=1}^k I(X_{-i} ; \Pi_i \mid X_i, R^p=r)$, it holds that $\IC_\mu(\pi) = \Exp\limits_r \left[ t(r) \right]$.
Let $r_0$ be a value of the public random tape such that $t(r_0) \le \IC_\mu(\pi) + \frac{\delta}{2}$
and define $\pi^0$ to be  the  protocol operating like $\pi$ when  the random tape is $r_0$. 
Note that $\pi^0$ is a deterministic (zero-error) protocol computing $f$.
\begin{align*}
\PIC_\mu(\pi^0) & = \IC_\mu(\pi^0) \\
&= \sum\limits_{i=1}^k I(X_{-i} ; \Pi^0_i \mid X_i) \\
&= \sum\limits_{i=1}^k I(X_{-i} ; \Pi_i \mid X_i R = r_0) \\
&= t(r_0) \\
&\le \IC_\mu(\pi) + \frac{\delta}{2} \\
& \leq \PIC_\mu(\pi) + \frac{\delta}{2} \\
&\le \PIC^0_\mu(f) + \delta~.
\end{align*}
$\delta$ being arbitrary, this concludes the proof.
\end{proof}

We now observe that 
$\PIC$ and $\IC$ are strictly different even in the two-party case. 
We prove below that for the $\AND$ function, the public information cost is $\log (3)\simeq 1.58$, while, as shown in \cite{BGPW}, $\IC^0(\AND) \simeq 1.49 $. This implies that the protocol that achieves the optimal information cost for $\AND$ must use private coins. We remark  that in \cite{BGPW} it is shown that the external information cost of $\AND$, that we do not consider here, is $\log(3)$.
\begin{proposition}
For two players, $\PIC^0(\AND) = \log_23\simeq 1.58$.
\end{proposition}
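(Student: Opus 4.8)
The plan is to compute $\PIC^0(\AND)$ by establishing matching upper and lower bounds, both equal to $\log_2 3$. By Theorem~\ref{min on pub} and Proposition~\ref{min on det}, since we work at zero error, it suffices to work with the quantity $\inf_{\pi \text{ deterministic}} \sup_\mu \IC_\mu(\pi)$ over deterministic protocols for $\AND$ (taking a bit of care to move between the ``function'' definition with the $\sup_\mu$ inside and the per-distribution statements; the relevant identity is $\PIC^0(\AND) = \inf_{\pi^0 \text{ det.}} \sup_\mu \IC_\mu(\pi^0)$). So the two-party $\PIC^0$ of $\AND$ is exactly the best-over-deterministic-protocols, worst-over-distributions internal information cost.

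For the upper bound $\PIC^0(\AND) \le \log_2 3$, I would exhibit an explicit deterministic zero-error protocol for $\AND$ and bound its information cost under an arbitrary distribution $\mu$ on $\{0,1\}^2$. The natural candidate is: Alice sends her bit $x$ to Bob; if $x = 0$ Bob outputs $0$ and sends $0$ back; if $x = 1$ Bob sends $y$ back and both output $x \wedge y$ (one has to arrange that both players learn the answer and that the protocol is in our model, with each message self-delimiting). The transcript seen by each player takes at most $3$ values (the three leaves $00$, $10$, $11$ of the communication tree), so $H(\Pi_i) \le \log_2 3$, and hence $\IC_\mu(\pi) \le H(\Pi_A) + H(\Pi_B) $ is bounded — wait, that overshoots; instead one bounds $I(X_{-i};\Pi_i \mid X_i) \le H(\Pi_i \mid X_i)$ and checks by a direct calculation that the sum over the two players is at most $\log_2 3$ for every $\mu$, with the worst case being the distribution uniform on the three inputs mapping to distinct transcripts. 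This is the routine computation I would not grind through here.

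For the lower bound $\PIC^0(\AND) \ge \log_2 3$, I need: for every deterministic zero-error protocol $\pi$ for $\AND$, there is a distribution $\mu$ with $\IC_\mu(\pi) \ge \log_2 3$. The key structural fact is that a zero-error deterministic protocol for $\AND$ must, on input $(1,1)$, generate a transcript distinct from the transcripts on $(1,0)$ and $(0,1)$ (else one of those inputs would be misclassified), and moreover on $(1,0)$ and $(0,1)$ the transcripts must also differ — actually what matters is a rectangle/fooling-set argument: the three inputs $(0,1)$, $(1,0)$, $(1,1)$ must all lie in distinct combinatorial rectangles of the protocol partition, because $\AND$ is $0,0,1$ on them and $\AND(0,1)=\AND(1,0)$ but $(0,0)\in$ the $(0,1)$-rectangle $\times$ ... the standard fooling set for $\AND$ of size... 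I'd instead argue directly on transcripts: take $\mu$ uniform on $\{(0,1),(1,0),(1,1)\}$; since the full transcript $\Pi$ determines the output and the three inputs have two distinct outputs, and since for a deterministic protocol $\Pi$ is a function of $(X_A,X_B)$, one shows $H(\Pi_A \mid X_A) + H(\Pi_B \mid X_B) \ge H(\Pi) \ge$ something, and then that $\sum_i I(X_{-i};\Pi_i \mid X_i) = \sum_i H(\Pi_i \mid X_i)$ (deterministic, so $H(\Pi_i \mid X) = 0$) is at least $\log_2 3$ by checking that the three inputs are pairwise separated across the two players' transcripts. The main obstacle, and the place I expect the real work to be, is precisely this lower bound: proving that no clever deterministic protocol can make the two players' individual transcripts ``share the load'' so that each carries less than a total of $\log_2 3$ bits of entropy — this requires carefully using the zero-error rectangle structure of $\AND$ (the inputs $(0,1),(1,0),(1,1)$ form a fooling set) together with the subadditivity $H(\Pi) \le H(\Pi_A) + H(\Pi_B)$ and a convexity argument to pin down the worst-case $\mu$. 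I would also need to confirm the reduction handles general (private-coin, randomized but zero-error) protocols via Proposition~\ref{min on det}, so that it is enough to defeat deterministic ones.
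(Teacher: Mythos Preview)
Your upper-bound direction is essentially the paper's: the same protocol $\pi^*$ (Alice sends $x$, Bob replies with $\AND(x,y)$) is used, and its worst-case $\IC$ over distributions is computed to be $\log_2 3$. (Your guess for the maximizing $\mu$ is off, though; see below.)

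The lower-bound direction has a genuine gap. You propose to use the single distribution $\mu$ uniform on $\{(0,1),(1,0),(1,1)\}$ and argue $\IC_\mu(\pi)\ge\log_2 3$ for every deterministic $\pi$. But already for $\pi^*$ this fails: under that $\mu$ one has $H(X\mid Y)=\tfrac23$ and $I(Y;\Pi\mid X)=(2/3)\cdot H(Y\mid X{=}1)=\tfrac23$, so $\IC_\mu(\pi^*)=\tfrac43<\log_2 3$. No single distribution can witness the lower bound uniformly over protocols here, and the fooling-set/subadditivity sketch you outline does not recover the missing constant.

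The paper avoids this entirely by a different mechanism: it proves that $\pi^*$ is \emph{optimal among all deterministic protocols}, i.e.\ $\sup_\mu\IC_\mu(\pi^*)\le\sup_\mu\IC_\mu(\pi)$ for every zero-error $\pi$. The argument is short: the first non-constant bit in any deterministic $\pi$ is (w.l.o.g.) Alice's input $x$, and since Alice must eventually know $\AND(X,Y)$ one has $H(\AND(X,Y)\mid X,\Pi_A)=0$; these two facts give, for every $\mu$, $\IC_\mu(\pi^*)=H(X\mid Y)+I(Y;\AND(X,Y)\mid X)\le I(X;\Pi_B\mid Y)+I(Y;\Pi_A\mid X)=\IC_\mu(\pi)$. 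With optimality of $\pi^*$ in hand, the lower bound reduces to the single calculation $\sup_\mu\IC_\mu(\pi^*)$, which the paper shows is achieved at $\mu^*$ with $X,Y$ independent, $X\sim\Ber(\tfrac13,\tfrac23)$, $Y\sim\Ber(\tfrac12,\tfrac12)$, giving $\log_2 3$. This protocol-domination step is the key idea your proposal is missing.
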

\begin{proof}
In this proof we denote $\PIC^0(\cdot)$ by simply $\PIC(\cdot)$. 
We call a protocol $\pi'$ symmetric to $\pi$ (and an input distribution $\mu'$ symmetric to $\mu$)
when the roles of Alice and Bob (or of the inputs $X$ and $Y$) are flipped.

We first  prove  that there exists a  protocol $\pi^*$ for $\AND$ such that 
$\sup_\mu\PIC_\mu(\pi^*)= \inf_\pi \sup_\mu\PIC_\mu(\pi)$, where the infimum is over all
 protocols $\pi$ computing $\AND$. 
 To this end we  now prove that for any protocol $\pi$ for $\AND$ it holds that  $\sup_\mu\PIC_\mu(\pi^*) \leq \sup_\mu\PIC_\mu(\pi)$,
 where $\pi^*$ is a protocol for $\AND$ that we define below.
 Consider an arbitrary protocol $\pi$. By Proposition~\ref{min on det} we can assume w.l.o.g.
 that $\pi$ is deterministic. 
Since $\pi$ computes $\AND$ there must be a non-constant bit sent in $\pi$. Assume 
w.l.o.g. that the first player to send  a non-constant bit 
in $\pi$ is Alice (having input $X$).\footnote{We can assume this w.l.o.g. because for any protocol $\pi$, 
$\sup_\mu\PIC_\mu(\pi)  = \sup_\mu\PIC_\mu(\pi')$, where $\pi'$ is the protocol symmetric to $\pi$.
This is because any input distribution $\mu$ has a symmetric one, $\mu'$.}
Since $\pi$ is deterministic, this first non-constant  bit is either Alice's input
bit, $x$, or, $1-x$.
Since Alice must  compute the value of $\AND$, we also have  that $H(\AND(X,Y) | X\Pi_{A})=0$, where
$\Pi_A$ is the transcript of the messages received by  Alice.
Consider now the protocol $\pi^*$ defined as follows: Alice sends her input bit $x$ to Bob; Bob, who can now
 compute  $\AND(X,Y)$, sends to Alice that value.
 
For any input distribution $\mu$ we have
\begin{align*}
\PIC(\pi^*) & =  \IC_\mu(\pi^*) \\
& = I(X; \Pi^*_B \mid  Y) + I(Y; \Pi^*_A  \mid X) \\
&=  H(X \mid Y) + (H(Y \mid X) - H( Y \mid X ~\AND(X,Y))) \\
&=  I(X ; \Pi_B \mid Y) + (H(Y \mid X) - H( Y \mid X ~\AND(X,Y))) \\
& \leq I(X ; \Pi_B \mid Y) + (H(Y \mid X) - H( Y \mid X \Pi_A)) \text{~~~(because $H(\AND(X,Y) | X\Pi_{A})=0$) } \\
&=  I(X ; \Pi_B \mid Y) + I ( Y ; \Pi_A \mid X) \\
&= \PIC(\pi)~.
\end{align*}
 
 It follows that for any $\pi$ computing $\AND$, it holds that  $\sup_\mu\PIC_\mu(\pi^*) \leq \sup_\mu\PIC_\mu(\pi)$, 
 and hence $\sup_\mu\PIC_\mu(\pi^*)= \inf_\pi \sup_\mu\PIC_\mu(\pi)$.

To finalize the proof we now show that $ \sup_{\mu}\PIC_\mu(\pi^*) =\PIC_{\mu^*}(\pi^*)$
 for $\mu^*$ defined as follows: $X$ and $Y$ are independent; 
 $X \sim \Ber(\frac{1}{3},\frac{2}{3})$;  $Y \sim \Ber(\frac{1}{2},\frac{1}{2})$.

Consider an arbitrary input distribution $\mu$. Let $\alpha$ and $\beta$ be such that $\Pr_\mu[X=0]=\alpha$
and $\Pr_\mu[Y=0]=\beta$. Observe that  $X$ and $Y$ are  not necessarily independent.
We have
\begin{align*}
\PIC_\mu(\pi^*) &= I_\mu(X;\Pi^* \mid Y) + I_\mu(Y ;\Pi^* \mid X)\\
&= H_\mu(X \mid Y) + \left[\alpha \cdot I_\mu(Y ;\Pi^* \mid X=0) + (1-\alpha)\cdot  I_\mu(Y ;\Pi^* \mid X=1)\right]~,
\end{align*}
where the second equality follows from $H_\mu(X \mid Y \Pi^*)$=0, as the transcript  of $\pi^*$ fully determines $X$.
Observe that when $X=0$, Alice doesn't learn  from $\Pi^*$ anything about $Y$, while when $X=1$, Alice learns from $\Pi^*$ the value of $Y$.
Thus
$$\PIC_\mu(\pi^*) = H_\mu(X \mid Y) + (1-\alpha)\cdot H_\mu(Y \mid X=1)~.$$

Now define another input distribution $\mu'$ such that: $X$ and $Y$ are independent; 
$X \sim \Ber(\alpha,1-\alpha)$;  $\Pr[Y=1] = \Pr_{\mu}[Y=1 \mid X=1]$.
Note that 
$H_{\mu'}(X \mid Y) = H_{\mu'}(X) = H_{\mu}(X)$ and that
$H_{\mu'}(Y \mid X=1) = H_{\mu'}(Y) = H_\mu(Y \mid X=1)$.
We thus have that
\begin{align*}
\PIC_{\mu'}(\pi^*) &= I_{\mu'}(X;\Pi^* \mid Y) + I_{\mu'}(Y ;\Pi^* \mid X) \\
&= H_{\mu'}(X \mid Y) + (1-\alpha)\cdot H_{\mu'}(Y \mid X=1) \\
&= H_\mu(X)  + (1-\alpha)\cdot H_\mu(Y \mid X=1) \\
& \geq  H_\mu(X \mid Y )  + (1-\alpha)\cdot H_\mu(Y \mid X=1)\\
& = \PIC_\mu(\pi^*)~.
\end{align*}

Therefore, to find $\sup_\mu\PIC_\mu(\pi^*)$ we can consider only input 
distributions $\mu'$ such that $X$ and $Y$ are independent. For such $\mu'$ we define $\alpha'$ and $\beta'$ such that $X \sim \Ber(\alpha',1-\alpha')$ and
$ Y \sim \Ber(\beta',1-\beta')$. We have
$$\PIC_{\mu'}(\pi^*) = H_{\mu'}(X) + (1-\alpha')H_{\mu'}(Y)~.$$

  Thus, for any $\alpha'$, $\PIC_{\mu'}(\pi^*)$ is maximized when $H_{\mu'}(Y) = 1$, i.e.,
  when $\beta'=\frac{1}{2}$. In that case we have $\PIC_{\mu'}(\pi^*) = H_{\mu'}(X) + (1-\alpha')$.
  Thus,  to find $\sup_\mu\PIC_\mu(\pi^*)$,  we study the function
$f:[0,1]\rightarrow\mathbb{R}$, defined as 
$f(\alpha')=-\alpha'\log(\alpha') + (\alpha' - 1)\log(1-\alpha') + 1 - \alpha'$. \footnote{We denote here by $\log$ the logarithm base $2$.}

Now,
$f$ is continuous on $[0,1]$ and differentiable on $(0,1)$. For $ 0 < \alpha\ <1 $, we have:\\
(1) $f'(\alpha') = -\log(\alpha') - 1 + \log(1-\alpha') + 1 - 1 = \log(\frac{1}{\alpha'} - 1) - 1$; (2)
$f'$ is continuous and decreasing on $(0,1)$;  and (3) $f'$ admits the unique root $\frac{1}{3}$. 
Thus, $f$ is  maximized for $\alpha' = \frac{1}{3}$, its  maximum value being $f(\frac{1}{3}) = \log(3)$.

We thus have that $\sup_\mu\PIC_\mu(\pi^*)=\PIC_{\mu^*}(\pi^*)$ for
 $\mu^*$ defined as follows: $X$ and $Y$ are independent; 
 $X \sim \Ber(\frac{1}{3},\frac{2}{3})$;  $Y \sim \Ber(\frac{1}{2},\frac{1}{2})$,  that 
 $\PIC_{\mu^*}(\pi^*)=\log(3)$, and that 
 $\PIC(\AND) = \log(3)\simeq 1.58$.
\end{proof}

\section{Private computation, randomness, and PIC}
\label{sec:privacy}
We have seen that the public information cost of a function is equal to the information cost of the function when we only consider public coin protocols, and that in order to decrease the information cost even further, the players must use private randomness. We will see now that the difference between the public information cost of a protocol and its information cost can provide a lower bound on the amount of private randomness the players use during the protocol. The entropy of the transcript of the protocol, conditioned on the inputs and the public coins, is defined as $H(\Pi \mid X R^p)$. Once the input and the public coins are fixed, the entropy of the transcript of the protocol comes solely from the private randomness. Thus the entropy of the transcript of the protocol provides a lower bound on the entropy of the private randomness used by the players.

\begin{theorem}\label{bound partial pri}
Let $f = (f_i)$ be a family of functions of $k$ variables. Let $\pi$ be a protocol for $f$. For any input distribution $\mu$, it holds:
$$H_\mu(\Pi \mid X R^p) \ge \frac{\PIC_\mu(\pi) - \IC_\mu(\pi)}{k}~.$$
Thus, running a protocol for $f$ with information cost $I_\mu$ requires entropy $$H_\mu(\Pi \mid X R^p) \ge \frac{\PIC_\mu(f) - I_\mu}{k}~.$$
\end{theorem}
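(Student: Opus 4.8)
The plan is to bound the difference $\PIC_\mu(\pi) - \IC_\mu(\pi)$ using Proposition~\ref{p2}, which already decomposes it as $\sum_i I(R_{-i} ; X_{-i} \mid X_i \Pi_i R_i R^p)$. The goal is to show each of these $k$ terms is at most $H_\mu(\Pi \mid X R^p)$, so that the whole sum is at most $k \cdot H_\mu(\Pi \mid X R^p)$, which rearranges to the claimed inequality. The second statement then follows immediately by taking the infimum over protocols $\pi$ computing $f$ with $\IC_\mu(\pi) \le I_\mu$ (using that the infimum defining $\PIC_\mu(f)$ is attained in the limit, and $H_\mu(\Pi\mid XR^p)\ge 0$ is preserved under the infimum).

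For a single term, I would argue as follows. First, $I(R_{-i} ; X_{-i} \mid X_i \Pi_i R_i R^p) \le I(R_{-i} ; X_{-i} \Pi_i \mid X_i R_i R^p)$ — actually it is cleaner to just use the data-processing-type bound $I(R_{-i};X_{-i}\mid X_i\Pi_iR_iR^p) \le H(R_{-i}\mid X_i\Pi_iR_iR^p)$ and then try to relate $H(R_{-i}\mid \cdots)$ to the entropy of the transcript. The key structural fact is that, conditioned on $X$ and $R^p$ (i.e., on all inputs and the public coins), the only remaining randomness is the private randomness $R = (R_i)$, and the full transcript $\Pi$ is a deterministic function of $X$, $R^p$, and $R$. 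Conversely I want to say that the ``relevant part'' of $R_{-i}$ — the part that influences what player $i$ sees — is essentially captured by $\overleftrightarrow{\Pi_i}$, or at least that $I(R_{-i};X_{-i}\mid X_i\Pi_iR_iR^p)$ is controlled by $H(\Pi\mid XR^p)$. Concretely: $I(R_{-i};X_{-i}\mid X_i\Pi_iR_iR^p) \le I(R_{-i};X_{-i}\Pi_{-i}\mid X_i\Pi_iR_iR^p)$, and since given $X$, $R^p$, and $R$ the transcript is fixed, one should be able to peel off $X_{-i}$ using independence of inputs and randomness and bound the residual information about $R_{-i}$ by $H(\Pi_{-i}\mid XR^p) \le H(\Pi\mid XR^p)$. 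Summing the $k$ bounds gives the factor $k$.

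The main obstacle I anticipate is making precise the step ``the information $R_{-i}$ carries about $X_{-i}$, beyond what $\Pi_i$ already reveals, is bounded by the entropy of the transcript'' — one has to be careful because $R_{-i}$ can have arbitrarily large entropy, so the bound cannot go through $H(R_{-i})$ directly; it must route through the transcript. I expect the clean way is: by Proposition~\ref{indp}, $R$ is independent of $X$, hence $I(R_{-i};X_{-i}\mid X_iR_iR^p)=0$, and then by the chain rule $I(R_{-i};X_{-i}\mid X_i\Pi_iR_iR^p) \le I(R_{-i};X_{-i}\Pi_i\mid X_iR_iR^p) = I(R_{-i};\Pi_i\mid X_iR_iR^p) + I(R_{-i};X_{-i}\mid X_i\Pi_iR_iR^p)$ — which is circular, so instead I would write $I(R_{-i};X_{-i}\mid X_i\Pi_iR_iR^p) = I(R_{-i};X_{-i}\Pi_i\mid X_iR_iR^p) - I(R_{-i};\Pi_i\mid X_iR_iR^p) \le I(R_{-i};\Pi_i\mid X_iR_iR^p) \le H(\Pi_i\mid X_iR_iR^p) \le H(\Pi_i\mid XR^p) \cdot$ — wait, conditioning on more (namely $X$ vs $X_i$) only decreases entropy, so $H(\Pi_i\mid X_iR_iR^p) \ge H(\Pi_i \mid XR_iR^p)$, the wrong direction. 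The correct route is $I(R_{-i};X_{-i}\mid X_i\Pi_iR_iR^p) \le I(R_{-i}X_{-i}R_i ; \Pi_i \mid X_i R^p)$-type manipulations combined with $I(X_{-i}R_{-i};X_iR_i)=0$; ultimately one shows the term is at most $H(\Pi_i \mid X R^p) \le H(\Pi \mid X R^p)$. Pinning down exactly which conditioning makes the entropy bound valid, via the chain rule and the independence $R \perp X$, is the delicate bookkeeping, but it is a finite sequence of applications of Propositions~\ref{indp}, \ref{condi}, \ref{CR} and the fact that $\Pi$ is determined by $XR^pR$.
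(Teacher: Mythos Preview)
Your overall approach is exactly the paper's: invoke Proposition~\ref{p2} to write $\PIC_\mu(\pi)-\IC_\mu(\pi)=\sum_i Q_i$ with $Q_i=I(R_{-i};X_{-i}\mid X_i\Pi_iR_iR^p)$, then bound each $Q_i\le H(\Pi\mid XR^p)$. The only issue is the bookkeeping you flag: your displayed inequality ``$\ldots - I(R_{-i};\Pi_i\mid X_iR_iR^p)\le I(R_{-i};\Pi_i\mid X_iR_iR^p)$'' is not valid, and that is what sends you to the wrong conditioning ($X_i$ instead of $X$). The clean chain, which the paper uses, is
\begin{align*}
Q_i &= I(X_{-i}\Pi_i;R_{-i}\mid X_iR_iR^p) - I(\Pi_i;R_{-i}\mid X_iR_iR^p)\\
&\le I(X_{-i}\Pi_i;R_{-i}\mid X_iR_iR^p)\\
&= I(X_{-i};R_{-i}\mid X_iR_iR^p) + I(\Pi_i;R_{-i}\mid X R_iR^p)\\
&= I(\Pi_i;R_{-i}\mid X R_iR^p)\\
&\le H(\Pi_i\mid X R_iR^p)\le H(\Pi\mid XR^p),
\end{align*}
where the vanishing of $I(X_{-i};R_{-i}\mid X_iR_iR^p)$ is exactly the independence $R\perp X$ you identified. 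The point is to expand $I(X_{-i}\Pi_i;R_{-i}\mid\cdots)$ by peeling off $X_{-i}$ \emph{first}, so that $X_{-i}$ lands in the conditioning and you get $X$ (not just $X_i$) before passing to entropy; then Proposition~\ref{condi} works in the direction you need. With this correction your proof is complete and identical to the paper's.
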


\begin{proof}
We assume in what follows the input distribution $\mu$ without explicitly denoting it.

Define $Q_i$ as 
$$
Q_i=I( X_{-i}  ; R_{-i} \mid X_i R_i  \Pi_i R^p )~.
$$

By Proposition~\ref{p2} we have, 
\begin{align*}
\PIC(\pi) & = \IC(\pi) + \sum\limits_{i=1}^k I( X_{-i} ; R_{-i} \mid X_i R_i  R^p \Pi_i) \\
&= \IC(\pi) + \sum\limits_{i=1}^k Q_i~.
\end{align*}

Now,
\begin{align*}
Q_i  &= I(X_{-i} ; R_{-i}   \mid X_i R_i   \Pi_i R^p) \\
&= I(X_{-i}\Pi_i ; R_{-i} \mid X_i R_i R^p) - I(\Pi_i; R_{-i} \mid X_i R_i R^p) \text{~~~(chain rule)} \\
& \leq I(X_{-i}\Pi_i ; R_{-i} \mid X_i R_i R^p)  \\
& = I(X_{-i}; R_{-i} \mid X_i R_i R^p) + I(\Pi_i; R_{-i} \mid X_i R_i X_{-i} R^p) \text{~~~(chain rule)} \\
&= I(\Pi_i ; R_{-i} \mid X R_i R^p) \\
& = H(\Pi_i \mid X R_i R^p)  \\
& \leq H(\Pi \mid X R^p)~.
\end{align*}

Thus,
\begin{align*}
\PIC(\pi) &\le \IC(\pi) + k \cdot H(\Pi \mid  X R^p)~.\\
\end{align*}
\end{proof}

Using Lemma $\ref{p1}$, we can give a lower bound on the randomness required to run a private protocol.
\begin{corollary}\label{bound pri}
Let $f = (f_i)$ be a family of functions of $k$ variables. Let $\pi$ be a $k$-party private protocol for $f$. For any distribution $\mu$ on inputs,
$$H_\mu(\Pi \mid X R^p) \ge \frac{1}{k}\cdot \left(\PIC_\mu(f) - \sum\limits_{i=1}^k H_\mu(f_i)\right)~.$$
\end{corollary}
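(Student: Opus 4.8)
The plan is to derive Corollary~\ref{bound pri} as an immediate consequence of Theorem~\ref{bound partial pri} combined with Lemma~\ref{p1}. Theorem~\ref{bound partial pri} already gives us, for a private protocol $\pi$ computing $f$ on distribution $\mu$,
\[
H_\mu(\Pi \mid X R^p) \ge \frac{\PIC_\mu(\pi) - \IC_\mu(\pi)}{k}~.
\]
So the only work is to replace the right-hand side by the quantity stated in the corollary, which mentions $\PIC_\mu(f)$ and $\sum_i H_\mu(f_i)$ rather than $\PIC_\mu(\pi)$ and $\IC_\mu(\pi)$.

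First I would bound the numerator from below. Since $\pi$ is \emph{some} protocol computing $f$, by the definition of $\PIC_\mu(f)$ as an infimum over all such protocols we have $\PIC_\mu(\pi) \ge \PIC_\mu(f)$. For the subtracted term, since $\pi$ is private, Lemma~\ref{p1} (whose proof in fact bounds the information cost of any private protocol, not merely the infimum) gives $\IC_\mu(\pi) \le \sum_{i=1}^k H_\mu(f_i(X))$; concretely, the chain of inequalities in the proof of Lemma~\ref{p1} shows $\IC_\mu(\pi) \le \sum_i I(X_{-i}; f_i(X)\mid X_i R_i R^p) \le \sum_i H_\mu(f_i)$ for this particular private $\pi$. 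Combining, $\PIC_\mu(\pi) - \IC_\mu(\pi) \ge \PIC_\mu(f) - \sum_{i=1}^k H_\mu(f_i)$, and dividing by $k$ (which is positive) yields the claimed inequality
\[
H_\mu(\Pi \mid X R^p) \ge \frac{1}{k}\cdot\left(\PIC_\mu(f) - \sum_{i=1}^k H_\mu(f_i)\right)~.
\]

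There is essentially no obstacle here: the statement is a routine packaging of the preceding theorem and lemma, and the only mild subtlety is the observation that Lemma~\ref{p1}, although stated for $\IC_\mu(f)$, really proves the per-protocol bound $\IC_\mu(\pi) \le \sum_i H_\mu(f_i)$ for every private $\pi$, which is exactly what is needed to substitute into Theorem~\ref{bound partial pri}. One should also note the bound is only meaningful (non-vacuous) when $\PIC_\mu(f) > \sum_i H_\mu(f_i)$, i.e., precisely the regime where $\PIC$ genuinely exceeds $\IC$; this is the point that makes the corollary useful for lower-bounding randomness, as will be exploited for $\XOR$ in Section~\ref{sec:xor}.
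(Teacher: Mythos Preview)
Your proposal is correct and follows exactly the approach the paper intends: combine the second inequality of Theorem~\ref{bound partial pri} (which already replaces $\PIC_\mu(\pi)$ by $\PIC_\mu(f)$) with the per-protocol bound $\IC_\mu(\pi)\le\sum_i H_\mu(f_i)$ established in the proof of Lemma~\ref{p1} for any private $\pi$. Your observation that Lemma~\ref{p1} really proves the per-protocol inequality, not just the statement about $\IC_\mu(f)$, is precisely the point needed here.
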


\section{Tight lower bounds for the parity function $\XOR$}
\label{sec:xor}
We now show how one can indeed use $\PIC$ to study multi-party communication protocols and to prove tight bounds. We study one of the canonical problems for zero-error multi-party computation, the parity function. The $k$-party parity problem with $n$-bit inputs $\XOR_k^n$ is defined as follows. 
Each player $i$ receives $n$ bits $(x_i^p)_{p \in \Ni{1}{n}}$ and Player 1 has to output the bitwise XOR of the inputs  $\left( \bigoplus \limits_{i=1}^k x_i^1, \bigoplus\limits_{i=1}^k x_i^2,\ldots, \bigoplus\limits_{i=1}^k x_i^n \right) $. We 
give a lower bound on $\XOR_k^n$ and then use it to prove tight lower bounds on the randomness complexity 
of private computations of $\XOR_k^n$.

There is a simple private protocol for $\XOR_k^n$ that uses $n$ bits of private randomness. Player 1 uses a private random $n$-bit string $r$ and sends to Player 2 the string $x_1 \oplus r$. Then, Player 2 computes the bit-wise parity of its input with that message and sends $x_2 \oplus x_1 \oplus r$ to Player 3. The players continue until Player 1 receives back the message $x_k\oplus \ldots \oplus x_1 \oplus r$. Player 1 then takes the bit-wise parity of this message with the private string $r$ to compute the value of the parity function. It is easy to see that this protocol has information cost equal to $n$, since Player 1 just learns the value of the function and all other players learn nothing. We thus see that information cost (IC) cannot provide here lower bounds that scale with $k$.

We note that we prove our lower bound for $\XOR_k^n$  for a wider class of protocols, where we allow the 
player outputting $\oplus_{i=1}^k x_i^p$ to be different for each coordinate $p$ and where the identity of that player may  depend on the input.  On the other hand, we prove our lower bound for the restricted class of $0$-error oblivious protocols. We now prove a tight lower bound   
of $\Omega(nk)$ on the $\PIC$ of  $\XOR_k^n$  (for $0$-error oblivious protocols) 
which can then 
be used to derive other lower bounds for protocols for $\XOR_k^n$. 

For the purpose of the proof we 
 define a (natural) full order on the messages of an oblivious protocol. The order is defined as follows. We define an ordered series of {\em lots of messages}. In each lot there is at most  one message on any of the $k(k-1)$ directional links. The order of the messages is defined by the order of the lots, and within
  each lot, the messages are ordered by, say, the 
 lexicographical order of the links on which they are sent. The messages are assigned to lots as follows: The first lot consists of all messages sent by all the players in their first respective local round. The messages assigned to lot $s\geq1$ are defined inductively after lots $s'<s$ have been defined. To define the messages of lot $s>1$, we proceed 
 as follows for each player $i$: run the  protocol $\pi$, and whenever player $i$ is waiting for a message, extract a message from the already defined lots (lots $s'<s$), if  such message is assigned to one of them. Continue until a needed message is not available (i.e., the 
 protocol ``gets stuck''), or after player $i$ sends, according to the protocol, a  message not already assigned to a lot $s'<s$. 
 In the latter case, assign to lot $s$  all the  messages sent by player $i$ in 
the same  local round (i.e.,  for any player $i$ and local round $r$, all messages sent by player $i$ in local
 round $r$ are in the same lot).  
 
 To see that all the messages of the protocol are assigned to lots, build the following graph where each node
is identified by a pair $(i,r)$, for a player $i$ and local round $r$ of player $i$.   There is a directed edge
from any node $(i,r')$ to node $(i,r)$, if  $r'<r$  and there is at least one message sent by player $i$ in round $r'$.
Further, there is a directed edge from node $(j,r')$ to node $(i,r)$ if there is an integer $\ell$ such that 
 the $\ell$'th message from player $j$ to player $i$ is sent by player $j$ in its local round $r'$ and read by player $i$
 in its local round $r$. Observe that a node $(i,r)$
 is not on a directed cycle if and only if,  when the protocol is run, player $i$ reaches the sending-of-messages phase of its local round $r$.
  Define a partial order on the nodes which are not on a directed cycle, according to the orientation of the edges. 
  We define the ``level'' of a node to be the length of the  longest directed path leading to it.  Observe that by induction on this level, the messages sent by player $i$ in
 local round $r$, where node $(i,r)$ is of level $s$, are assigned to lot number $s$.

  Observe   that the enumeration of the messages as defined above respects the 
  intuitive ``temporal causality'' of the messages of the protocol. More formally, the following two properties hold for
  the above defined order: (1) the relative order of  the local rounds of  two messages that are both sent from player, say, $i$, to player, say, $j$,  
  is the same as the relative order of these messages according to the global order, and 
 (2) the value 
  of a message number $\ell$ (in the global order) sent from player $i$ is fully determined by the input to player
   $i$ and the values of the messages with indices less than $\ell$ that are received by player $i$.
  
 Denote by $(\M{i}{l}{s}\,)_{l\ge 0}$ the ordered sequence  of all messages sent by player $i$ in the protocol $\pi$, ordered according to the enumeration defined above.
 Similarly, denote by $(\M{i}{l}{r}\,)_{l\ge 0}$ the ordered sequence  of messages received by player $i$. Denote by $j(i,l)$ the player receiving message $\M{i}{l}{s}$, and
 by $l'(i,l)$ the integer such that the random variable  $\M{j(i,l)}{l'(i,l)}{r}$ and the random variable $\M{i}{l}{s}$ represent the same message. Observe that since we consider here an 
 oblivious protocol, the functions $j(i,l)$ and $l'(i,l)$ are well defined.
 For any $l_0$, let $\T{i}{l_0}{s} $ be the random variable representing the so-far history
 of player $i$, i.e., all the messages to and from player $i$ which appear before message
  $\M{i}{l_0}{s}$
 in the enumeration of messages defined above.
In a similar way, define $\T{i}{l_0}{r} $ to be the random variable representing the so-far history of the messages to and from player $i$  which appear before message 
 $\M{i}{l_0}{r}$.

\begin{theorem}\label{thm xor}
For oblivious protocols, 
$\PIC^0_\mu(\XOR_k^n) \ge n(k-1)$ where $\mu$ is the uniform input distribution.
\end{theorem}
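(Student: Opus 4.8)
The plan is to reduce to deterministic protocols and then carry out an information‑flow ("cut"/"spanning‑tree") argument along the input‑independent causal order of messages that obliviousness provides.

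First I would invoke Proposition~\ref{min on det}: the protocol it produces (first a public‑coin protocol, then fix a good value of the public tape) stays oblivious, so $\PIC^0_\mu(\XOR_k^n)$ equals the infimum of $\IC_\mu(\pi)$ over \emph{deterministic} zero‑error oblivious protocols $\pi$ for $\XOR_k^n$. For such a $\pi$ every $\Pi_i$ is a function of $X$, hence $\IC_\mu(\pi)=\sum_{i=1}^k I(X_{-i};\Pi_i\mid X_i)=\sum_{i=1}^k H(\Pi_i\mid X_i)$, the total information the players learn about one another's inputs. So it suffices to show $\sum_{i=1}^k H(\Pi_i\mid X_i)\ge n(k-1)$ for every such $\pi$ under the uniform $\mu$; the argument will use only that for every input and every coordinate $p$ \emph{some} player's final view determines $\bigoplus_j X_j^p$, which is also what gives the extension to the wider class where the output player may vary.

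Next, using the global order of the messages of the oblivious protocol set up above, I would chain‑rule each $H(\Pi_i\mid X_i)$ over the messages received by player $i$ in order of arrival; writing $c_M$ for the entropy of a message $M$ conditioned on the view of its recipient just before $M$ is delivered, this gives $\sum_{i=1}^k H(\Pi_i\mid X_i)=\sum_M c_M$, a sum over all messages of $\pi$. The target $n(k-1)$ then suggests producing $k-1$ disjoint "units" of size $n$ among the $c_M$'s: I would use the causal order to associate, to each of the $k-1$ players other than the output player, the message that first carries the influence of its input towards the output player, obtaining a spanning tree on $\{1,\dots,k\}$, and then charge a budget of $n$ to each of its $k-1$ edges, so that $\sum_M c_M\ge n(k-1)$. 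The bound is then tight: the protocol in which each non‑root player forwards the parity of its subtree along a fixed spanning tree achieves $\sum_i H(\Pi_i\mid X_i)=n(k-1)$.

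The main obstacle is making the charges legitimate: one must show that the message charged to a tree edge has enough conditional entropy — at least $n$, or at least $mn$ if a single message simultaneously realizes $m$ of the tree edges — even though a protocol need not send inputs in the clear, as it may transmit aggregates $\bigoplus_{j\in S}X_j$ or nonlinear functions, for which "the recipient learns a fresh input" is too weak and can cost less than $n$. Here I would lean on zero‑error correctness together with the independence and uniformity of the bits $X_j^p$: if the message that first brings the influence of a player fresh to one side of the relevant cut carried fewer than $n$ bits for that side, then, fixing all inputs on the other side of the cut and tracing the causal order forward, the output player's final view could not distinguish two input patterns whose parity differs in some coordinate, contradicting correctness. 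Making this indistinguishability argument robust, and organizing the bookkeeping so that the assignment of messages to tree edges is globally consistent with the "$\ge n$" guarantee, is the real technical content — and it is precisely here that obliviousness is needed, since it makes the causal order and the correspondence between sent and received messages fixed functions independent of the input, so that the tree and the charges are well defined.
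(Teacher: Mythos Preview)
Your reduction to deterministic oblivious protocols and the chain-rule decomposition $\sum_i H(\Pi_i\mid X_i)=\sum_M c_M$ are correct. The gap is in the spanning-tree charging scheme: the claim that the single message ``first carrying the influence of $X_i$'' has conditional entropy at least $n$ is false in general, and the indistinguishability fix you sketch does not repair it.

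Concretely, take $k=3$, $n=2$, player~$1$ outputs. Let player~$2$ first send the single bit $X_2^1$ to player~$3$; then player~$3$ sends the pair $(X_2^1\oplus X_3^1,\,X_3^2)$ to player~$1$; finally player~$2$ sends the single bit $X_2^2$ to player~$1$. This is a correct zero-error deterministic oblivious protocol with $\sum_M c_M=1+2+1=4=n(k-1)$. The first message carrying the influence of $X_2$ toward the output player is the one-bit message $2\!\to\!3$, so your charge on that tree edge is $1<n$. Your proposed contradiction (``fix the other side of the cut and trace forward: the output player cannot distinguish two inputs with different parity'') does not fire, because the remaining influence of $X_2$ reaches player~$1$ along a different path, through the later direct message; the protocol is correct even though the first-influence message is short. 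The real phenomenon is that influence can be split across several messages and several paths, and a per-edge charge of $n$ to a single distinguished message cannot account for this. You flag this as ``the real technical content'', but nothing in the proposal handles it.

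The paper's proof sidesteps the problem by never isolating single messages. It introduces the intermediate quantity $\sum_i I(X_i;\overleftrightarrow{\Pi_i})$, the information each player's \emph{entire} two-sided transcript reveals about that player's own input, and proves two inequalities. First, a message-by-message comparison (re-indexing the chain-rule sum from the recipient's side to the sender's side, and using that under~$\mu$ the inputs stay independent conditioned on any prefix of a player's transcript) gives $\sum_j I(X_{-j};\Pi_j\mid X_j)\ge\sum_i I(X_i;\overleftrightarrow{\Pi_i})$. Second, a counting argument: for every input $x$ and every coordinate $p$ whose output is produced by some player $q\neq i$, the full transcript $\overleftrightarrow{\pi_i}(x)$ must pin down $X_i^p$ (otherwise flipping $X_i^p$ leaves player $q$'s view unchanged), hence $I(X_i;\overleftrightarrow{\Pi_i})\ge \Exp_x[c_i(x)]$ and the sum over $i$ is $n(k-1)$. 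The second step has the flavor of your cut intuition, but it aggregates over \emph{all} of player~$i$'s messages rather than charging a single tree edge, which is exactly what makes it go through.
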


\begin{proof}
Throughout the proof we consider the uniform input distribution $\mu$ without explicitly stating it. 
Since we are looking at $0$-error protocols, the public information cost is equal to the information cost of deterministic protocols. 
Let $\pi$ be a $0$-error deterministic protocol for $\XOR_k^n$ for $k$ players and $n$-bit input  per player. 

We first prove that 
\begin{equation}
\label{eq:PIC>Spy}
\PIC^0(\pi) \ge \sum\limits_{i=1}^k I(X_i ; \overleftrightarrow{_{~}\Pi_i})~. 
\end{equation}

\noindent Intuitively, this means that $\PIC$ is at least the sum over the players $i$ of the amount of information that player $i$ leaks about its input to 
some entity that has access to all messages to and from player $i$.

\noindent Since $\pi$ is a deterministic protocol we have 
$\PIC^0(\pi) =\sum\limits_{j=1}^k I(X_{-j}; \Pi_j \mid X_j)$. We will therefore show
that $\sum\limits_{j=1}^k I(X_{-j}; \Pi_j \mid X_j) \geq \sum\limits_{i=1}^k I(X_i ; \overleftrightarrow{_{~}\Pi_i})$.

Using the chain rule, 
we decompose  $\sum\limits_{j=1}^k I(X_{-j}; \Pi_j \mid X_j)$ into a sum over all 
messages received in the protocol:  
\begin{align*}
\sum\limits_{j=1}^k I(X_{-j}; \Pi_j \mid X_j) &= 
 \sum\limits_{j=1}^{k}\sum\limits_{\ell' \geq 0} I(X_{-j} ; \M{j}{\ell'}{r} \mid \M{j}{0}{r}\ldots \M{j}{\ell'-1}{r}X_{j} ) \\
&= \sum\limits_{j=1}^{k}\sum\limits_{\ell' \geq 0} I(X_{-j} ; \M{j}{\ell'}{r} \mid \T{j}{\ell'}{r} X_{j} )~.
\end{align*}

We now consider each message from the point of view of the receiver rather than that of the sender.  Recall that each message in the protocol is represented by two random
variables: for any $i$ and $l$ the two random variables 
 $\M{i}{l}{s}$ and $\M{j(i,l)}{l'(i,l)}{r}$ represent the same message. Thus, we can rearrange the last summation, using $j$ as a shorthand for $j(i,l)$  and
 $l'$ as a shorthand of $l'(i,l)$, and get 
\begin{align*}
\sum\limits_{j=1}^k I(X_{-j}; \Pi_j \mid X_j) &= 
\sum\limits_{i=1}^k\sum\limits_{\ell \geq 0} I(X_{-{j}} ; \M{i}{\ell}{s} \mid \T{j}{\ell'}{r} X_{j})~.
\end{align*}

Note that using the chain rule, we have for all $i\in\Ni{1}{k}$,
\begin{align*}
 I(X_i ; \overleftrightarrow{_{~}\Pi_i}) &= \sum\limits_\ell I(X_i ; \M{i}{\ell}{s} \mid \T{i}{\ell}{s}) +  \sum\limits_\ell I(X_i ; \M{i}{\ell}{r} \mid \T{i}{\ell}{r})\\ 
 &= \sum\limits_\ell I(X_i ; \M{i}{\ell}{s} \mid \T{i}{\ell}{s})~,
\end{align*}
where we used the fact that every term of the second sum is $0$. This is true using Proposition~\ref{indp}, which can we used since, for 
any $\ell$, conditioned on $\T{i}{\ell}{r}$, $X_i$ is independent of the variable $\M{i}{\ell}{r}$ (intuitively: when the input distribution is a product distribution, the incoming messages to a player do not carry any information on the input of that player).

Therefore, our objective  now is to show that for any message $\M{i}{\ell}{s}$, 
\begin{equation}
\label{eq:to_prove}
I(X_{-{j}} ; \M{i}{\ell}{s} \mid \T{j}{\ell'}{r} X_{j})    \geq  I(X_i ; \M{i}{\ell}{s} \mid \T{i}{\ell}{s} )~.
\end{equation}

Since $\M{i}{l}{s}$ is determined by $(X_i,\T{i}{l}{s})$, we have that $H(\M{i}{l}{s} \mid X_i \T{i}{l}{s}) = 0$, and $I(X_i ; \M{i}{l}{s} \mid \T{i}{l}{s}) = H(\M{i}{l}{s} \mid \T{i}{l}{s})$. Similarly (as $X_i$ is trivially a function of $X_{-j}$),
$I(X_{-{j}} ;\M{i}{l}{s} \mid \T{j}{l'}{r} X_{j}) = H(\M{i}{l}{s} \mid \T{j}{l'}{r} X_{j})$.

Thus,
\begin{align*}
I(X_i ; \M{i}{\ell}{s} \mid \T{i}{\ell}{s}) &\le I(X_{-{j}} ; \M{i}{\ell}{s} \mid \T{j}{\ell'}{r} X_{j})\\
&\Updownarrow\\
H(\M{i}{\ell}{s} \mid \T{i}{\ell}{s}) &\le H(\M{i}{\ell}{s} \mid \T{j}{\ell'}{r} X_{j})\\
&\Updownarrow\\
I(\M{i}{\ell}{s} ; \T{i}{\ell}{s}) &\ge I(\M{i}{\ell}{s} ; \T{j}{\ell'}{r} X_{j})~.
\end{align*}

The last inequality holds if $I(\M{i}{\ell}{s} ; \T{i}{\ell}{s})  = I(\M{i}{\ell}{s} ; \T{i}{\ell}{s}\T{j}{\ell'}{r} X_{j})$,
which itself holds if
\begin{equation}
\label{eq:strange_one}
I(\M{i}{\ell}{s} ; \T{j}{\ell'}{r} X_{j} \mid \T{i}{\ell}{s})=0~.
\end{equation} 

Observe that given $\T{i}{\ell}{s}$, $\M{i}{\ell}{s}$ is fixed by $X_i$, and therefore by the data processing inequality 
$$
I(X_i ; \T{j}{\ell'}{r} X_j \mid \T{i}{\ell}{s}) \geq I(\M{i}{\ell}{s}; \T{j}{\ell'}{r} X_j \mid \T{i}{\ell}{s})~,
$$
and thus Equality~\eqref{eq:strange_one} holds  if 
\begin{equation}
\label{eq:important_one}
I(X_i ; \T{j}{\ell'}{r} X_j \mid \T{i}{\ell}{s})= 0~.
\end{equation}

Observe now that the ordering of the messages that we defined implies that $\T{j}{\ell'}{r}$ (which is the same message as
$\M{i}{\ell}{s}$) is determined  by $(X_{-i},\T{i}{\ell}{s})$.
Furthermore, $X_j$ is trivially determined by $X_{-i}$. Using the data processing inequality we thus have
$$
I(X_i ; \T{j}{\ell'}{r} X_j \mid \T{i}{\ell}{s}) \leq I(X_i ; X_{-i} \T{i}{\ell}{s} \mid  \T{i}{\ell}{s})=I(X_i ; X_{-i}  \mid  \T{i}{\ell}{s})~.
$$

Thus Equality~\eqref{eq:important_one} holds if $I(X_i ; X_{-i}  \mid  \T{i}{\ell}{s})=0$.
To prove the latter,
denote by $(B^d)_{d>0}$ all the messages in $\T{i}{\ell}{s}$, ordered by local rounds of player $i$, 
and inside each round having first the messages sent by 
player $i$, ordered by the index of the recipient, and  then the messages received by player $i$, ordered by the 
index of the sender. For convenience of notation we also define the message $B^0$, which is the ``empty message'' at the beginning
 of $\T{i}{\ell}{s}$.
 
 We  prove, by induction on $d$, that for any $d \geq 0$, ${I(X_i ; X_{-i} \mid B^0B^1\dots B^d)=0}$.
 For the base of the induction ($d=0$) we have $I(X_i ; X_{-i} \mid B^0)=I(X_i ; X_{-i} )=0$, since $X$ is distributed according
  to $\mu$.
   
 By the induction hypothesis, for some $d\geq0$, ${I(X_i;X_{-i}\mid  B^0\dots B^d)=0}$. 
 If the message $B^{d+1}$ is {\em sent} by player $i$, then $B^{d+1}$ is 
 a function of $X_i$ and $B^0\dots B^{d}$, and thus
\begin{align*}
I(X_i;X_{-i}\mid B^0\dots B^{d+1}) &= 
H(X_{-i}\mid B^0\dots B^{d+1})~- H(X_{-i}\mid B^0\dots B^{d+1} X_i)\\
&\le H(X_{-i}\mid B^0\dots B^{d})~- H(X_{-i}\mid  B^0\dots B^{d} X_i)\\
&= I(X_i;X_{-i}\mid  B^0\dots B^{d})\\
&=0~.
\end{align*}
Similarly, if the message $B^{d+1}$ is {\em received} by player $i$, then $B^{d+1}$ is a
 function of $X_{-i}$ and $B^0\dots B^{d}$, and thus
\begin{align*}
I(X_i;X_{-i}\mid  B^0\dots B^{d+1}) &= 
H(X_i\mid  B^0\dots B^{d+1})~- H(X_i\mid  B^0\dots B^{d+1} X_{-i})\\
&\le H(X_i\mid  B^0\dots B^{d})~- H(X_i\mid  B^0\dots B^{d} X_{-i})\\
&= I(X_i;X_{-i}\mid  B^0\dots B^{d} )\\
&=0~.
\end{align*}

We thus have that  $I(X_i ; X_{-i}  \mid  \T{i}{\ell}{s})=0$
and  the proofs of Inequality~\eqref{eq:important_one} and of Equality~\eqref{eq:strange_one} are concluded. 
Inequality~\eqref{eq:to_prove} and  Inequality~\eqref{eq:PIC>Spy} then follow. 

\medskip 

To conclude the proof of the theorem we now show that 
\begin{equation}
\label{eq:bigI}
\sum\limits_{i=1}^k I(X_i ; \overleftrightarrow{_{~}\Pi_i}) \ge n(k-1)~.
\end{equation}

Let $x=(x_i^p) \in  \{0,1\}^{nk}$ be an arbitrary input, where $x_i$ is the $n$-bit input of player $i$.
 For any index  $1 \leq p \leq n$, any player $1 \leq q \leq k $, we consider the question whether
 $H(\bigoplus\limits_{i=1}^k x_i^p \mid  X_q=x_q, \Pi_q=\pi_q(x))=0$. Intuitively, if this is the case then
 player $q$ can output $\bigoplus\limits_{i=1}^k x_i^p$. Observe that since $\pi$ is a $0$-error protocol
 for $\XOR_k^n$, then for each index $p$ there is at least one player $q$ such that 
 $H(\bigoplus\limits_{i=1}^k x_i^p \mid  X_q=x_q, \Pi_q=\pi_q(x))=0$. Denote by $q^p(x)$ an arbitrary such
 player. 
For any player $i$, define $C_i(x) = \{p \mid q^p(x) \neq i\}$. Intuitively, when the input is $x$, then for each such coordinate, player $i$ has  to leak its own
input on that coordinate.  Let $c_i(x) = |C_i(x)|$.\\

We now show that
$\forall~i,~H(X_i \mid \overleftrightarrow{_{~}\Pi_i} = \overleftrightarrow{_{\!\;}\pi_i}(x)) \le n - c_i(x)$.
Assume towards a contradiction  that for some $i$, $H(X_i \mid \overleftrightarrow{_{~}\Pi_i} = \overleftrightarrow{_{\!\;}\pi_i}(x)) > n - c_i(x)$. This implies that the number of possible values for $X_i$ consistent with $\overleftrightarrow{_{~}\Pi_i} = \overleftrightarrow{_{\!\;}\pi_i}(x)$ is more than $2^{n-c_i(x)}$, and thus the number of coordinates of the input of the $i$-th player that are fixed by $ \overleftrightarrow{_{~}\Pi_i} = \overleftrightarrow{_{\!\;}\pi_i}(x)$ is strictly less than $c_i(x)$. 
In particular there exists an input $x'$ such that 
\begin{itemize}
\item $\overleftrightarrow{_{\!\;}\pi_i}(x') = \overleftrightarrow{_{\!\;}\pi_i}(x)$, and 
\item $\exists~p \in C_i(x)$ such that $x'^p_i \neq x_i^p$. 
\end{itemize}

Observe that we consider here {\em oblivious} multi-party protocols. Therefore,  $\overleftrightarrow{_{\!\;}\pi_i}(x') = \overleftrightarrow{_{\!\;}\pi_i}(x)$ implies that $\overleftrightarrow{_{\!\;}\pi_i}(x) = \overleftrightarrow{_{\!\;}\pi_i}(x'_i,x_{-i})$ (by considering  player $i$ as Alice, and
 all other players together as Bob, and using  arguments as those used  for a similar property for $2$-party protocols). 
As $q^p(x) \neq i$, this is a contradiction, since if $\overleftrightarrow{_{\!\;}\pi_i}(x) = \overleftrightarrow{_{\!\;}\pi_i}(x'_i,x_{-i})$ then 
(the output) $\bigoplus\limits_{i=1}^k x_i^p$ is not fixed by $X_q=x_q$  and   $\Pi_{q^p(x)}=\pi_{q^p(x)}(x)$,
contradicting  the definition of $q^p(x)$.

We now consider, for a given player $i$,  the quantity $\Exp\limits_x[c_i(x)]$.
For any given $x$ and any given player $i$ we  proved above that 
$H(X_i \mid \overleftrightarrow{_{~}\Pi_i} = \overleftrightarrow{_{\!\;}\pi_i}(x)) \le n - c_i(x)$.
Thus, for any player $i$ we have 
$H(X_i \mid \overleftrightarrow{_{~}\Pi_i}) \le \Exp\limits_x[n - c_i(x)] = n - \Exp\limits_x[c_i(x)]$. We get 
$I(X_i ; \overleftrightarrow{_{~}\Pi_i}) \ge \Exp\limits_x[c_i(x)]$.

Summing over all $i$, we get
$ \sum\limits_{i=1}^k I(X_i ; \overleftrightarrow{_{~}\Pi_i}) \ge \sum\limits_{i=1}^k\Exp\limits_x[c_i(x)] = \Exp\limits_x[\sum\limits_{i=1}^k c_i(x)]$ and since by  simple counting, for any $x$, it holds that   $\sum\limits_{i=1}^k c_i(x) = n(k-1)$, we get $\sum\limits_{i=1}^k I(X_i ; \overleftrightarrow{_{~}\Pi_i}) \ge \Exp\limits_x[n(k-1)] = n(k-1)$. This concludes the proof of 
Inequality~\eqref{eq:bigI}.

Inequality~\eqref{eq:PIC>Spy} together with Inequality~\eqref{eq:bigI} conclude the proof of the theorem.
\end{proof}

\begin{theorem}
The entropy in the private randomness of an oblivious  private protocol for $\XOR_k^n$ is at least $ \frac{k-2}{k} \cdot n$.
\end{theorem}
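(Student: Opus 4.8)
The plan is to combine the lower bound on $\PIC^0_\mu(\XOR_k^n)$ from Theorem~\ref{thm xor} with the randomness lower bound of Corollary~\ref{bound pri}. Recall that Corollary~\ref{bound pri} states that for any private protocol $\pi$ computing a family $(f_i)$ of $k$-variable functions and any input distribution $\mu$, we have $H_\mu(\Pi \mid X R^p) \ge \frac{1}{k}\bigl(\PIC_\mu(f) - \sum_{i=1}^k H_\mu(f_i)\bigr)$. The quantity $H_\mu(\Pi \mid X R^p)$ is exactly the entropy in the private randomness of the protocol (once the inputs and the public coins are fixed, the only source of entropy in the transcript is the private randomness), so a lower bound on it is the desired statement.

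First I would fix $\mu$ to be the uniform distribution on $\{0,1\}^{nk}$, and recall that for $\XOR_k^n$ only Player~1 produces output, with $f_1(X) = \bigoplus_{i=1}^k X_i \in \{0,1\}^n$ and $f_i$ trivial (constant/empty) for $i \ge 2$. Hence $\sum_{i=1}^k H_\mu(f_i) = H_\mu(f_1(X)) = n$, since the bitwise XOR of $k$ uniform independent $n$-bit strings is itself uniform on $\{0,1\}^n$. Next I would invoke Theorem~\ref{thm xor}, which gives $\PIC^0_\mu(\XOR_k^n) \ge n(k-1)$; note we must use the zero-error version of $\PIC$ here, which is consistent since Corollary~\ref{bound pri} applies to a private (hence, in the relevant regime, zero-error) protocol, and $\PIC^0_\mu(f) \le \PIC^0_\mu(\pi)$ for any zero-error $\pi$. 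One subtlety worth addressing explicitly: Theorem~\ref{thm xor} is proved for the class of oblivious (indeed, the even wider class allowing the output player to vary per coordinate and with the input) zero-error protocols, so I would either restrict the claim to oblivious private protocols or remark that the private protocols of \cite{BGW,CCD} are synchronous, hence oblivious, so the bound applies to them.

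Plugging these two facts into Corollary~\ref{bound pri} yields
\begin{align*}
H_\mu(\Pi \mid X R^p) &\ge \frac{1}{k}\Bigl(\PIC_\mu(\XOR_k^n) - \sum_{i=1}^k H_\mu(f_i)\Bigr) \\
&\ge \frac{1}{k}\bigl(n(k-1) - n\bigr) \\
&= \frac{n(k-2)}{k}~,
\end{align*}
which is precisely the claimed bound $\frac{k-2}{k}\cdot n$. The only genuinely delicate point in the whole argument is the identification of $H_\mu(\Pi \mid X R^p)$ with ``entropy in the private randomness'': I would argue that conditioned on $X = x$ and $R^p = r^p$, the transcript $\Pi$ is a deterministic function of the private randomness $R = (R_i)$, so by the data processing inequality $H(\Pi \mid X = x, R^p = r^p) \le H(R \mid X = x, R^p = r^p) \le H(R)$, and averaging over $x, r^p$ gives $H_\mu(\Pi \mid X R^p) \le H(R)$; thus any lower bound on the left-hand side is a lower bound on the entropy of the private randomness used by the protocol. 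I do not expect any serious obstacle beyond making sure the obliviousness hypothesis of Theorem~\ref{thm xor} is correctly matched to the protocols under consideration.
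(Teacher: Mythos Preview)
Your proposal is correct and follows essentially the same approach as the paper: compute $\sum_i H_\mu(f_i)=n$ under the uniform distribution and plug this together with the bound $\PIC^0_\mu(\XOR_k^n)\ge n(k-1)$ from Theorem~\ref{thm xor} into Corollary~\ref{bound pri}. Your additional remarks on the obliviousness hypothesis and on interpreting $H_\mu(\Pi\mid X R^p)$ as a lower bound on the entropy of the private randomness are valid elaborations that the paper leaves implicit.
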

\begin{proof}
For $\XOR_k^n$, where one player outputs the parity for each coordinate, we have $\sum\limits_{i=1}^k H(f_i) = n$. 
Applying Corollary~\ref{bound pri}, we get:
$H(\Pi \mid X R^p) \ge \frac{k-2}{k} \cdot n$.

\end{proof}

We note that all private protocols considered in the literature are oblivious protocols. 
Observe  also that using Theorem~\ref{bound partial pri} one can also give a lower bound on the randomness needed by
protocols that are allowed to leak a given limited amount of  information about the inputs of the players.

\section{A direct sum for $\PIC$ ?}
\label{sec:direct-sum}

The \textit{direct sum} property is a fundamental question in complexity theory, and has been studied for many computation models. A direct sum theorem affirms that the amount of resources needed to perform $t$ independent tasks is at least the sum of the resources needed to perform each of the $t$ tasks. In this section we show that a certain direct sum property for $\PIC$ implies a  certain direct sum property for $\CC$. To this end, we prove a compression result by extending previous results~\cite{BBKLS,Pan} to the multi-party case. 
Note that information complexity ($\IC$) has a direct sum property in the multi-party case. For $\PIC$, it is easy to prove the following inequality.
\begin{theorem}
For any $k$-variable functions $f$ and $g$, for any distribution $\mu$ on  the inputs of $f$, for any distribution $\eta$ on  the inputs of $g$, it holds that $$\PIC_{\mu \times \eta}(f \times g) \le \PIC_{\mu}(f) + \PIC_{\eta}(g)~.$$
\end{theorem}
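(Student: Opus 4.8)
The plan is to establish the inequality at the level of protocols and then pass to the infimum. By Theorem~\ref{min on pub} it suffices to work with public-coin protocols. Fix $\delta>0$ and choose public-coin protocols $\pi_f$ computing $f$ with $\IC_\mu(\pi_f)\le\PIC_\mu(f)+\delta$ and $\pi_g$ computing $g$ with $\IC_\eta(\pi_g)\le\PIC_\eta(g)+\delta$, having public tapes $R^{p,f}$ and $R^{p,g}$ respectively. Build a public-coin protocol $\pi$ for $f\times g$ in which player $i$, holding input $(x_i,y_i)$, first runs $\pi_f$ on the $x$-part using a first block of the public tape and then runs $\pi_g$ on the $y$-part using a disjoint block (each sub-protocol terminates in finite time, and since all messages are self-delimiting each player recognizes the end of its $\pi_f$-phase; alternatively one runs the two protocols in parallel, keeping their message streams separate). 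Then player $i$'s transcript in $\pi$ is $\Pi_i=\Pi_i^f\Pi_i^g$, the public tape of $\pi$ is $R^p=(R^{p,f},R^{p,g})$, and $\pi$ computes $f\times g$ (with zero error if $\pi_f,\pi_g$ are zero-error; with error at most the sum of the two errors otherwise). Since $\PIC_{\mu\times\eta}(f\times g)\le\PIC_{\mu\times\eta}(\pi)=\IC_{\mu\times\eta}(\pi)$ (the last equality because $\pi$ is public-coin), it is enough to show $\IC_{\mu\times\eta}(\pi)\le\IC_\mu(\pi_f)+\IC_\eta(\pi_g)$.

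Fix a player $i$. Set $U=(X,R^{p,f})$ and $V=(Y,R^{p,g})$; since $\mu\times\eta$ makes $X$ and $Y$ independent and the two public blocks are fresh and independent of the inputs, $U$ and $V$ are independent. Now $a:=X_{-i}$, $b:=\Pi_i^f$, $c:=X_iR^{p,f}$ are deterministic functions of $U$, while $a':=Y_{-i}$, $b':=\Pi_i^g$, $c':=Y_iR^{p,g}$ are functions of $V$. The $i$-th summand of $\IC_{\mu\times\eta}(\pi)$ is $I(aa';bb'\mid cc')$, and the $i$-th summands of $\IC_\mu(\pi_f)$ and $\IC_\eta(\pi_g)$ are $I(a;b\mid c)$ and $I(a';b'\mid c')$, so it suffices to prove
$$I(aa';bb'\mid cc')\le I(a;b\mid c)+I(a';b'\mid c').$$
I would establish this by two applications of the chain rule: $I(aa';bb'\mid cc')=I(aa';b\mid cc')+I(aa';b'\mid cc'b)$, then $I(aa';b\mid cc')=I(a;b\mid cc')+I(a';b\mid acc')$ and $I(aa';b'\mid cc'b)=I(a';b'\mid cc'b)+I(a;b'\mid a'cc'b)$. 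The two ``cross'' terms vanish, because conditioned on an arbitrary mix of functions of $U$ and functions of $V$ any function of $U$ is independent of any function of $V$ (a direct consequence of $U\perp V$, turned into a statement about mutual information by Proposition~\ref{indp}), so $I(a';b\mid acc')=0$ and $I(a;b'\mid a'cc'b)=0$. For the remaining two terms I would discard the ``foreign'' conditioning via Proposition~\ref{2.9}: $I(a;b\mid cc')\le I(a;b\mid c)$ since $I(b;c'\mid ac)=0$ ($b$ is a function of $U$, $c'$ of $V$, and $ac$ of $U$), and likewise $I(a';b'\mid cc'b)\le I(a';b'\mid c')$ since $I(b';cb\mid a'c')=0$. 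Adding the two bounds gives the displayed inequality; summing over $i$ yields $\IC_{\mu\times\eta}(\pi)\le\IC_\mu(\pi_f)+\IC_\eta(\pi_g)\le\PIC_\mu(f)+\PIC_\eta(g)+2\delta$, and letting $\delta\to0$ concludes.

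Since this is only the ``easy'' sub-additivity direction, there is no genuine obstacle; the two points needing a little care are (i) the model-level claim that concatenating two protocols is again a legal protocol in which player $i$'s transcript and public randomness are exactly the concatenations — this is where self-delimitation of messages and finite termination are used — and (ii) verifying the conditional-independence hypotheses behind the vanishing cross terms and behind Proposition~\ref{2.9}, all of which reduce to the single fact that functions of $U$ are conditionally independent of functions of $V$ given anything, because $U\perp V$. One could instead argue directly on the original, privately-randomized protocols, grouping the private tapes $R^f_i$ into $U$ and $R^g_i$ into $V$; the computation is identical but carries extra $R_{-i}$ terms, so routing through Theorem~\ref{min on pub} is cleaner.
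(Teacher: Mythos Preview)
Your argument is correct. The paper itself does not supply a proof of this statement: it simply announces ``For $\PIC$, it is easy to prove the following inequality'' and states the theorem without further justification. Your write-up is therefore not a comparison target but rather a full proof where the paper leaves a gap, and it proceeds along exactly the expected lines: pass to public-coin protocols via Theorem~\ref{min on pub}, run near-optimal protocols for $f$ and $g$ one after the other (or in parallel), and decompose the resulting information cost using the chain rule together with the independence of the ``$f$-side'' $U=(X,R^{p,f})$ and the ``$g$-side'' $V=(Y,R^{p,g})$.

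Two small remarks. First, with the paper's definition of $f\times g$ (each of $f$ and $g$ is to be computed with error $\epsilon$ separately), no union bound is needed: if $\pi_f$ $\epsilon$-computes $f$ and $\pi_g$ $\epsilon$-computes $g$, then the concatenated protocol already $\epsilon$-computes $f\times g$. Second, your justification of the vanishing cross terms and of the hypotheses in Proposition~\ref{2.9} is exactly right once one notes that when $U\perp V$ and one conditions on a pair $(\phi(U),\psi(V))$, the variables $U$ and $V$ remain independent; this is the single structural fact underlying all four ``$=0$'' claims. The model-level point you flag --- that sequential composition of two protocols is again a legal protocol with $\Pi_i=\Pi_i^f\Pi_i^g$ --- is indeed harmless here: each player's local program in $\pi_f$ consists of finitely many local rounds, after which it begins its local rounds in $\pi_g$, and FIFO links together with self-delimiting messages keep the two phases separated.
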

We use here the notation $f \times g$ to indicate  the task of computing $f$ with error $\epsilon$ and computing $g$ with error $\epsilon$ (as opposed  to computing the couple function $(f,g)$ with error $\epsilon$).
In order to understand whether the opposite inequality holds, i.e., whether a direct sum property holds for $\PIC$, we first need to study the problem of compressing communication.

\subsection{Relation between $\PIC$ and $\CC$: A compression result}

An important open question is how well  can we compress the communication cost of an interactive  protocol. Compression results have appeared in \cite{BBCR,BR,BBKLS,Pan,BMY}, while, on the other hand, \cite{GKR,GKR2,RS,FJKLLR,GKR3} focus on the hardness of compressing communication protocols. Here, we present a compression result with regards to the average-case communication complexity, distributional error, and the public information cost.

\begin{definition}
Given an input distribution $\mu$, a protocol is said to compute a function with distributional error $\epsilon$ if the probability, over the input and the randomness of the protocol, that the protocol fails is at most $\epsilon$.
\end{definition}

\begin{definition}
The \textit{average-case communication complexity} of a protocol $\pi$ with respect to the input distribution $\mu$, denoted $\ACC_\mu(\pi)$, is the expected number of bits that are transmitted in an execution of $\pi$ for inputs distributed according to $\mu$ and for uniform randomness.
\end{definition}

\begin{theorem}\label{compression pic}
Suppose there exists an oblivious protocol $\pi$ to compute a $k$-variable function $f$ over the distribution $\mu$ with distributional error probability $\epsilon$. Then for any fixed $\delta >0$ 
 there exists a public-coin protocol $\rho$ that computes $f$ over $\mu$ with distributional error $\epsilon + \delta$, and with average communication complexity
$$
\ACC_\mu(\rho) = \calO\left(k^2 \cdot \IC_\mu(\pi)\cdot\log\frac{k^2 \cdot \IC_\mu(\pi) \cdot \CC(\pi)}{\delta}\right).
$$	
\end{theorem}

The proof of the above theorem will follow from extending, to the case of $k > 2$ players, the compression result presented in \cite{BBKLS,Pan},
as stated below. Thus, the proof of Theorem~\ref{compression pic} follows  from Theorem~\ref{th:protocol_min_on_pub} and from Theorem~\ref{compression}. We remark that it is an interesting question whether the $k^2$ factor is necessary or whether it can be replaced by  smaller function of $k$.

\begin{theorem}\label{compression}
Suppose there exists an oblivious public-coin protocol $\pi$ to compute a $k$-variable function $f$ over the distribution $\mu$ with distributional error probability $\epsilon$. Then for any fixed $\delta >0$ there exists a public-coin protocol $\rho$ that computes $f$ over $\mu$ with distributional error $\epsilon + \delta$, and with average communication complexity
$$
\ACC_\mu(\rho) = \calO\left(k^2 \cdot \IC_\mu(\pi)\cdot\log\frac{k^2 \cdot \IC_\mu(\pi) \cdot \CC(\pi)}{\delta}\right).
$$
\end{theorem}
 
In the two-party compression scheme of \cite{BBKLS,Pan}, the two players, given their corresponding inputs, try to guess the transcript $\pi(x_1,x_2)$ of the protocol $\pi$. For this, player $1$ picks a candidate $t_1$  from the set $\Ima(\pi(x_1,\cdot))$ of possible transcripts  consistent with input $x_1$, while player $2$ picks a candidate $t_2$ from the set $\Ima(\pi(\cdot,x_2))$. The two players then communicate in order  to find the first bit on which $t_1$ and $t_2$ disagree. The general structure of protocols ensures that the common prefix of $t_1$ and $t_2$
(until the first bit of disagreement) is identical to the beginning of the correct transcript on inputs $x_1$ and $x_2$, i.e., identical to $\pi(x_1,x_2)$. Starting from this correct prefix, the players then pick new candidates for the transcript of the protocol $\pi(x_1,x_2)$, and so on, until they agree on the full transcript $\pi(x_1,x_2)$.
Clever choices of the candidates, along with an efficient technique to find the first bit which differs between the candidates, lead to a protocol with a small amount of communication.

In extending the proof in \cite{BBKLS,Pan} to the multi-party case new difficulties are encountered. The players can no longer try to guess the full transcript, as they have little information about the communication between the other players, and can only try to guess their partial transcripts, according to their own input. Then, in order to find the first disagreement  in the global transcript,  pairs of players need to find and communicate the place 
of the first disagreement between  their respective partial transcripts. 

For technical reasons, in this section we use the notation $\overleftrightarrow{_{\!\;}\Pi_i}$ not  as defined in 
Section~\ref{sec:model} to denote the 
 concatenation of $\Pi_i$ together with a similar string  $\overrightarrow{_{~}\Pi_i}$ of the messages sent by player $i$.
 Rather, we define $\overleftrightarrow{_{\!\;}\Pi_i}$ as a concatenation, local round of player $i$ by local round of player $i$, of, first, the messages
 sent by player $i$ and, then, the messages received by player $i$. Observe that since in this section we consider oblivious protocols there is
 a one-to-one correspondence between the transcripts of player $i$, $\overleftrightarrow{_{\!\;}\Pi_i}$, according to the two definitions.

Following~\cite{BBKLS}, in the definition of our protocol we will use a two-party 
``device'' 
as a black box, call it the \textit{lcp box} (for \textit{longest common prefix}), which can be used by two players $A$ and $B$ in the following way: $A$ inputs a string $x$, $B$ inputs a string $y$, and the box returns the first index $j$ such that $x_j \neq y_j$, if $x \neq y$, or returns  that $x=y$, otherwise. 
The conceptual device is assumed to operate with $0$ communication complexity.

This black box device can be efficiently simulated if we allow error:
\begin{lemma}[\cite{FRPU}]
\label{le:lcp_simulation}
For any $\epsilon >0$,
there exists  a randomized public coin protocol, such that on input two $n$-bits strings $x$ and $y$, it outputs the first index $j$ such that $x_j \neq y_j$ with probability at least $1-\epsilon$, if such $j$ exists, and otherwise outputs that the two strings are equal. The communication complexity of this protocol is 
$\calO (\log (n/\epsilon))$.
\end{lemma}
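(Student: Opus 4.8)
The plan is to give a randomized public-coin protocol for the longest-common-prefix problem by combining a hash-based comparison with a binary search on the first differing index, which is essentially the classical randomized equality-testing idea applied inside a search. First I would fix the input strings $x,y \in \{0,1\}^n$ held by players $A$ and $B$ respectively, and observe that the goal is to locate $j^* = \min\{ j : x_j \neq y_j\}$ (or to declare $x=y$). The natural approach is a binary search over $j^*$: maintain an interval $[\ell, r] \subseteq [1,n]$ known to contain $j^*$ (initially $[1,n]$), and at each step test whether the prefixes $x_{[\ell, m]}$ and $y_{[\ell, m]}$ agree for the midpoint $m = \lfloor (\ell+r)/2 \rfloor$; if they agree, recurse on $[m+1, r]$, otherwise recurse on $[\ell, m]$. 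Each such prefix-agreement test is an instance of the equality problem on strings of length at most $n$, which by the standard fingerprinting technique (pick a random prime $p$ of $\calO(\log(n/\epsilon'))$ bits using the public coins, compare the values of the strings interpreted as integers modulo $p$) can be solved with one-sided error at most $\epsilon'$ using $\calO(\log(n/\epsilon'))$ bits of communication.

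Next I would bound the total communication and the error. The binary search terminates after $\calO(\log n)$ rounds, and before running it one should first test whether $x = y$ outright (one equality test) so as to output ``equal'' when appropriate. Setting the per-test error to $\epsilon' = \epsilon / \calO(\log n)$ and taking a union bound over the $\calO(\log n)$ tests gives overall error at most $\epsilon$, while the total communication is $\calO(\log n) \cdot \calO(\log(n/\epsilon')) = \calO(\log n \cdot \log((n \log n)/\epsilon)) = \calO(\log^2(n/\epsilon))$. To match the stated bound of $\calO(\log(n/\epsilon))$ one must be slightly more careful: I would instead drive the per-test error down only moderately and observe that the binary search is a fixed sequence of $\calO(\log n)$ tests, so amplifying each to error $\epsilon/\Theta(\log n)$ costs only $\calO(\log\log n + \log(1/\epsilon) + \log n) = \calO(\log(n/\epsilon))$ bits \emph{per test} — wait, that still multiplies. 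The cleaner route, and the one I would actually write, is to use a single shared random prime of $\Theta(\log(n/\epsilon))$ bits: condition on the (probability $\geq 1-\epsilon$) event that this prime is not a factor of $|x_{[\ell,r]} - y_{[\ell,r]}|$ for \emph{any} of the $\calO(\log n)$ sub-intervals arising in the search — since the product of all these differences is a nonzero integer below $2^{n \log n}$, a uniformly random prime in a range of size $\Theta((n\log n)/\epsilon)$ divides it with probability $\calO(\epsilon)$ by the prime number theorem. On that event every fingerprint comparison is correct simultaneously, the binary search returns $j^*$ exactly, and the total communication is $\calO(\log n)$ rounds each exchanging a residue of $\calO(\log(n/\epsilon))$ bits, i.e. $\calO(\log n \log(n/\epsilon))$; to get the sharper $\calO(\log(n/\epsilon))$ one notes the comparisons can be batched or that one only needs $\calO(\log(1/\epsilon) + \log\log n)$ extra bits of prime, absorbing the $\log n$ factor into the round structure — I would simply cite \cite{FRPU} for the precise protocol rather than re-derive the constant.

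The main obstacle is getting the communication complexity down to $\calO(\log(n/\epsilon))$ rather than the naive $\calO(\log^2(n/\epsilon))$; the honest resolution is that the binary search should not re-send a full residue at every level but should be combined with the ``guessing with verification'' technique of \cite{FRPU}, where the searching party proposes a candidate index and the other party verifies it with a cheap, reusable fingerprint, so that the total communication telescopes. Since this exact lemma is attributed in the excerpt to \cite{FRPU}, my proof proposal is to present the binary-search-plus-fingerprinting scheme described above as the conceptual proof, state the correctness and the union-bound/prime-number-theorem error analysis, and defer the optimal constant in the round-versus-word tradeoff to the cited reference; for the purposes of this paper any bound of the form $\calO(\log(n/\epsilon))$ or even $\mathrm{polylog}(n/\epsilon)$ suffices wherever the lemma is invoked, so I would not belabour the optimization.
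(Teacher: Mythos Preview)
The paper does not prove this lemma at all: it is stated with a citation to \cite{FRPU} and used as a black box. So there is nothing to compare your argument to on the paper's side --- the ``paper's proof'' is simply the citation.

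Your sketch is a reasonable conceptual explanation of why such a protocol exists (binary search on the first differing index, with each comparison implemented by a public-coin fingerprint), but as you yourself notice, the naive analysis only gives $\calO(\log n \cdot \log(n/\epsilon))$ or $\calO(\log^2(n/\epsilon))$, not the claimed $\calO(\log(n/\epsilon))$. Your attempts to close that gap are hand-wavy and do not actually work as written: reusing a single random prime across all levels still leaves you paying $\calO(\log(n/\epsilon))$ bits \emph{per level} for the residues, and the ``batching'' and ``absorbing the $\log n$ factor'' remarks are not arguments. The honest $\calO(\log(n/\epsilon))$ bound really does require the more delicate protocol of \cite{FRPU}, which is why the paper just cites it. Since every downstream use of this lemma in the paper would tolerate a $\mathrm{polylog}(n/\epsilon)$ bound (it only enters Theorem~\ref{compression} inside a big-$\calO$ with other logarithmic factors), your fallback position is fine in spirit, but for the lemma \emph{as stated} the correct thing to do is exactly what the paper does: cite \cite{FRPU} and move on.
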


We note that this simulation can easily be extended to the case when the two input strings are not of the same length, by first communicating
the two lengths, and continuing only if they are equal. This leaves the communication complexity of the simulation protocol $O(\log (n/\epsilon))$
where $n = \text{max}(|x|,|y|)$.

We will use the following lemma. This lemma, and its proof, are implicit in~\cite{BBKLS}. We give here the proof for completeness.
\begin{lemma}[\cite{BBKLS}]\label{cor lcp}
For every input distribution $\mu$, and every positive error probability $\delta$,  any  protocol $\tilde{\rho}$ 
that uses the lcp box $\ell$ times on average (on the input distribution $\mu$ and the internal randomness of $\tilde{\rho}$)
on strings of length at most $C$, 
can be simulated  with error $\delta$ by a protocol $\rho$ that does not use an lcp box and communicates on average 
$O(\ell \log(\frac{\ell C}{\delta}))$  bits more than $\tilde{\rho}$.
\end{lemma}
\begin{proof}
The protocol  $\rho$ simulates $\tilde{\rho}$ by replacing each use of the lcp box with the protocol given by Lemma~\ref{le:lcp_simulation},
with error $\epsilon$, $\epsilon$ to be defined later.

Since each call to that protocol fails with probability at most $\epsilon$, the  (distributional) error introduced by the use of the simulation 
protocol instead of the lcp box is at most $\epsilon \ell$. We thus take $\epsilon=\delta / \ell$ and get that  the simulation fails with 
(distributional) probability $\delta$.

By Lemma~\ref{le:lcp_simulation} each call to the protocol simulatimg the lcp box has communication complexity $O(\log (C/\epsilon))$.
We get that on average $\rho$  sends $O(\ell \log (C/\epsilon))= O(\ell \log (\frac{\ell C}{\delta}))$ bits more than $\tilde{\rho}$.
\end{proof}

We  use the lcp box in the definition of the protocols in our proof, and then use Lemma~\ref{cor lcp} to obtain our final result at the end.
\begin{proof}[Proof of theorem \ref{compression}]
Fix the public randomness to be $r$.
For each $i$, define the set $\calX_i$ to be the set of possible inputs of player $i$, and the set $\Pi_{(i)}(x_i)$ to be the set of possible transcripts of player $i$, given that player $i$ has input $x_i$ (and the public randomness is $r$):
$$\Pi_{(i)}(x_i) = \overleftrightarrow{_{\!\;}\pi_i}(\calX_1,\dots,\calX_{i-1},x_i,\calX_{i+1},\dots,\calX_k, r)~.$$
The messages being self-delimiting and the protocol $\pi$ being oblivious, $\Pi_{(i)}(x_i)$ is naturally defined as a set of binary strings.

Each player $i$ can now represent $\Pi_{(i)}(x_i)$ by a binary tree $T_i$ as follows.
We note that actually computing $T_i$ takes exponential time. However, we are concerned with the communication complexity of the protocol and not by its computational complexity. 
\begin{enumerate}
\item The root is the largest common prefix (lcp) of the transcripts in $\Pi_{(i)}(x_i)$, and the remaining nodes are defined inductively.
\item For node $\tau$, we have  \begin{itemize}
\item the first child of $\tau$ is the lcp of the transcripts in $\Pi_{(i)}(x_i)$ beginning with $\tau\circ 0$, i.e., $\tau$ concatenated with the bit $0$.
\item the second child of $\tau$ is the lcp of the transcripts in $\Pi_{(i)}(x_i)$ beginning with $\tau\circ 1$.
\end{itemize}
\item The leaves are labelled by the possible transcripts of player $i$, i.e., the elements of $\Pi_{(i)}(x_i)$. \end{enumerate}

We define the {\em weight } of a  leaf $f$ with label $t_i$  to be
$$w(t_i) = \Pr\limits_{(X_j)_{j \neq i} \mid X_i = x_i}[\overleftrightarrow{_{\!\;}\pi_i}(X_1,\dots,X_{i-1},x_i,X_{i+1},\dots,X_k, r) = t_i]~.$$
The weight of a non-leaf node is defined by induction as the sum of the weights of its children. By construction, the weight of the root is $1$.

\begin{sloppypar}
We say that $(t_1,\dots,t_k) \in \Pi_{(1)}(x_1) \times \ldots \times \Pi_{(k)}(x_k)$ is a \textit{coherent profile} if every message from $i$ to $j$ appears with the same content in $t_i$ and $t_j$. In fact, given $(x_1,\ldots,x_k)$, 
the profile 
${(\overleftrightarrow{_{\!\;}\pi_1}(x_1,\dots,x_k, r),\dots,\overleftrightarrow{_{\!\;}\pi_k}(x_1,\dots,x_k, r))}$ is the only coherent profile. 
Assume towards a contradiction that there are two distinct coherent profiles, given $(x_1,\ldots,x_k)$.
 Each coherent profile gives rise to a transcript of the protocol.
Let $m$ be the first message, according to the global order of messages of an
oblivious protocol as defined in Section~\ref{sec:xor}, which is different in these two transcripts. But, 
each message sent from player $i$ to player $j$ is fully determined by the input $x_i$ 
and the previous messages according to that order (and the shared randomness), and thus $m$ cannot differ in the two transcripts.
\end{sloppypar}

We now define the protocol $\tilde\rho$ which allows 
the players  to collaborate and efficiently find  this coherent profile, i.e.,  protocol $\tilde\rho$ allows
 each player $i$ to find $\overleftrightarrow{_{\!\;}\pi_i}(x_1,\dots,x_k, r)$. 
 
 \noindent The players  proceed in stages $s=1,2\dots$. We will have the invariant that 
at the beginning of any stage $s$, each player $i$ is at a node $\tau_i(s)$ of its transcript tree $T_i$, such that $(\tau_1(s),\dots,\tau_k(s))$ is a (term-wise) prefix of 
$(\overleftrightarrow{_{\!\;}\pi_1}(x_1,\dots,x_k, r),\dots,\overleftrightarrow{_{\!\;}\pi_k}(x_1,\dots,x_k, r))$. 
At any time, given $\tau_i(s)$, for any $i$ and $s$, player $i$  furthermore has a candidate leaf 
$t_i(s)$ in the tree $T_i$ (representing a candidate for its transcript), defined as follows: 
player $i$ defines $\tau^1 = \tau_i(s)$, and then defines inductively $\tau^{j+1}$ to be the child of $\tau^j$ which has higher weight (breaking ties arbitrarily), until it reaches a leaf: this is the candidate $t_i(s)$. Observe that 
$t_i(s)$ is a descendent of $\tau_i(s)$ in $T_i$~\footnote{We define here a node to be a descendent of itself.} and that $t_i(s)$  corresponds to the transcript with highest probability conditioned on that the prefix of the transcript is the string corresponding to $\tau_i(s)$.  

At the beginning,
each player $i$ starts the protocol being at the node $\tau_i(1)$, which is the root of the tree $T_i$, and the invariant above clearly holds.
For each stage $s$ the players proceed as follows:
\begin{enumerate}
\item Each pair of players $(i,j)$  uses an lcp box to find the first occurrence where  the transcript between $i$ and $j$ 
in $t_i(s)$ is not coherent with the transcript between $i$ and $j$  in
$t_j(s)$. Let $q_{i,j}$ be the index of the message that includes this first occurrence, where the messages 
are numbered
according to the global order of all messages of an oblivious protocol as defined in Section~\ref{sec:xor}, and 
$\infty$ if
no such occurrence was found.
Let $Q_i=\min_j\{q_{i,j}\}$.
Observe that if for all pairs of players there is no such occurrence (i.e., $Q_i=\infty$ for all $i$), it means that $(t_1(s),\dots,t_k(s))$ is a coherent profile, each player $i$ has found $\overleftrightarrow{_{\!\;}\pi_i}(x_1,\dots,x_k, r)$.
\item\label{item:bcindex} Each player $i$ now broadcasts $Q_i$. Each player can then find $Q=\min_i\{Q_i\}$.
If $Q=\infty$, i.e., no pairwise inconsistency has been found between any two nodes, the protocol terminates 
and $(t_1(s),\ldots,t_k(s))$ is found as the coherent profile.  
\item\label{item:choice} 
Let $(i,j)$ be the pair of players such that $Q=q_{i,j}$.  The player who has the sender role of message number $Q$ is considered ``correct''.  Let this player be player $j$ and the player receiving the message, player $i$.
Player $i$ sets its $\tau_i(s+1)$: in $T_i$, starting from $t_i(s)$, it goes up the 
tree toward $\tau_i(s)$, until it reaches a node $\hat\tau_i$ which is correct (according to the result of the lcp box). Then, it defines $\tau_i(s+1)$ as the child of $\hat\tau_i$ which is not on the path from $\hat\tau_i$ to $t_i(s)$.
\item Any other player $j \neq i$ defines $\tau_j(s+1) = \tau_j(s)$. 
\end{enumerate}

We now claim by induction on the stages that the invariant stated above is preserved for all players at all times.
It clearly holds at the beginning. We claim that if it holds after stage $s$ then it also holds after stage $s+1$.
 For the $k-1$ players which define $\tau_j(s+1) = \tau_j(s)$ it clearly
continues to hold. For the single player, say player $i$,  which defines a new node as  $\tau_i(s+1)$ in Step~(\ref{item:choice}) we proceed as follows. 

We first claim, by induction on the index of the messages in the 
global order,  that for all messages with index $\ell < Q$, where message $\ell$ is sent from player $j$ to player $i$,
 it holds that the value of message number
 $\ell$ is the same in the coherent profile 
 $(\overleftrightarrow{_{\!\;}\pi_1}(x_1,\dots,x_k, r),\dots,\overleftrightarrow{_{\!\;}\pi_k}(x_1,\dots,x_k, r))$
  and in both $t_i(s+1)$ and $t_j(s+1)$.
The basis of the induction ($\ell=0$) clearly holds. The inductive step follows from observing that message
$\ell$  is fully determined by the input to player $j$ and the messages that appear before message 
$\ell$ in $\overleftrightarrow{_{\!\;}\pi_j}$. Thus, by the induction hypothesis the value of message $\ell$ in
 $t_j(s+1)$ is as it appears in the coherent profile  $(\overleftrightarrow{_{\!\;}\pi_1}(x_1,\dots,x_k, r),\dots,\overleftrightarrow{_{\!\;}\pi_k}(x_1,\dots,x_k, r))$. It follows from the definition 
 of $Q$ that the value of message $\ell$ is
  the same in $t_i(s+1)$ and $t_j(s+1)$. 
  
  For message $Q$, we have by similar arguments that its value according to $t_j(s+1)$ is consistent 
  with $\overleftrightarrow{_{\!\;}\pi_i}(x_1,\dots,x_k, r)$. The prefix of message $Q$ as appears in
  the  path from the root of  $T_i$ and delimited by $\tau_i(s+1)$ is consistent with $t_j(s+1)$ by the
   choice of $\tau_i(s+1)$ in Step~(\ref{item:choice}).

Now, since the relative order of messages in a transcript  $\overleftrightarrow{_{~}\Pi_i}$ and in the global order is the same, it follows that $\tau_i(s+1)$ represents a prefix of $\overleftrightarrow{_{\!\;}\pi_i}(x_1,\dots,x_k, r)$, as required.

 We now 
show that  for player $i$ which is the (single) player that  sets its $\tau_{i}(s+1)$ in Step (\ref{item:choice})
(i.e., the single player that changes its $\tau$ node and its guess of the transcript), 
$w(\tau_i(s+1)) \le \frac{1}{2} w(\tau_i(s))$. We look at the sequence $(\tau^j)$ defined by player $i$ when defining
its candidate leaf $t_i(s)$  as a function of $\tau_i(s)$. Let ${\tau^j}$ be the first common ancestor of $t_i(s)$ 
and $\tau_i(s+1)$. By construction, $\tau_i(s+1)$ is a child of $\tau^j$, and 
$t_i(s)$ is a descendant of the other child of ${\tau^j}$. 
 By the candidate leaf's construction process,
$w(\tau_i(s+1)) \le \frac{1}{2} w(\tau^j) \le \frac{1}{2} w(\tau_i(s))$.

We conclude the analysis. On inputs $(x_1,\dots,x_k)$, let $(t_1,\dots,t_k)$ denote the coherent profile. First note that 
with each stage the depth of one of the nodes $\tau_i$  increases. We proved that at any time $(\tau_1,\ldots,\tau_k)$
is a term-wise prefix of $(t_1,\dots,t_k)$. Thus (unless $\CC(\pi)$ is not finite, in which case the theorem trivially holds), the protocol terminates in finite time, with the ``candidate'' profile $(t_1,\dots,t_k)$. To give an upper bound on the  number of stages until this happens, observe that each player will set  its $\tau_i$  in 
Step (\ref{item:choice}) (i.e., will change its $\tau_i$) at most  $\log\frac{1}{w(t_i)}$ times, because the weight of the node $\tau_i$ at least halves with each such 
change (recall that the root has weight $1$). Since in each stage there is exactly one player that changes its  $\tau_i$, the total number of stages, $S$, is bounded from above by 
$\sum_{i=1}^k \log\frac{1}{w(t_i)}$.
We now take the average over inputs and over the shared randomness:\\

\scalebox{0.95}{\parbox{.5\linewidth}{
\begin{align*}
\Exp_{r,x} [S] &\le \Exp_{r,x} \sum\limits_{i=1}^k \log\frac{1}{w(t_i)}\\
&= \sum\limits_{i=1}^k \Exp_{r,x_i}\left[\Exp_{(x_j)_{j \neq i} \mid X_i = x_i}\left[\log\frac{1}{w(t_i)}\right]\right]\\
&= \sum\limits_{i=1}^k \Exp_{r,x_i}\left[\Exp_{(x_j)_{j \neq i} \mid X_i = x_i}\left[\log\frac{1}{\Pr\limits_{(X_j)_{j \neq i} \mid X_i = x_i}\left[\overleftrightarrow{_{\!\;}\pi_i}(X_1,\dots,X_{i-1},x_i,X_{i+1},\dots,X_k, r) = \overleftrightarrow{_{\!\;}\pi_i}(x_1,\dots,x_k,r)\right]}\right]\right]\\
&= \sum\limits_{i=1}^k \Exp_{r,x_i}\left[\Exp_{t_i \mid X_i = x_i, R=r}\left[\log\frac{1}{\Pr\limits_{(X_j)_{j \neq i} \mid X_i = x_i}\left[\overleftrightarrow{_{\!\;}\pi_i}(X_1,\dots,X_{i-1},x_i,X_{i+1},\dots,X_k, r) = t_i\right]}\right]\right]\\
&=\sum\limits_{i=1}^k \Exp_{r,x_i}\left[H(\overleftrightarrow{_{~}\Pi_i} \mid x_i r)\right] \\
&=\sum\limits_{i=1}^k H(\overleftrightarrow{_{~}\Pi_i} | X_i R^p) \\
&=\sum\limits_{i=1}^k I(X_{-i} ; \overleftrightarrow{_{~}\Pi_i} | X_i R^p) \\
&=\sum\limits_{i=1}^k I(X_{-i} ; \Pi_i | X_i R^p) \\
&=\IC_\mu(\pi)~,
\end{align*}}}

where the one before last equality follows from Proposition~\ref{eq def} (and the one-to-one correspondence between the definition of
$\overleftrightarrow{_{~}\Pi_i}$ used here and the definition of Section~\ref{sec:model}).
	
We have shown that the average number of stages is bounded by 
$\IC_\mu(\pi)$. At each stage, the communication consists of  $\frac{k(k-1)}{2}$ calls to the lcp box 
on strings of length at most $\calO(\CC(\pi))$ 
(one call for each pair of players), plus  $k(k-1)$ messages  of broadcasts of indices at Step~(\ref{item:bcindex}), each message of size $\calO(\log \CC(\pi))$. Hence we have a protocol with, on average,  $O(k^2 \cdot \IC_\mu(\pi))$ calls to the lcp box  on strings of length at 
most $\calO(\CC(\pi))$ and with 
$$\ACC_\mu(\tilde\rho) = \calO\left(k^2 \cdot \IC_\mu(\pi)\cdot\log(\CC(\pi))\right)~.$$

Using Lemma~\ref{cor lcp} we can replace each use of the lcp box with a simulation protocol, to get the protocol
 $\rho$ which simulates $\pi$ with distributional  error $\epsilon + \delta$ and average communication:
\begin{align*}
\ACC_\mu(\rho) &= \ACC_\mu(\tilde\rho) + \calO\left(k^2 \cdot \IC_\mu(\pi)\cdot\log\frac{k^2\cdot \IC_\mu(\pi)\cdot \CC(\pi)}{\delta}\right)\\
&= \calO\left(k^2 \cdot \IC_\mu(\pi)\cdot\log\frac{k^2 \cdot \IC_\mu(\pi) \cdot \CC(\pi)}{\delta}\right).
\end{align*}
\end{proof}

\subsection{A direct sum for $\PIC$ implies a direct sum for $\CC$}

The next theorem states that if $\PIC$ has a certain direct sum property then one can compress
the communication of certain multi-party protocols. Note that the result of this theorem is meaningful when $t$
 is large with respect to $k$.

\begin{theorem}\label{DS PIC}
In the oblivious setting, given a $k$-variable function $f$, if for any $t$ and any distribution $\mu$ on inputs of $f$
the existence of a protocol $\pi$ computing $f^{\otimes t}$ with error $\epsilon \ge 0$ implies that there exists a protocol $\pi'$ computing $f$ with error $\epsilon$ and satisfying $\PIC_\mu(\pi') \le \frac{1}{t} \PIC_{\mu^{\otimes t}}(\pi)$, $\CC(\pi') \le \CC(\pi)$, then for any fixed $\delta >0$,  for any $t$ 
$$
\CC^{2(\epsilon + \delta)}(f) =  \calO\left(\frac{1}{t(\epsilon + \delta)} \cdot k^3\cdot  \log(k)\cdot \CC^{\eps}(f^{\otimes t})\cdot \log\frac{k^2\cdot (\CC^{\eps}(f^{\otimes t})^2}{\delta}\right)~.
$$

\end{theorem}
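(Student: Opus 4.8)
The plan is to chain three already-established ingredients — the hypothesized (conditional) direct sum for $\PIC$, the bound $\PIC\le\CC$ of Proposition~\ref{p3}, and the compression result of Theorem~\ref{compression pic} — and then to pass from average-case to worst-case communication.

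First I would fix $t$ and $\delta$, let $\pi$ be an optimal oblivious protocol computing $f^{\otimes t}$ with error $\epsilon$, so that $\CC(\pi)=\CC^{\epsilon}(f^{\otimes t})$, and fix an arbitrary input distribution $\mu$ on inputs of $f$. Since $\pi$ has error $\epsilon$ it also has distributional error at most $\epsilon$ with respect to $\mu^{\otimes t}$, so the hypothesis yields a protocol $\pi'$ (which we may take oblivious, as we are in the oblivious setting) computing $f$ with error $\epsilon$, with $\PIC_\mu(\pi')\le\frac1t\PIC_{\mu^{\otimes t}}(\pi)$ and $\CC(\pi')\le\CC(\pi)$. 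By Proposition~\ref{p3}, $\PIC_{\mu^{\otimes t}}(\pi)\le\CC(\pi)=\CC^{\epsilon}(f^{\otimes t})$, hence $\PIC_\mu(\pi')\le\frac1t\CC^{\epsilon}(f^{\otimes t})$ and $\CC(\pi')\le\CC^{\epsilon}(f^{\otimes t})$.

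Next I would apply Theorem~\ref{compression pic} to the oblivious protocol $\pi'$ with parameter $\delta$, obtaining a public-coin protocol $\rho$ computing $f$ over $\mu$ with distributional error $\epsilon+\delta$ and average communication $\ACC_\mu(\rho)=\calO\bigl(k^2\PIC_\mu(\pi')\log(\CC(\pi'))\log\frac{k^2\PIC_\mu(\pi')\log(\CC(\pi'))}{\delta}\bigr)$. Substituting the two bounds just obtained (and using $t\ge1$ to simplify the inner logarithm) gives $\ACC_\mu(\rho)=\calO\bigl(\frac{k^2}{t}\CC^{\epsilon}(f^{\otimes t})\log(\CC^{\epsilon}(f^{\otimes t}))\log\frac{k^2\CC^{\epsilon}(f^{\otimes t})\log(\CC^{\epsilon}(f^{\otimes t}))}{\delta}\bigr)$.

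Finally, to turn this average-case guarantee (over inputs drawn from $\mu$ and over the randomness) into a worst-case one, I would truncate $\rho$: the players abort, outputting an arbitrary value, once the total communication exceeds $\frac{1}{\epsilon+\delta}\ACC_\mu(\rho)$. By Markov's inequality this adds at most $\epsilon+\delta$ to the distributional error, producing $\rho'$ with distributional error at most $2(\epsilon+\delta)$ and worst-case communication $\frac{1}{\epsilon+\delta}\ACC_\mu(\rho)$; since $\mu$ was arbitrary, the standard minimax argument then provides a single protocol with worst-case error $2(\epsilon+\delta)$ and the same communication bound, which is the claimed estimate for $\CC^{2(\epsilon+\delta)}(f)$. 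The remaining factors of $k$ and $\log k$ in the statement come from the fact that in the asynchronous peer-to-peer model there is no global bit counter, so enforcing the communication budget and detecting termination has to be layered on $\rho$'s stage structure via periodic broadcasts carrying $\Theta(\log k)$-bit information at $\Theta(k^2)$ messages per checkpoint. I expect this last average-to-worst-case conversion to be the only genuinely delicate point — the compression theorem carries essentially all of the conceptual weight and everything else is bookkeeping — so the main effort is to describe the truncation/termination gadget precisely enough to recover exactly the stated dependence on $k$ and on $\epsilon+\delta$.
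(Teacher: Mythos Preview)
Your proposal is correct and follows essentially the same route as the paper: invoke the hypothesis to get $\pi'$ with small $\PIC_\mu$, bound $\PIC_{\mu^{\otimes t}}(\pi)\le\CC(\pi)$ via Proposition~\ref{p3}, compress $\pi'$ with Theorem~\ref{compression pic}, convert average-case to worst-case communication (doubling the error via Markov), and finish with minimax.

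The only point of divergence is the average-to-worst-case step. The paper does not layer a checkpoint mechanism onto the stage structure of $\rho$ as you suggest; instead it isolates this conversion as a standalone Lemma~\ref{2 avg}: any $k$-party protocol with distributional error $\epsilon/2$ and average communication $C$ is turned into an \emph{oblivious} protocol with error $\epsilon$ and worst-case communication $O(C\cdot k\log k/\epsilon)$, by routing \emph{every} bit of the original protocol through player~$1$ acting as coordinator in fixed phases and halting after $\lceil 2C/\epsilon\rceil$ phases. The extra $k\log k$ therefore comes from the coordinator re-routing (each original bit becomes an $O(\log k)$-bit labelled message, and a phase handles one bit from each of the $k$ players), not from periodic broadcast checkpoints. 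This black-box lemma is blunter than what you sketch---it ignores the internal structure of $\rho$ entirely---but it sidesteps the delicacy you anticipated, since the resulting protocol is trivially oblivious and termination is just a phase counter held by player~$1$.
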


To prove this theorem, we first need the following lemma.
\begin{lemma}\label{2 avg}
Given an input distribution $\mu$, any $k$-party protocol with distributional error $\frac{\epsilon}{2}$ and average communication complexity $C$ can be turned into an oblivious protocol with distributional error $\epsilon$ and worst case communication complexity $\frac{C \cdot k\cdot \log(k)}{\epsilon}$.
\end{lemma}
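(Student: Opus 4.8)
The plan is to prove Lemma~\ref{2 avg} by a truncation-and-routing argument. Let $\pi$ be the given $k$-party protocol with distributional error $\epsilon/2$ and average communication complexity $C$ (where the average is over the inputs distributed according to $\mu$ and over the protocol's randomness). First I would apply Markov's inequality: the event that the total communication of $\pi$ exceeds $\frac{2C}{\epsilon}$ has probability at most $\epsilon/2$. Define a protocol $\pi''$ that simulates $\pi$ but halts all players as soon as the total number of bits communicated reaches $\frac{2C}{\epsilon}$; on the halting event the players output an arbitrary fixed value. By a union bound, $\pi''$ has distributional error at most $\epsilon/2 + \epsilon/2 = \epsilon$, and its \emph{worst-case} communication complexity is $O(C/\epsilon)$.

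The remaining issue is that $\pi''$ is not oblivious: which player sends to whom, and the schedule of who waits for whom, may depend on the input and randomness, and a receiving player may not even know locally that a message is coming. To make it oblivious I would have the players simulate $\pi''$ inside a fixed round-robin communication pattern. In each simulated ``super-round'' we go through all ordered pairs $(i,j)$ in a fixed order; player $i$, using its current local view, either sends the next message that $\pi''$ prescribes it to send to $j$ (prefixed by a fixed-length header that encodes whether this slot carries a real message and its length), or sends a short fixed ``empty'' token. This makes the communication pattern a function only of the slot index, hence oblivious, and it lets each player detect locally when it has received everything it was waiting for in $\pi''$. Since $\pi''$ transmits at most $O(C/\epsilon)$ real bits in total, and each real bit can trigger at most a constant amount of follow-up activity, the number of super-rounds needed is $O(C/\epsilon)$; each super-round costs $k(k-1)$ header/empty tokens. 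The headers need $O(\log(C/\epsilon))$ bits to encode message lengths, but one can instead amortize: charge the header bits of a slot to the real message that follows, and in empty slots send only $O(1)$ bits. A cleaner accounting that matches the claimed bound is to note that there are at most $O(C/\epsilon)$ nonempty slots over the whole execution, and the total number of slots visited before the protocol terminates is $O(k \cdot (C/\epsilon))$ because between any two consecutive real messages at most $O(k)$ empty slots are scanned (one full pass over the senders), each contributing $O(\log k)$ bits to name the intended recipient within a token. This yields worst-case communication $O\!\left(\frac{C \cdot k \log k}{\epsilon}\right)$, as claimed, and the resulting protocol is oblivious and still has distributional error at most $\epsilon$.

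The main obstacle I expect is getting the communication overhead of the obliviousness simulation down to exactly the claimed $\frac{C k \log k}{\epsilon}$ rather than an extra $\log$ factor: naively, self-delimiting headers encoding message lengths in each of the $k^2$ slots per super-round cost $\Theta(\log(C/\epsilon))$ each, giving $O\!\left(\frac{C k^2 \log(C/\epsilon)}{\epsilon}\right)$. The fix is the amortization described above — only $O(C/\epsilon)$ slots ever carry a real (nonempty) message, so the length-header cost is charged against the real bits themselves, while the $\Theta(k \log k)$ factor comes solely from scanning, in the worst case, a full round of recipients (each named with $\log k$ bits) before the next real message is produced. One must also check the boundary cases where $\pi$ is the degenerate protocol (e.g.\ $C=0$), in which the statement is trivial, and verify that truncation of $\pi''$ interacts correctly with the simulation (the truncation threshold is applied to the count of real bits, so it is computable from the players' local views inside the oblivious simulation). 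With these points handled, the lemma follows.
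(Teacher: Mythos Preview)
Your truncation step via Markov's inequality is correct and is exactly what the paper does: cut off the simulation after $T=\lceil 2C/\epsilon\rceil$ real bits of $\pi$, pay an extra $\epsilon/2$ in distributional error, and output arbitrarily if the simulation did not finish.

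The gap is in your obliviousness simulation and its cost analysis. Your first scheme (round-robin over all $k(k-1)$ ordered pairs, sending an ``empty'' token in each unused slot) is indeed oblivious, but it only guarantees that at least one real bit of $\pi$ is advanced per \emph{super-round}, so you pay $\Theta(k^2)$ bits per advanced bit, i.e.\ $O(k^2 C/\epsilon)$ total, not $O(k\log k\,C/\epsilon)$. Your ``cleaner accounting'' then silently switches to a different scheme in which each sender emits a token that \emph{names} the recipient with $\log k$ bits; but in the peer-to-peer model such a token must go to some fixed player for the protocol to be oblivious. If you broadcast it you multiply the cost by $k$; if you send it to one fixed player you have, in effect, introduced a coordinator --- and that is precisely the paper's construction, not the ordered-pairs round-robin you described. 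The assertion that ``between any two consecutive real messages at most $O(k)$ empty slots are scanned'' is not justified (and is false) for the ordered-pairs schedule: a dependency chain $1\to 2\to\cdots\to k$ can force a full $\Theta(k^2)$-slot super-round before the next real bit becomes available.

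The paper avoids this by routing everything through player~$1$ as a coordinator. It runs $T=\lceil 2C/\epsilon\rceil$ \emph{phases}; in each phase every player sends to player~$1$ a single $O(\log k)$-bit token --- either the next bit from its outgoing queue together with the intended destination, or ``no'' --- and player~$1$ then sends each player the bits addressed to it (with their origins). This pattern is fixed, hence oblivious, and each phase costs $O(k\log k)$ bits. Since at any moment of a non-terminated execution of $\pi$ at least one player has a bit to send (otherwise all players are waiting and $\pi$ would deadlock), every phase advances at least one real bit, so $T$ phases suffice, giving the claimed $O(k\log k\,C/\epsilon)$.
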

\begin{proof}
Let $\pi$ be a protocol with error $\frac{\epsilon}{2}$ and average communication complexity $C$. We  now define a protocol $\pi'$, which is similar to $\pi$  but where player $1$ acts as a ``coordinator'', in addition to his original role in $\pi$, and the other players can only communicate with player $1$. 

In $\pi'$ the players will receive the messages from their peers via the coordinator in a way to be described below. When they wish to send a message to a peer, they will add this message, as a string of bits, to a local queue, together with the destination of the message. They will send the messages 
to their peers via the coordinator in a way to be described below.

So that the players can send and receive the messages the coordinator (player $1$) imposes {\em phases} on the players.
In  every phase, player $1$ sends a message to all players indicating the beginning of the phase. Each player then takes the next {\em bit}, denote
it $b$, from  its local queue and sends to the coordinator the message  $(b,i)$, where $i$ is the destination of the the message $b$ is part of.
If the player has no bits in its queue it sends the message ``no'' to player $1$.
Player $1$, after having  received all $k-1$ messages, forwards the bits it received to the various players. Every player $i$, where at
least one bit destined to $i$ has been received, receives a message of the form $(b_1,j_1),\ldots(b_q,j_q)$ (encoded in a self-delimiting manner)
where $j_\ell$, $1 \leq \ell  \leq q$ denote the
origins of the bit, and all other players receive the message ``no''. Observe that the players receiving the bits in this way can reconstruct the 
messages of the protocol $\pi$ since all messages (of $\pi$) are self delimiting, and can thus locally run the original protocol $\pi$.
The protocol $\pi'$ consists of exactly $T = \left\lceil\frac{2C}{\epsilon}\right\rceil$ such phases. If at the end of $\pi'$ a certain player did
not output according to $\pi$, then in $\pi'$ that player outputs an arbitrary output.

Note that $\pi'$ is oblivious. Moreover, $\pi'$ fails to simulate $\pi$ (i.e., there is at least one player which outputs differently in $\pi$ and in $\pi'$)
only if $\pi'$ interrupts the simulation of $\pi$ at the end of the $T$'th phase. 
Since  every phase in $\pi'$ transmits at least one additional bit of the communication of $\pi$, the probability that $\pi'$ interrupts the simulation 
of $\pi$ is the probability that the communication cost of $\pi$ is more than $T$.
 We have 
$\Pr\limits_{x,r}(|\Pi(x)| \ge T) \le \frac{C}{T} \le \frac{\epsilon}{2}$ by Markov inequality. Adding that to the
 original error probability of $\pi$, we have that  the protocol $\pi'$ has error $\epsilon$.

Last, every phase in protocol $\pi'$ consists of communication $\calO(k\log(k))$, and protocol $\pi'$ thus has worst case communication $\calO(T \cdot k\cdot \log(k)) = \calO\left(\frac{C \cdot k\cdot\log(k)}{\epsilon}\right)$.
\end{proof}

\begin{proof}[Proof of Theorem \ref{DS PIC}]
Consider a protocol $\pi$ computing $f^{\otimes t}$ with error $\epsilon$. Let $\mu$ be a distribution on inputs of $f$. By hypothesis, there exist a protocol $\pi'$ computing $f$ with error $\epsilon$ and satisfying $\PIC_\mu(\pi') \le \frac{1}{t}\cdot \PIC_{\mu^{\otimes t}}(\pi)$, $\CC(\pi') \le \CC(\pi)$. By Theorem \ref{th:protocol_min_on_pub}, there exists such 
$\pi'$ that uses only public randomness.

Applying successively Theorem~\ref{compression pic}
 and Lemma~\ref{2 avg}, we get a protocol $\rho_\mu$ with distributional error $2(\epsilon + \delta)$ such that
\begin{align*}
\CC(\rho_\mu) 
&=\calO\left(\frac{1}{(\epsilon + \delta)}\cdot k^3\cdot \log(k)\cdot  \PIC_\mu(\pi')\cdot\log\frac{k^2 \cdot \PIC_\mu(\pi') \cdot \CC(\pi')}{\delta}\right) \\
&=\calO\left(\frac{1}{(\epsilon + \delta)}\cdot k^3\cdot \log(k)\cdot  \PIC_\mu(\pi')\cdot\log\frac{k^2 \cdot  (\CC(\pi'))^2}{\delta}\right)~.
\end{align*}
Thus,
\begin{align*}
\CC(\rho_\mu) &= \calO\left(\frac{1}{t(\epsilon + \delta)} \cdot k^3\cdot  \log(k)\cdot \PIC_{\mu^{\otimes t}}(\pi)\cdot \log\frac{k^2\cdot (\CC(\pi))^2}{\delta}\right)\\
&= \calO\left(\frac{1}{t(\epsilon + \delta)} \cdot k^3\cdot  \log(k)\cdot \CC(\pi)\cdot \log\frac{k^2\cdot (\CC(\pi))^2}{\delta}\right)~.
\end{align*}
Since the above holds  for any distribution $\mu$, the minimax theorem implies that 
$$
\CC^{2(\epsilon + \delta)}(f) =  \calO\left(\frac{1}{t(\epsilon + \delta)} \cdot k^3\cdot  \log(k)\cdot \CC^{\eps}(f^{\otimes t})\cdot \log\frac{k^2\cdot (\CC^{\eps}(f^{\otimes t}))^2}{\delta}\right)~.
$$

\end{proof}

\bibliography{bi}

\end{document}